\documentclass[10pt]{article}

\usepackage[T1]{fontenc}
\usepackage[letterpaper,textwidth=6.8in,textheight=9in,centering]{geometry}
\usepackage[parfill]{parskip}
\usepackage{amsmath,amssymb,amsthm,mathtools}
\usepackage[tt=false,type1=true]{libertine}
\usepackage[varqu]{zi4}
\usepackage[libertine]{newtxmath}
\usepackage{microtype,booktabs,tabularx,enumitem,needspace}
\usepackage{graphicx,tikz}
\usetikzlibrary{arrows.meta,positioning}
\usepackage{algorithm,algpseudocode}
\usepackage{xcolor}
\definecolor{PaperSlate}{HTML}{586770}
\definecolor{PaperLinkBlue}{HTML}{2A7FB8}
\definecolor{PaperBlue}{HTML}{1F4E79}
\usepackage{titlesec}
\titleformat{\section}{\normalfont\Large\bfseries}{\thesection}{1em}{}
\titleformat{\subsection}{\normalfont\large\bfseries\color{PaperBlue}}{\thesubsection}{1em}{}
\titleformat{\subsubsection}{\normalfont\normalsize\bfseries\color{PaperBlue}}{\thesubsubsection}{1em}{}
\titleformat{\paragraph}[runin]{\normalfont\normalsize\bfseries\color{PaperBlue}}{\theparagraph}{1em}{}
\titlespacing*{\paragraph}{0pt}{1.7mm}{0.6em}
\usepackage[colorlinks=true,linkcolor=PaperLinkBlue,citecolor=PaperLinkBlue,urlcolor=PaperLinkBlue]{hyperref}
\newtheorem{theorem}{Theorem}[section]
\newtheorem{lemma}[theorem]{Lemma}
\newtheorem{proposition}[theorem]{Proposition}
\newtheorem{corollary}[theorem]{Corollary}
\newtheorem{assumption}[theorem]{Computational model}
\theoremstyle{definition}
\theoremstyle{remark}
\DeclareMathOperator{\Tr}{Tr}
\DeclareMathOperator{\ran}{ran}
\DeclareMathOperator{\Fid}{Fid}
\DeclareMathOperator{\diag}{diag}
\DeclareMathOperator{\rank}{rank}
\newcommand{\dd}{\mathrm d}
\newcommand{\R}{\mathbb R}
\newcommand{\C}{\mathbb C}

\newcommand{\norm}[1]{\lVert#1\rVert}
\newcommand{\ip}[2]{\langle#1,#2\rangle}
\numberwithin{equation}{section}
\allowdisplaybreaks[2]

\title{A Walk From Free Probability to Matrix\\
Discrepancy III: Higher Rank Kadison--Singer\\
and Spectrally Thin Trees}
\author{Tarun Kathuria\\[0.6ex]
\includegraphics[width=1.4in]{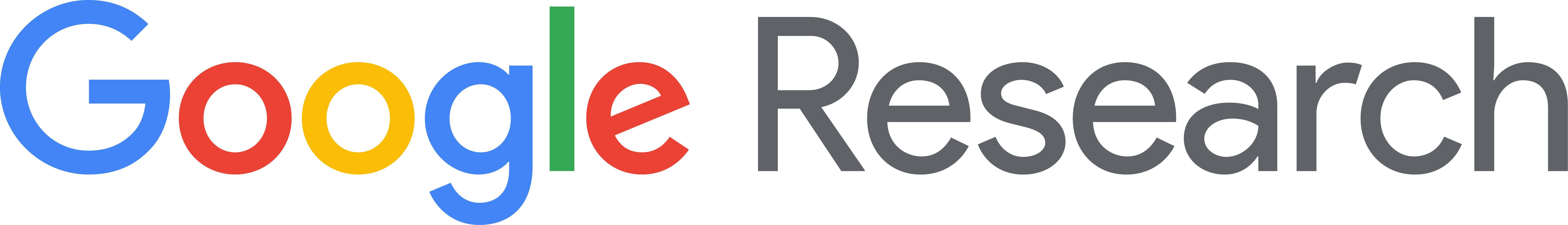}\\[0.6ex]
{\normalsize\href{mailto:tarunkathuria@google.com}{\texttt{tarunkathuria@google.com}}}}
\date{\today}
\hypersetup{pdftitle={A Walk From Free Probability to Matrix Discrepancy III: Higher Rank Kadison-Singer and Spectrally Thin Trees},pdfauthor={Tarun Kathuria}}
\begin{document}
\maketitle
\begin{abstract}
Let $A_1,\ldots,A_N$ be positive semidefinite matrices of rank at most
$r$, with $\sum_i A_i=I$ and $\norm{A_i}\le\varepsilon$. We prove
that one sign can be assigned to each original matrix with discrepancy
$O(\sqrt{\varepsilon\log(2r)})$, independently of their dimension and
number. We give a separate existence proof and a deterministic
algorithm with polynomial work in a real-arithmetic model with
semidefinite-value and exact spectral primitives. Separate Lean
formalizations verify the existence proof and the algorithm in this
arithmetic model.

The proof extends the variational approach to discrepancy developed
in the companion papers, motivated by operator-valued free
interpolation, Lehner's formula, and spectral Tsallis regularization.
A concave matrix power interpolates between a trace source and a
sandwich source. Its concavity controls the response of the
optimizing density. The walk maintains independent source reserves
and a second matrix recording reserve expenditure. This matrix
certifies contraction of the unfinished input mass between epochs;
within an epoch, preparation and a negative-curvature step control
the spectral potential while the coefficients advance toward signs.

As applications, one spanning tree can be chosen simultaneously
$O(\varepsilon\log(2s))$-spectrally thin for $s$ positive weightings
of a common graph whose edge leverages are at most $\varepsilon$.
The diagonal specialization gives discrepancy
$O(\sqrt{L\log(2k)})$ for matrices with row sums at most $L$ and at
most $k$ nonzero entries per column, through a walk of fractional
colorings.
\end{abstract}

\section{Introduction}
The rank-one discrepancy theorem underlying the solution of the
Kadison--Singer problem assigns signs to small positive matrices whose
sum is identity \cite{weaver2004,mss2015}. For higher-rank inputs, the
sign must be shared by every direction within an original matrix.
Splitting the matrices into rank-one summands and signing those
summands independently loses this constraint. We prove a bound whose
dependence on the rank is the square root of a logarithm.

\begin{theorem}[Higher-rank signing]\label{hr:thm:main}
Let $A_1,\ldots,A_N\in\C^{D\times D}$ be Hermitian positive
semidefinite matrices with
\[
 \sum_i A_i=I_D,\qquad \norm{A_i}\le\varepsilon,
 \qquad \rank A_i\le r,
\]
where $N,D,r\ge1$ and $\varepsilon>0$. There are signs
$\chi_i\in\{-1,1\}$ such that
\[
 \left\|\sum_i\chi_i A_i\right\|
 \le\min\{1,10^4\sqrt{\varepsilon\log(2r)}\}.
\]
In Model~\ref{hr:model:computation}, these signs can be found
deterministically with polynomial work in $N,D$. The existence
assertion holds without any computational assumption.
\end{theorem}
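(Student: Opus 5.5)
The plan is a deterministic discrepancy walk on fractional colorings $x\in[-1,1]^N$, started at $x=0$ and driven by a spectral potential that controls both $\lambda_{\max}$ and $\lambda_{\min}$ of $\sum_i x_iA_i$. Two reductions come first. If $10^4\sqrt{\varepsilon\log(2r)}\ge1$, any signing gives norm at most $\norm{\sum_i A_i}=1$, so we may assume $\varepsilon\log(2r)$ is smaller than an absolute constant. We may also assume every $A_i$ is nonzero, and we record that $\sum_i\Tr A_i=D$ together with $\Tr A_i\le r\varepsilon$.

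The potential is the one suggested by the companion papers: a spectral Tsallis (matrix power) regularizer, which here interpolates between a trace source and a sandwich source. Concretely, for a barrier level $u$ and a parameter $p\in(0,1)$ of order $1/\log(2r)$, I would set
\[
\Phi(x)=\Tr\bigl(uI-\textstyle\sum_i x_iA_i\bigr)^{-q},
\]
together with the symmetric lower-barrier term, or its variational (dual) form as a supremum over densities. The companion papers give the first-order and second-order expansions of such potentials. Concavity of the matrix power $t\mapsto t^p$ controls how the optimizing density responds to a move, and this is exactly what makes the second-order term sublinear in the rank. The target is a potential whose second-order cost per unit of variance $\sum_i y_i^2\norm{A_i}$ is $O(\varepsilon\log(2r))$, not $O(\varepsilon r)$. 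Moving the barrier by $\delta$ per unit of progress then gives total drift of order $\sqrt{\varepsilon\log(2r)}$ after balancing, which is the source of the square root.

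The walk proceeds in epochs. Within an epoch, the coefficients of unfrozen matrices are moved along a direction $y$ satisfying three conditions:
\begin{enumerate}[label=(\roman*)]
\item $y$ is orthogonal to the first-order gradient of $\Phi$, and to $O(1)$ additional linear constraints such as the reserve constraints;
\item $y$ is supported on coordinates with $|x_i|<1$;
\item $y$ has nonpositive curvature for the preparation-adjusted potential.
\end{enumerate}
Such a direction exists by dimension counting once the number of active coordinates exceeds the number of constraints, and it is computable with an eigenvector or semidefinite-value primitive (Model~\ref{hr:model:computation}). Moving along $y$ until some coordinate hits $\pm1$ then does not increase $\Phi$. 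The source reserves track how much of the trace budget and of the sandwich budget has been spent. The second matrix accumulates $\sum_i(\text{reserve spent on }i)A_i$, and it certifies that the unfinished mass $\sum_{|x_i|<1}A_i$ contracts geometrically, in norm, between epochs. The barrier increments then form a geometric series summing to $O(\sqrt{\varepsilon\log(2r)})$.

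Termination comes from two facts: each step freezes at least one coordinate, and once few coordinates remain the unfinished mass is small. The final rounding of the remaining fractional coordinates costs at most the norm of the leftover mass. Existence follows from running the same walk with exact real arithmetic, and polynomial work follows because there are at most $N$ freezing steps, each a polynomial spectral computation.

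The main obstacle will be the negative-curvature step. We need a direction whose second-order term in $\Phi$ is nonpositive, rather than merely bounded by $\varepsilon$ times the rank, while the walk is still confined to the affine space cut out by the reserve constraints. I would first try the averaging argument: the trace of the Hessian over the admissible subspace, after preparation (rescaling by the optimal density), is at most $O(\varepsilon\log(2r))$ times the dimension. Subtracting a matching barrier shift would then force some eigendirection to be nonpositive, with the $\log(2r)$ loss coming from the concavity estimate for $t^p$ applied on the rank-$r$ range of each $A_i$.
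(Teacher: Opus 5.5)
\textbf{The missing mechanism.} You flag the negative-curvature step as the main obstacle, and the potential you propose cannot supply it.

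\begin{itemize}
\item The barrier $\Tr(uI-\sum_ix_iA_i)^{-q}$ has no reserve-dependent term. Spending reserve therefore cannot lower it, and nothing pays for curvature.
\item The parameter $p\sim1/\log(2r)$ never appears in it, so no per-label rank estimate can enter.
\item Its initial value $Du^{-q}$ ties the accounting to the dimension. For trace barriers the familiar outcome is the matrix-Chernoff scale $\sqrt{\varepsilon\log D}$. At $r=1$ your claim would give deterministic dimension-free Kadison--Singer from a plain barrier walk, which no known analysis achieves.
\item Condition (iii) is quadratic, so it does not follow from counting linear constraints. The asserted $O(\varepsilon\log(2r))$ bound on a Hessian trace is given no mechanism.
\end{itemize}

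In the paper the potential is $\max_S\{\Tr(\mathcal HS)+2\Fid(S,\Omega_c(S))+2\theta\Tr\sqrt S\}$, with the density-dependent source $\Omega_i(S)=I_4\otimes A_i^{1/2}(\Tr M_i)^\beta M_i^{1-\beta}A_i^{1/2}$ and $M_i=A_i^{1/2}TA_i^{1/2}$. Three facts drive the argument.

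\begin{itemize}
\item Its trace bound $r^\beta p_i$ makes the initial value $4\sqrt{aR\varepsilon r^\beta b}$, up to a vanishing root term.
\item The reserve update $c_i\mapsto c_i-at^2h_i^2$ lowers the value by $at^2\sum_i\tau_ih_i^2$ to second order.
\item The response of the optimizing density is bounded by the source's own concavity, in the inverse-Sylvester metric with constant $(1-\beta)/(2\beta)$ (Proposition~\ref{hr:prop:matrixsource}). This is where $\log(2r)$ comes from.
\end{itemize}

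Even then, the leftover scalar channel $\Gamma=\mathcal F\mathcal E^*$ is not self-adjoint. The averaging needs the compatible-force projection of Lemma~\ref{hr:lem:projection}, together with the cap $c_iq_i^2\le1/48$ that failed preparation tests certify. Your outline has no counterpart to either.

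\textbf{The contraction certificate.} A second matrix recording only $\sum_i(\text{reserve spent on }i)A_i$ cannot certify contraction. At the end of an epoch it dominates $RB_{\rm next}$, yet nothing keeps its norm small: every walk step adds $t^2\sum_ih_i^2A_i\succeq0$ for either sign, and nothing offsets it. The paper uses $K=\sum_i((x_i^0)^2-x_i^2+s_i)A_i$. Its $-x_i^2$ term cancels the spending exactly, so $K^\pm=K\mp2tA(xh)$ is affine along a step. Then $RB_{\rm next}\preceq\sum_i(1-x_i^2+s_i)A_i\preceq B+K$ gives $b_{\rm next}\le(b+\delta\sqrt b)/4$.

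\textbf{The step rule and stopping rule.} Nonpositive curvature at the current point does not let you move ``until some coordinate hits $\pm1$'' without raising the potential. The paper instead:
\begin{itemize}
\item takes short steps $\pm hg$ with $g\perp x$ and keeps the better sign;
\item bounds the remainder by third derivatives;
\item measures progress by $\norm{x\pm hg}_2^2=\norm x_2^2+h^2$.
\end{itemize}
Epochs are repeated until the unfinished mass has norm at most $\delta^2$, not until few coordinates remain.
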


All logarithms without a subscript are natural. The constant in the
theorem covers both real and complex inputs; the proof for real
inputs gives $6000$ in the algorithmic assertion. The separate
existence proof gives a smaller constant. We have not optimized these
constants. Separate Lean formalizations follow the reserve epochs and
variational analysis. The existence theorem assumes only the matrix
hypotheses above. The algorithmic theorem verifies a fixed algorithm's
output and a uniform polynomial bound on its work in the following
arithmetic model, with the SDP solver supplied as an explicit parameter.

\begin{assumption}[Arithmetic and value primitives]\label{hr:model:computation}
Real arithmetic, comparisons, and nonnegative scalar square roots
have unit cost. Fix a deterministic solver and constants $A_*,b_*$
such that every feasible, finite-valued SDP constructed here, with
scalar data size $L$, has an additive-$\nu$ value report obtainable
in at most $A_*(L+1+\nu^{-1})^{b_*}$ operations. This guarantee
includes singular feasible programs and materialization of their explicit
affine block descriptions as real-coordinate SDPs. The input matrix
factors, centers, and reserve coefficients are computed separately.
An exact spectral primitive
returns an orthogonal eigendecomposition of a real symmetric matrix,
with a fixed basis and order at ties. An invocation has unit cost;
forming its input, reading its output, and subsequent arithmetic are
counted.
\end{assumption}

Section~\ref{hr:sec:sdp} derives the value programs from their matrix
variational formula. The polynomial bound is in this real-arithmetic
model, as in the companion rank-one paper \cite{kathuriaPartII}.
Binary encodings and implementations of the two primitives are outside
the statement.

\subsection{The source and the reserve}

The proof walks a coefficient vector $x$ from the origin of the cube
to a signing. A regularized spectral potential controls the matrix
increment made by the walk. The potential contains a positive source
that can be reduced as the coefficients move: removing source pays
for the curvature created by changing the discrepancy matrix. The
main issue is to use this reserve efficiently while still ensuring
that every coordinate eventually reaches a face.

There are two natural sources for a positive matrix $A_i$ and a
trace-one density $T$:
\[
 \Tr(A_iT)A_i\qquad\hbox{and}\qquad A_iTA_i.
\]
They agree at rank one. The first has a simple density response, but
its trace can cost $r\varepsilon$. The second has a smaller trace
budget, but its response allows density mass to move inside the
range of $A_i$. We interpolate between them using
\begin{equation}\label{hr:eq:introsource}
 M_i=A_i^{1/2}TA_i^{1/2},\quad p_i=\Tr M_i,\qquad
 A_i^{1/2}\bigl[p_i^\beta M_i^{1-\beta}\bigr]A_i^{1/2},
 \qquad \beta\asymp\frac1{\log(2r)}.
\end{equation}
The trace of the middle factor is at most $r^\beta p_i$.
Its internal matrix response is controlled by its own concavity,
at a cost proportional to $1/\beta$. This is the local analytic
estimate at the center of the proof.

To turn this estimate into the desired discrepancy bound, the source
reserve must have initial height only $O(1/\beta)$. We therefore
maintain a reserve $c_i$ independently of $x_i$, together with a
counter $s_i$ for the reserve already spent. During an epoch,
$c_i+a s_i=aR$ for every interior coordinate, where $R=4$ and
$a=O(1/\beta)$. There are two ways to spend reserve. A preparation
step decreases $c_i$ while increasing $s_i$, leaving $x$ fixed.
A walk step has the form
\[
 x\longmapsto x\pm th,\qquad
 s_i\longmapsto s_i+t^2h_i^2,\qquad
 c_i\longmapsto c_i-a t^2h_i^2.
\]
Both operations preserve the reserve identity. The counter is needed
because a coordinate might use up its reserve before reaching a face.

The potential consequently controls two matrices. If $x^0$ is the
coefficient vector at the start of the epoch, they are
\begin{equation}\label{hr:eq:introcenters}
 H=\sum_i(x_i-x_i^0)A_i,\qquad
 K=\sum_i\bigl((x_i^0)^2-x_i^2+s_i\bigr)A_i.
\end{equation}
The first is the discrepancy increment. The second certifies that
spending reserve has made progress: once all reserves vanish, every
unfinished coordinate has $s_i=R$, so its original matrix is charged
to a positive budget bounded by the initial unfinished mass and $K$.
This makes the norm of the unfinished mass contract between epochs.
The fractional coordinates persist through a restart; only the reserve
counters and the reference point $x^0$ are reset.

There is also a local benefit to the second matrix. In the walk
update, the quadratic change of $-x_i^2$ is canceled exactly by the
increase of $s_i$. Thus both centers in \eqref{hr:eq:introcenters}
vary affinely along a proposed step. The source decreases to second
order and pays for the response of the optimizing density. A
projection argument supplies a direction perpendicular to $x$;
both signs increase $\norm{x}_2^2$, and one sign decreases the
spectral potential. Preparation ensures the hypothesis of that
projection argument. This gives the local walk, while contraction
of the unfinished matrix mass gives global completion.

\subsection{Variational motivation and related bounds}

The companion papers develop discrepancy potentials through
operator-valued free interpolation, Lehner's variational formula, and
spectral density regularization
\cite{kathuriaPartI,kathuriaPartII,bbvh2023,lehner1999}. The
square-root density term is the spectral Tsallis--$1/2$ regularizer
used in spectral sparsification framed as matrix optimization
\cite{allenZhuLiaoOrecchia2015} and in discrepancy minimization
\cite{pesentivladu2026}. The higher-rank source
\eqref{hr:eq:introsource} is nonlinear in the optimizing density;
we analyze its variational formula directly. In particular, we do
not identify it with a fixed linear operator-valued covariance map.

The variational optimizer satisfies coupled equations for a density
and a positive transport matrix. Differentiating these equations
determines their response to a proposed coefficient movement.
This shares the stability viewpoint of matrix Dyson equation
analysis \cite{aek2019,aeks2020}: a matrix equation must be controlled
in the geometry selected by its positive solution. Here the regularized
optimizer selects that solution, and a compatible subspace of
coefficient forces suffices. Section~\ref{hr:sec:curvature} derives
the linearized equations and constructs that subspace.

Trace-based higher-rank extensions of the interlacing-polynomial
method give dimension-free estimates controlled by the largest input
trace \cite{cohen2016higherRank,branden2018hyperbolic,xuXuZhu2021}.
Under the hypotheses of Theorem~\ref{hr:thm:main}, that gives the
scale $O(\sqrt{r\varepsilon})$ in the nontrivial regime. Our
bound replaces this rank dependence by $\sqrt{\log(2r)}$.

The shared sign within each matrix also gives a simultaneous graph
application. For $s$ positive weightings of a common connected graph,
put the $s$ normalized edge matrices into one direct-sum matrix per
edge. Its rank is at most $s$. Repeated signing then gives one
spanning tree that is $O(\varepsilon\log(2s))$-spectrally thin for
every weighting, provided each edge leverage is at most
$\varepsilon$ in each weighting. The single-weighting existence
bound is known \cite{harveyOlver2014}; the simultaneous reduction
and its algorithmic form are proved in Section~\ref{hr:sec:thin-trees}.

For diagonal matrices, the theorem gives discrepancy
$O(\sqrt{L\log(2k)})$ when rows have sum at most $L$ and each
column has at most $k$ nonzero entries. When both row and column
sparsity are at most $t$, this is $O(\sqrt{t\log(2t)})$,
independent of the numbers of rows and columns. It reaches the
usual local-lemma scale in this regime
\cite{harvey2014rowcolumn} through a walk of fractional colorings.
Section~\ref{hr:sec:beck-fiala} gives the scalar potential and
discusses the relation to iterative regularization and other
discrepancy walks \cite{pesentivladu2026,bansalDadushGarg2019,
bansalJiang2025decoupling}.

\paragraph{Organization.}
Sections~\ref{hr:sec:setup}--\ref{hr:sec:algorithm} define the
potential, state, and algorithm. The following three sections derive
the variational response and the local movement estimate. We first
use them to give a compact existence proof, then prove the finite
algorithm and its polynomial work bound. The two applications follow;
quantitative derivative estimates are collected in the appendix.

\section{State and potential}\label{hr:sec:setup}

All traces are unnormalized. Matrix norms without a subscript are
operator norms, $\norm{\cdot}_{\mathrm F}$ is the Frobenius norm,
and the real inner product on Hermitian matrices is
$\ip XY=\operatorname{Re}\Tr(XY)$. Write $\ran A$ for the range
of $A$, and $A^*$ for its conjugate transpose. A label always denotes
one original input matrix, including its entire range.

The analytic argument works for a subisotropic family:
\begin{equation}\label{hr:eq:input}
 A_i\succeq0,\qquad B:=\sum_i A_i\preceq I_D,\qquad
 \norm{A_i}\le\varepsilon,\qquad \rank A_i\le r.
\end{equation}
Zero matrices may be removed and assigned arbitrary signs at the end.
Choose $0<\beta\le1/2$ with $r^\beta\le2$, and put
$\alpha=1-\beta$. The finite algorithm uses
\begin{equation}\label{hr:eq:qchoice}
 q=\text{the least power of two at least }
       \max\{2,\log_2(2r)\},\qquad \beta=q^{-1}.
\end{equation}
Then $q\le2\log_2(2r)$. Dyadic $q$ permits an exact SDP
representation by iterated square roots.

\subsection{One epoch}

An epoch begins with the unfinished original labels and their current
coefficients $x_i^0\in(-1,1)$. Its unfinished mass is $B=\sum_i A_i$
and its size is $b=\norm B$. Fix $R=4$ and $a>0$. The state
consists of coefficients $x_i$, spent reserves $s_i$, and remaining
reserves $c_i$, subject to
\begin{equation}\label{hr:eq:state}
\begin{gathered}
 -1\le x_i\le1,\quad 0\le s_i\le R,\quad 0\le c_i\le aR,\\
 c_i+a s_i\le aR,\qquad
 (1-x_i^2)(c_i+a s_i-aR)=0.
\end{gathered}
\end{equation}
The initial state is $(x,s,c)=(x^0,0,aR\mathbf1)$.
An interior coordinate has $c_i+a s_i=aR$. At a face, unused
reserve can be deleted without changing $s_i$. A zero reserve at
an interior coordinate means that this coordinate waits for the
next epoch; it has not yet been assigned its final sign.

Define the two centers and the nonnegative budget matrix by
\begin{equation}\label{hr:eq:centers}
\begin{aligned}
 H&=\sum_i(x_i-x_i^0)A_i,&
 K&=\sum_i\bigl((x_i^0)^2-x_i^2+s_i\bigr)A_i,\\
 \mathcal B&=\sum_i(1-x_i^2+s_i)A_i
       =\sum_i(1-(x_i^0)^2)A_i+K.
\end{aligned}
\end{equation}
Both centers vanish initially. Throughout the feasible set,
$\norm H\le2b$ and $\norm K\le(R+1)b$. If all $c_i$ vanish,
the unfinished mass $B_{\rm next}=\sum_{|x_i|<1}A_i$ satisfies
\begin{equation}\label{hr:eq:budgetcompletion}
 R B_{\rm next}\preceq\mathcal B\preceq B+K.
\end{equation}
Indeed, each summand of $\mathcal B$ is positive semidefinite,
and an unfinished coordinate has $s_i=R$. This is the reason for
including $K$ in the spectral potential.

\subsection{Four blocks and the power source}

The four-block center controls both signs of both matrices:
\[
 \mathcal H=\diag(H,-H,K,-K),\qquad B_i=I_4\otimes A_i.
\]
A density is a matrix $S\succeq0$ on $\C^4\otimes\C^D$ with
$\Tr S=1$. Denote its blocks by $S_{jk}$, $1\le j,k\le4$,
and its marginal by $T=\sum_{j=1}^4 S_{jj}$. On the fixed carrier
$\mathcal K_i=\ran A_i$, put
\begin{equation}\label{hr:eq:carrier}
 M_i=A_i^{1/2}TA_i^{1/2},\qquad p_i=\Tr M_i,\qquad
 F_\beta(M)=(\Tr M)^\beta M^\alpha,
 \quad F_\beta(0)=0.
\end{equation}
Carrier powers are extended by zero to the ambient space. Define
\begin{equation}\label{hr:eq:source}
 \Omega_i(S)=I_4\otimes
       A_i^{1/2}F_\beta(M_i)A_i^{1/2},\qquad
 \Omega_c(S)=\sum_i c_i\Omega_i(S).
\end{equation}
The same marginal enters every copy of a source. At rank one,
$F_\beta(M_i)=M_i$ and $\Omega_i(S)=p_i B_i$.

For positive semidefinite $S,W$, their fidelity is
$\Fid(S,W)=\Tr\sqrt{S^{1/2}WS^{1/2}}$. With $\theta>0$,
the value function is
\begin{equation}\label{hr:eq:potential}
 \mathcal E_\theta(H,K,c)=
 \max_{\substack{S\succeq0\\\Tr S=1}}
 \left\{\Tr(\mathcal H S)+2\Fid(S,\Omega_c(S))
                         +2\theta\Tr\sqrt S\right\}.
\end{equation}
We also write $\mathcal E_\theta(x,s,c)$ after substituting
\eqref{hr:eq:centers}. The root term keeps the maximizing density
positive definite. The two basic potential bounds are
\begin{equation}\label{hr:eq:initialbudget}
 \max\{\norm H,\norm K\}\le\mathcal E_\theta(H,K,c),\qquad
 \mathcal E_\theta(0,0,aR\mathbf1)
 \le4\sqrt{aR\varepsilon r^\beta b}+2\theta\sqrt{4D}.
\end{equation}
They are proved in Lemma~\ref{hr:lem:budget}. With
$a=O(1/\beta)$, the initial spectral cost is
$O(\sqrt{\varepsilon b/\beta})$.

\subsection{The two local operations}

Preparation of label $i$ replaces $(c_i,s_i)$ by
$(c_i-t,s_i+t/a)$ with $x$ fixed. It reduces source while changing
$K$ by $tA_i/a$. The value test in the algorithm measures this
tradeoff. If every small preparation is unsuccessful, the derivative
test proved below bounds the cost of moving each active coordinate.

For a vector $h$ supported on the positive-reserve coordinates, the
walk uses
\begin{equation}\label{hr:eq:movement}
 x_i^\pm=x_i\pm th_i,\qquad
 s_i^\pm=s_i+t^2h_i^2,\qquad
 c_i^\pm=c_i-at^2h_i^2.
\end{equation}
Writing $A(h)=\sum_i h_i A_i$ and $(xh)_i=x_i h_i$, direct
expansion gives the exact identities
\begin{equation}\label{hr:eq:center-cancellation}
 H^\pm=H\pm tA(h),\qquad K^\pm=K\mp2tA(xh).
\end{equation}
Thus the positive source reduction is quadratic, while the center
change is affine. The local response estimate finds a direction for
which this source reduction outweighs the second-order spectral cost.

\section{The finite walk}\label{hr:sec:algorithm}

The algorithm works on the unfinished labels in epochs. Within an
epoch it first removes coordinates close to faces and reserves close
to zero. It then tests whether a small preparation decreases the
potential. If every preparation fails, it computes a negative
curvature direction perpendicular to the current coefficient vector
and chooses the better of its two signs. An epoch ends when every
reserve has vanished. The budget matrix then shows that the total
unfinished matrix mass has decreased, so another epoch can be
started at smaller spectral cost.

We give the real symmetric algorithm first. Complex inputs are
handled by realifying each entire input matrix, as described at the
end of Section~\ref{hr:sec:runtime}. Fix all tie rules once, including
the order of preparation tests. The case $D=0$ returns all positive
signs. Otherwise assume $\norm{\sum_i A_i}=1$ and
$\sum_i A_i\preceq I_D$. Compute the actual maximum norm
$\varepsilon=\max_i\norm{A_i}$ and maximum rank $r$ by the
spectral primitive. Then $1/N\le\varepsilon\le1$ and $1\le r\le D$.
These choices only improve a supplied upper bound.

\subsection{Parameters and the value routine}

Choose $q,\beta$ by \eqref{hr:eq:qchoice}. This requires no
logarithm evaluation: start with $(q,w)=(2,4)$ and, while $w<2r$,
replace $(q,w)$ by $(2q,w^2)$. The invariant $w=2^q$ shows that
the returned $q$ is the required least dyadic scale. Set
\begin{equation}\label{hr:eq:algorithmparameters}
\begin{gathered}
 a=1024q,\quad R=4,\quad T_*=1+N+D+q,\qquad
 \theta=\eta=\rho=\zeta=T_*^{-10},\\
 \alpha_0=2aR\varepsilon,\qquad \delta=5\sqrt{\alpha_0},\qquad
 \overline B=64aT_*,\qquad s_0=(\theta/\overline B)^2,\\
 p_0=s_0\eta,\qquad \tau_0=4s_0\eta^2/\overline B,
 \qquad \gamma=a\tau_0/4.
\end{gathered}
\end{equation}
Here $s_0$ is a lower bound for density eigenvalues, $p_0$ and
$\tau_0$ are lower bounds for the source probes defined in
Section~\ref{hr:sec:curvature}, and $\gamma$ is the curvature
scale. Let $M$ be the explicit derivative bound in
\eqref{hr:eq:derivative-recipe}. It is computed from the parameters
above by arithmetic and square roots. Set
\begin{equation}\label{hr:eq:stepsizes}
\begin{gathered}
 r_* =\min\{\rho/4,\sqrt{\zeta/(4a)}\},\qquad
 \lambda=\min\{\zeta,p_0/(8aM)\},\\
 t=\min\{r_*/4,\gamma/(64NM)\},\qquad
 h=\min\{r_*/4,\gamma/(16M)\},\\
 \nu=\min\left\{1,\frac{\lambda p_0}{64a},
                       \frac{\gamma t^2}{256N},
                       \frac{\gamma h^2}{64}\right\}.
\end{gathered}
\end{equation}
The preparation length is $\lambda$, the finite-difference spacing
is $t$, and the walk length is $h$. Every value report has absolute
error at most $\nu$. The appendix proves the required derivative
bounds; Section~\ref{hr:sec:runtime} shows that all inverse
accuracies and call counts are polynomial in $N,D$.

If $\delta\ge1$, return all positive signs. Otherwise set aside
every whole matrix of norm at most $\eta$, assigning it sign $+1$.
Their combined discrepancy is at most $N\eta$. The remaining
matrices form a subisotropic family. The original $N,D,q$ continue
to determine all parameters.

The routine $\operatorname{VALUE}(H,K,c)$ returns an additive-$\nu$
report for \eqref{hr:eq:potential}, with the current epoch family
and the fixed value of $\theta$. Section~\ref{hr:sec:sdp} gives
its complete SDP. For each current state define
\begin{equation}\label{hr:eq:compositevalue}
 \varphi(u)=\mathcal E_\theta
       \bigl(H+A(u),\ K-2A(xu),\ c-au^{\odot2}\bigr).
\end{equation}
Here $u$ is supported on the active set $I=\{i:c_i>0\}$ and
$u^{\odot2}$ denotes coordinatewise squaring. Equation
\eqref{hr:eq:center-cancellation} identifies this expression with
the value of the actual update of $x,s,c$.

\subsection{The Hessian and the two signs}

Let $V(u)$ be the report obtained by applying $\operatorname{VALUE}$
to the three arguments in \eqref{hr:eq:compositevalue}. For active
labels $i,j$, form the real symmetric matrix $\widetilde L$ by
\begin{equation}\label{hr:eq:hessianstencil}
\begin{aligned}
 \widetilde L_{ii}
 &=\frac{V(te_i)-2V(0)+V(-te_i)}{t^2},\\
 \widetilde L_{ij}
 &=\frac{V(te_i+te_j)-V(te_i-te_j)
                 -V(-te_i+te_j)+V(-te_i-te_j)}{4t^2}\quad(i\ne j).
\end{aligned}
\end{equation}
Write $z=(x_i)_{i\in I}$ and let
\[
 P_x=\begin{cases}I,&z=0,\\ I-zz^\top/\norm z_2^2,&z\ne0.
 \end{cases}
\]
This projects active directions onto $x^\perp$. On the curvature
branch, the smallest eigenvalue of $P_x\widetilde L P_x$ is
strictly negative. A unit eigenvector $g$ for it is therefore in
$\ran P_x$; extend $g$ by zero outside $I$. Both signs satisfy
\begin{equation}\label{hr:eq:radialprogress}
 \norm{x\pm hg}_2^2=\norm x_2^2+h^2.
\end{equation}
The value comparison chooses which sign decreases the spectral
potential. Progress toward the faces holds for either choice.
Equivalently, one may form an orthonormal basis $U$ of $x^\perp$,
diagonalize $U^\top\widetilde L U$, and lift its minimum eigenvector
through $U$. This is the implementation used in the certificate;
a Householder reflection constructs $U$ using polynomially many
scalar operations.

Figure~\ref{hr:fig:tangent} illustrates the coefficient movement and
the reserve update shared by its two candidates.
\begin{figure}[ht]
\centering
\begin{minipage}{0.40\textwidth}
\centering
\begin{tikzpicture}[scale=1.9,>=Latex,
                    every node/.style={font=\small}]
 \draw[PaperSlate!65] (-1,-1) rectangle (1,1);
 \draw[PaperSlate!35,dashed] (0,0) circle (0.6);
 \draw[->,PaperSlate] (0,0) -- (0.6,0);
 \draw[->,thick,PaperBlue] (0.6,0) -- (0.6,0.6);
 \draw[->,thick,PaperBlue] (0.6,0) -- (0.6,-0.6);
 \draw[PaperSlate] (0.50,0) -- (0.50,0.10) -- (0.60,0.10);
 \fill[PaperSlate] (0,0) circle (0.025) node[below left] {$0$};
 \fill[PaperBlue] (0.6,0) circle (0.025) node[right] {$x$};
 \fill[PaperBlue] (0.6,0.6) circle (0.025)
       node[above left] {$x+hg$};
 \fill[PaperBlue] (0.6,-0.6) circle (0.025)
       node[below left] {$x-hg$};
\end{tikzpicture}
\end{minipage}\hfill
\begin{minipage}{0.55\textwidth}
\small
The active unit direction satisfies $g\perp x$, so
\[
 \norm{x\pm hg}_2^2=\norm x_2^2+h^2.
\]
Both candidates use the same reserve update:
\[
 s'=s+h^2g^{\odot2},\qquad c'=c-ah^2g^{\odot2}.
\]
Comparing their value reports selects a decrease of the joint
discrepancy--budget potential; radial progress holds for either sign.
\end{minipage}
\caption{The higher-rank curvature step in two coefficient dimensions.
The dashed circle is a level set of the squared coefficient norm.
The direction is zero outside the active set, and the step-size bound
keeps both candidates feasible. Preparation instead spends reserve
with $x$ fixed. Exhausting a coordinate's reserve suspends its movement
for the current epoch; near-face rounding fixes its sign permanently.}
\label{hr:fig:tangent}
\end{figure}

\begin{algorithm}[H]
\caption{Higher-rank signing with independent reserves}\label{hr:alg:walk}
\begin{algorithmic}[1]
\Require Real PSD inputs with $\sum_i A_i\preceq I_D$ and
         $\norm{\sum_i A_i}=1$.
\State Compute \eqref{hr:eq:algorithmparameters}--\eqref{hr:eq:stepsizes}.
\If{$\delta\ge1$}\State \Return all positive signs.\EndIf
\State Set aside matrices of norm at most $\eta$ with sign $+1$;
       set every retained coefficient to zero.
\While{$\norm{\sum_{i:\,|x_i|<1}A_i}>\delta^2$}
 \State Begin an epoch on the unfinished labels:
        $x^0\gets x$, $s\gets0$, $c\gets aR\mathbf1$.
 \While{some $c_i>0$}
  \State Round each unfinished $i$ with $1-|x_i|\le\rho$
         outward to a sign; set $c_i\gets0$, leaving $s_i$ fixed.
  \State For each remaining interior $i$ with $0<c_i<2\zeta$,
         set $s_i\gets s_i+c_i/a$, then $c_i\gets0$.
  \If{all $c_i=0$}\State \textbf{break}.\EndIf
  \State Form $H,K$ by \eqref{hr:eq:centers};
         $v_0\gets\operatorname{VALUE}(H,K,c)$.
  \For{active $i$ in the fixed order}
   \State $v_i\gets\operatorname{VALUE}(H,K+(\lambda/a)A_i,c-\lambda e_i)$.
   \If{$v_i-v_0\le-\lambda p_0/(4a)$}
    \State $c_i\gets c_i-\lambda$, $s_i\gets s_i+\lambda/a$;
           restart the inner while-loop.
   \EndIf
  \EndFor
  \State Form $\widetilde L$ by \eqref{hr:eq:hessianstencil}.
         Compute a minimum unit eigenvector $g$ of $P_x\widetilde L P_x$.
  \State Choose $\omega\in\{-1,1\}$ minimizing $V(\omega hg)$.
  \State $x\gets x+\omega hg$, $s\gets s+h^2g^{\odot2}$,
         $c\gets c-ah^2g^{\odot2}$.
 \EndWhile
 \State Keep all completed signs and all unfinished fractional coordinates.
\EndWhile
\State Round the remaining coefficients to signs; return these and the set-aside signs.
\end{algorithmic}
\end{algorithm}

All comparisons use fixed tie rules. In particular, a successful
preparation invalidates earlier rejected tests, so the scan restarts
at the new state. A coordinate with exhausted reserve stays fixed
for the rest of its epoch and is active again in the next epoch if
it is still fractional.

\begin{figure}[ht]
\centering
\begin{tikzpicture}[>=Latex,node distance=9mm,
 every node/.style={font=\small,align=center},
 box/.style={draw=PaperSlate,rounded corners,inner sep=5pt}]
 \node[box] (prep) {Try reserve preparation\\$x$ fixed, $c_i\downarrow$, $s_i\uparrow$};
 \node[box,right=13mm of prep] (walk) {If every test fails: $g\perp x$\\choose the better of $x\pm hg$};
 \node[box,below=of prep] (end) {All reserves exhausted\\$4B_{\rm next}\preceq B+K$};
 \node[box,right=13mm of end] (epoch) {Restart on unfinished labels\\retain $x$, reset reserves};
 \draw[->] (prep)--(walk);
 \draw[->] (walk.south)--++(0,-3mm)-|(prep.south);
 \draw[->] (prep)--(end);
 \draw[->] (end)--(epoch);
\end{tikzpicture}
\caption{Local descent controls an epoch; the budget matrix pays for
the next one. The actual algorithm also makes the small cleanup
operations displayed in Algorithm~\ref{hr:alg:walk}.}
\end{figure}

\section{The variational potential}\label{hr:sec:foundations}

\subsection{Concavity and supported transport}

We first establish the analytic properties used throughout the proof.
The same resolvent formula will later supply the crucial source
curvature estimate.

\begin{lemma}[The carrier power]\label{hr:lem:carrierpower}
The function $F_\beta$ is continuous on the positive semidefinite
cone, homogeneous of degree one, and operator concave. It is smooth
on the positive definite cone of its fixed carrier. For
$f(M)=M^\alpha$, $M\succ0$, and Hermitian $U,V$, let
$R_t=(M+tI)^{-1}$ and $a_\alpha=\sin(\pi\alpha)/\pi$. Then
\begin{align}
 Df(M)[U]
 &=a_\alpha\int_0^\infty t^\alpha R_tUR_t\,\dd t,
                                                        \label{hr:eq:powerderivative}\\
 -D^2f(M)[U,V]
 &=a_\alpha\int_0^\infty t^\alpha
       R_t(UR_tV+VR_tU)R_t\,\dd t.           \label{hr:eq:powersecond}
\end{align}
In particular, $Df(M)$ preserves positive semidefiniteness.
\end{lemma}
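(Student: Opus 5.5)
The plan is to derive every claim from the resolvent integral representation of the power function and then transfer it to $F_\beta$. Throughout, work on the fixed carrier $\mathcal K=\ran A_i$ and treat $M$ as an operator there; extension by zero changes nothing.

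\textbf{Derivative formulas.} Start from the scalar identity
\[
 \lambda^\alpha=a_\alpha\int_0^\infty t^{\alpha-1}\frac{\lambda}{\lambda+t}\,\dd t,
 \qquad 0<\alpha<1,\ \lambda\ge0,
\]
which is the Beta integral $\int_0^\infty s^{\alpha-1}(1+s)^{-1}\dd s=\pi/\sin(\pi\alpha)$. By functional calculus,
\[
 M^\alpha=a_\alpha\int_0^\infty t^{\alpha-1}M(M+t)^{-1}\,\dd t
 =a_\alpha\int_0^\infty t^{\alpha-1}\bigl(I-tR_t\bigr)\,\dd t .
\]
Differentiating under the integral uses $D R_t[U]=-R_tUR_t$ and $D^2R_t[U,V]=R_tUR_tVR_t+R_tVR_tUR_t$. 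This gives \eqref{hr:eq:powerderivative} and \eqref{hr:eq:powersecond}: the factor $t\cdot t^{\alpha-1}=t^\alpha$ appears, and the sign of the second derivative flips. Justifying the interchange is routine for $M\succ0$ on the carrier. For $M\succeq mI$, we have $\norm{R_t}\le(m+t)^{-1}$, so the integrands are $O(t^\alpha(m+t)^{-2})$ and $O(t^\alpha(m+t)^{-3})$, both integrable near $0$ and $\infty$ because $\alpha<1$. Smoothness on the positive definite cone follows the same way, since $R_t$ is real analytic in $M$ with uniformly integrable bounds on all derivatives. Positivity preservation is then immediate: $U\succeq0$ implies $R_tUR_t\succeq0$.

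\textbf{Operator concavity of $F_\beta$.} Apply \eqref{hr:eq:powersecond} with $U=V$. The integrand becomes $2t^\alpha R_tUR_tUR_t=2t^\alpha (R_tU R_t^{1/2})(R_tUR_t^{1/2})^*\succeq0$, so $M^\alpha$ is operator concave on the positive definite cone. Now write $F_\beta(M)=g(M)\,M^\alpha$ with $g(M)=(\Tr M)^\beta$. Concavity of this product does not follow from a second derivative computation mixing the scalar and operator factors, because cross terms appear. I would instead use a homogeneity/perspective argument. Since $F_\beta$ is homogeneous of degree one (clear), operator concavity is equivalent to superadditivity,
\[
 F_\beta(M_1+M_2)\succeq F_\beta(M_1)+F_\beta(M_2).
\]
Set $p_k=\Tr M_k$, $p=p_1+p_2$, and $\lambda_k=p_k/p$. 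Then
\[
 F_\beta(M_1+M_2)=p\,(M_1/p+M_2/p)^\alpha
 =p\Bigl(\lambda_1\tfrac{M_1}{p_1}+\lambda_2\tfrac{M_2}{p_2}\Bigr)^\alpha .
\]
Operator concavity of $X\mapsto X^\alpha$ (Löwner--Heinz) bounds this below by
\[
 p\lambda_1(M_1/p_1)^\alpha+p\lambda_2(M_2/p_2)^\alpha
 =\sum_k p_k^{1-\alpha}M_k^\alpha=\sum_kF_\beta(M_k).
\]
Zero summands are trivial. This perspective argument is the step I regard as the genuine content of the lemma, and I would present it as the main point.

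\textbf{Continuity at the boundary.} The delicate step is continuity on the whole positive semidefinite cone, including singular $M$ and $M\to0$. For $M\ne0$, continuity holds because $\Tr M$ is continuous and positive and $X\mapsto X^\alpha$ is continuous on the positive semidefinite cone: it is a continuous function applied through continuous functional calculus, uniformly approximable by polynomials on bounded spectra. At $0$, use $\norm{F_\beta(M)}\le(\Tr M)^\beta\norm M^\alpha\le\Tr M\to0$, consistent with $F_\beta(0)=0$. Finally, since the proof of concavity above handles singular $M_k$ directly, concavity holds on the closed cone without any limiting argument.
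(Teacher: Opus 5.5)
Your proposal is correct and follows essentially the same route as the paper: the Euler beta integral, differentiation of $I-tR_t$ under the integral with the same integrability bounds, concavity of $M^\alpha$ from the sign of the resolvent second derivative, the normalization $F_\beta(M)=p(M/p)^\alpha$, and the bound $\norm{F_\beta(M)}\le\Tr M$ at zero. Your superadditivity formulation is equivalent to the paper's weighted combination with weights $\lambda p_1/p$ and $(1-\lambda)p_2/p$. One small point: your second-derivative argument only gives concavity of $X^\alpha$ on the positive definite cone, so the step with singular $M_k/p_k$ does depend on a further ingredient, either the classical closed-cone result you cite or the paper's continuity extension, and is not literally free of limiting arguments.
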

\begin{proof}
The scalar identity
\[
 \lambda^\alpha
 =a_\alpha\int_0^\infty
       t^{\alpha-1}\frac{\lambda}{\lambda+t}\,\dd t
 \qquad(\lambda>0)
\]
follows by substituting $t=\lambda s$ in the Euler beta integral.
Applying the spectral theorem gives the corresponding matrix
identity. Differentiating $M(M+tI)^{-1}=I-t(M+tI)^{-1}$ yields
\eqref{hr:eq:powerderivative} and
\eqref{hr:eq:powersecond}. Differentiation under the integral is
valid locally on $M\succ0$: on a neighborhood with a fixed positive
lower eigenvalue bound, the first-derivative integrand is
$O(t^\alpha)$ at zero and $O(t^{\alpha-2})$ at infinity, and the
second derivative decays still faster at infinity.
Higher derivatives satisfy analogous integrable local bounds.

The inverse map is operator convex on positive definite matrices.
For example, its second derivative in a Hermitian direction $U$ is
$2RURUR\succeq0$, where $R$ is the inverse; this is a congruence of
the square of $R^{1/2}UR^{1/2}$. Hence
$M(M+tI)^{-1}=I-t(M+tI)^{-1}$ is operator concave, and integration
proves operator concavity of $M^\alpha$. Continuity extends it to
singular matrices.

To pass from the power to $F_\beta$, write
$F_\beta(M)=p(M/p)^\alpha$ when $p=\Tr M>0$. Given nonzero
$M_1,M_2$ and $0<\lambda<1$, put
$p=\lambda p_1+(1-\lambda)p_2$. The normalized convex combination
has weights $\lambda p_1/p$ and $(1-\lambda)p_2/p$. Applying
operator concavity of the power to that combination and multiplying
by $p$ proves
\[
 F_\beta(\lambda M_1+(1-\lambda)M_2)
 \succeq\lambda F_\beta(M_1)+(1-\lambda)F_\beta(M_2).
\]
Zero inputs follow by homogeneity and continuity. At zero,
$\norm{F_\beta(M)}\le\Tr M$; away from zero, continuity follows
from continuity of the matrix power and scalar trace factor.
Homogeneity is immediate from $\alpha+\beta=1$.
\end{proof}

Each $\Omega_i$ is therefore operator concave, homogeneous of
degree one, and continuous. To combine this fact with fidelity,
we use the following finite-dimensional transport formula.

\begin{lemma}[Fidelity and transport]\label{hr:lem:transport}
Let $W\succeq0$ be supported on a subspace $\mathcal K$, and let
$S_0$ be the compression of $S\succeq0$ to $\mathcal K$. Then
\begin{equation}\label{hr:eq:transportvariational}
 2\Fid(S,W)=
 \inf_{\substack{Z\succ0\\\text{on }\mathcal K}}
       \{\Tr(S_0Z^{-1})+\Tr(WZ)\}.
\end{equation}
If $S_0,W\succ0$ on $\mathcal K$, the unique minimizer is
\begin{equation}\label{hr:eq:transport}
 Z=W^{-1/2}(W^{1/2}S_0W^{1/2})^{1/2}W^{-1/2},
 \qquad ZWZ=S_0.
\end{equation}
Fidelity is jointly concave and is increasing in either argument
in positive semidefinite order.
\end{lemma}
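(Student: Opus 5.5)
Reduce to the positive definite case on $\mathcal K$ by Cauchy--Schwarz, then extend by continuity/monotonicity. Because $W$ is supported on $\mathcal K$, we have $W=PWP$ with $P$ the projection onto $\mathcal K$. Then
\[
S^{1/2}WS^{1/2}=(S^{1/2}P)W^{1/2}\,W^{1/2}(PS^{1/2}),
\]
and this is unitarily equivalent on its range to $W^{1/2}PSPW^{1/2}=W^{1/2}S_0W^{1/2}$, since $XX^*$ and $X^*X$ have the same nonzero spectrum. Hence $\Fid(S,W)=\Tr(W^{1/2}S_0W^{1/2})^{1/2}=\norm{S_0^{1/2}W^{1/2}}_1$, and the whole problem lives on $\mathcal K$.

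\emph{Lower bound (any $Z\succ0$ on $\mathcal K$).} Factor
\[
S_0^{1/2}W^{1/2}=(S_0^{1/2}Z^{-1/2})(Z^{1/2}W^{1/2}).
\]
Take a polar decomposition with partial isometry $U$ so that the trace norm equals $\operatorname{Re}\Tr(US_0^{1/2}W^{1/2})$. Apply the Cauchy--Schwarz inequality for the Hilbert--Schmidt inner product and then AM--GM:
\[
\norm{S_0^{1/2}W^{1/2}}_1\le\norm{S_0^{1/2}Z^{-1/2}}_{\mathrm F}\,\norm{Z^{1/2}W^{1/2}}_{\mathrm F}\le\tfrac12\bigl(\Tr(S_0Z^{-1})+\Tr(WZ)\bigr).
\]
This is ``$\le$'' in \eqref{hr:eq:transportvariational}.

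\emph{Attainment when $S_0,W\succ0$.} Let $G=(W^{1/2}S_0W^{1/2})^{1/2}$ and $Z=W^{-1/2}GW^{-1/2}$. A direct computation gives $ZWZ=W^{-1/2}G^2W^{-1/2}=S_0$. Then $\Tr(WZ)=\Tr G$, and $\Tr(S_0Z^{-1})=\Tr(ZWZ\,Z^{-1})=\Tr(WZ)=\Tr G$, so the infimum equals $2\Tr G=2\Fid$.

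\emph{Uniqueness.} The objective $Z\mapsto\Tr(S_0Z^{-1})+\Tr(WZ)$ is strictly convex on $Z\succ0$ when $S_0\succ0$. This follows from operator convexity of the inverse, as in Lemma~\ref{hr:lem:carrierpower}: the second derivative $2\Tr(S_0Z^{-1}UZ^{-1}UZ^{-1})$ is positive for $U\ne0$. It is also coercive, so any critical point is the unique minimizer. The first-order condition $Z^{-1}S_0Z^{-1}=W$, i.e.\ $ZWZ=S_0$, is a Riccati equation whose unique positive solution is the displayed $Z$. One checks this by conjugating with $W^{1/2}$: $(W^{1/2}ZW^{1/2})^2=W^{1/2}S_0W^{1/2}$, and positive square roots are unique.

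\emph{Singular case.} Replace $S_0,W$ by $S_0+\epsilon P$, $W+\epsilon P$. The infimum of an objective monotone in $(S_0,W)$ decreases as $\epsilon\downarrow0$, and the trace norm is continuous. Together these give equality of the infimum with $2\Fid$ in the limit. One must check that the infimum over $Z$ commutes with $\epsilon\to0$. The limit of infima is at least the infimum at $\epsilon=0$ by monotonicity, and at most it by the lower bound plus continuity of $\Fid$, so equality holds.

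\emph{Concavity and monotonicity.} Both follow from \eqref{hr:eq:transportvariational}. The map $S\mapsto S_0$ is linear, and for fixed $Z$ the objective is affine and nondecreasing in $(S,W)$. An infimum of affine functions is jointly concave, and an infimum of nondecreasing functions is nondecreasing. For the joint statement with $\mathcal K$ varying, apply this with $\mathcal K$ the whole space, which is allowed because the formula holds for any supporting subspace.

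The main obstacle is the careful treatment of the singular case, where attainment fails and the infimum must be taken over $Z\succ0$ on $\mathcal K$ only. The monotone $\epsilon$-regularization above is what I would try first.
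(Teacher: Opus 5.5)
Your proposal is correct and follows essentially the same route as the paper. Both arguments reduce to $\norm{S_0^{1/2}W^{1/2}}_*$ on $\mathcal K$ and bound it by every transport cost using the polar decomposition and Cauchy--Schwarz/AM--GM. Both attain equality with the explicit $Z$ solving $ZWZ=S_0$, and both handle singular arguments by regularizing with $\epsilon I_{\mathcal K}$, using monotonicity of the cost at fixed $Z$ together with continuity of fidelity. Joint concavity and monotonicity then come from the formula being an infimum of linear nondecreasing functions on the whole space. Your strict-convexity and coercivity argument for uniqueness is a more detailed version of the paper's one-line remark that the stationarity equation $ZWZ=S_0$ has a unique positive solution.
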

\begin{proof}
The nonzero eigenvalues in the definition of fidelity can equivalently
be computed from $W^{1/2}SW^{1/2}$, so the expression depends on $S$
only through $S_0$. Its value is
$\norm{S_0^{1/2}W^{1/2}}_*$, where $\norm{\cdot}_*$ is the sum
of singular values. The polar variational formula and
$2\operatorname{Re}\Tr(A^*B)\le\norm A_{\mathrm F}^2+
\norm B_{\mathrm F}^2$, applied with
$A=Z^{-1/2}S_0^{1/2}$ and $B=Z^{1/2}W^{1/2}U$ for a unitary
$U$, give the upper bound by every transport cost in
\eqref{hr:eq:transportvariational}.
For positive definite arguments, the matrix in
\eqref{hr:eq:transport} has both cost terms equal to
$\Tr\sqrt{W^{1/2}S_0W^{1/2}}$. It attains equality, and the
equation $ZWZ=S_0$ determines it uniquely.

For singular arguments, regularize both compressed arguments by
adding $\delta I_{\mathcal K}$. A transport optimizer for the
regularized pair has cost for the original pair no larger than
its regularized cost. Letting $\delta\downarrow0$ proves equality
in \eqref{hr:eq:transportvariational} by continuity.
On a common ambient space, that formula is an infimum of linear
functions of $(S,W)$ and hence is jointly concave. Every such
linear function is increasing in both positive semidefinite
arguments, which also proves monotonicity.
\end{proof}

\begin{proposition}[The optimizing density and its support]
\label{hr:prop:regularity}
For $\theta>0$, the value in \eqref{hr:eq:potential} is continuous
in the centers and nonnegative reserves. Its maximizing density is
unique and positive definite. On a region where the set
$I=\{i:c_i>0\}$ is fixed, the density and value are smooth in
the centers and positive reserves. The transport is smooth on the
fixed source support
\begin{equation}\label{hr:eq:livesupport}
 \mathcal K=\C^4\otimes\operatorname{span}_{i\in I}\ran A_i.
\end{equation}
Both the density and supported transport are block diagonal in the
four displayed blocks.
\end{proposition}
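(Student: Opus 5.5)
The plan is to study the objective
\[
 \Phi(S)=\Tr(\mathcal H S)+2\Fid(S,\Omega_c(S))+2\theta\Tr\sqrt S
\]
on the compact convex set of densities. We first show that $\Phi$ is strictly concave. By Lemma~\ref{hr:lem:carrierpower}, each $\Omega_i$ is operator concave and continuous in $S$, since $T$ depends linearly on $S$. Lemma~\ref{hr:lem:transport} says fidelity is jointly concave and monotone in its second argument. Combining these, $S\mapsto\Fid(S,\Omega_c(S))$ is concave: at a convex combination the second argument dominates the combination of the $\Omega_c$ values, monotonicity lets us replace it by that combination, and joint concavity finishes. The term $\Tr\sqrt S$ is strictly concave. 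Together with continuity this gives existence and uniqueness of the maximizer. Continuity of the value in $(H,K,c)$ follows because $\Phi$ is jointly continuous in $(S,\mathcal H,c)$ on a compact set, by Berge's maximum theorem.

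Next we prove that the maximizer is positive definite. Suppose $S$ has a kernel vector $v$ and consider $S_\epsilon=(1-\epsilon)S+\epsilon vv^*$. The root term then gains about $2\theta\sqrt\epsilon$. The linear term changes by $O(\epsilon)$. The fidelity term is concave along the segment, so it loses at most $O(\epsilon)$ relative to its endpoint values, and it is bounded below by its value at $S$ scaled by $1-\epsilon$ via homogeneity and monotonicity. Hence $\Phi(S_\epsilon)>\Phi(S)$ for small $\epsilon$, a contradiction.

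Block diagonality comes from the symmetry $U=\diag(\pm I_D)$ acting by conjugation on $\C^4\otimes\C^D$. It fixes $\mathcal H$ and the marginal $T$, hence fixes $\Omega_c$, and it preserves the trace and the fidelity terms. The conjugate of the maximizer $S$ is therefore again a maximizer, and uniqueness forces $S=USU^*$ for all such sign choices. This makes $S$ block diagonal. The transport formula \eqref{hr:eq:transport} then shows that $Z$ commutes with the same symmetries, so $Z$ is block diagonal as well.

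The main obstacle is smoothness, which we plan to get from the implicit function theorem. Fix $I$. On $\mathcal K$, $\Omega_c(S)$ is positive definite for $S\succ0$: each $M_i\succ0$ on its carrier, so $F_\beta(M_i)$ is positive definite on $\ran A_i$, and $\sum_{i\in I}c_iA_i^{1/2}F_\beta(M_i)A_i^{1/2}$ has range equal to the span of these ranges. The compression $S_0$ is also positive definite. Lemma~\ref{hr:lem:carrierpower} then gives smoothness of $\Omega_c$, and \eqref{hr:eq:transport} gives smoothness of $Z$ as a function of $(S,c)$, because it is built from smooth functional calculus on positive definite matrices. The first-order condition on the tangent space $\{\Tr X=0\}$ of the open set of positive definite densities is a smooth equation in $(S,\mathcal H,c)$. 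Its derivative in $S$ is the Hessian of $\Phi$. To invert it we need nondegeneracy rather than mere strict concavity. We intend to supply this from the root term, whose Hessian is strictly negative definite at $S\succ0$: its second derivative $-\frac12$ times a positive resolvent integral, as in \eqref{hr:eq:powersecond} with $\alpha=1/2$, is definite. The fidelity part is smooth and concave there, which we would check from the variational formula as an infimum of smooth functions attained at a unique smooth minimizer. The Hessian of $\Phi$ is then negative definite, the implicit function theorem yields smooth dependence of $S$ on the parameters, and the envelope theorem gives smoothness of the value.
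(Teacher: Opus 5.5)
Your proposal is correct and follows essentially the same route as the paper. Concavity comes from operator concavity of the source together with monotonicity and joint concavity of fidelity. The root term gives strict concavity, uniqueness, and positive definiteness through a gain of order $\sqrt\epsilon$ against a loss of order $O(\epsilon)$. Its Hessian also makes the trace-zero stationarity equations nondegenerate for the implicit function theorem, and block diagonality follows from the four-block sign symmetry combined with uniqueness.

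The differences are minor:
\begin{enumerate}
\item You perturb toward a kernel projector rather than toward $I_{4D}/(4D)$.
\item You use conjugation invariance of the maximizer directly rather than averaging over the sign conjugations.
\item You cite strict concavity of $\Tr\sqrt S$ on singular densities as a known fact, whereas the paper proves it by restricting to the range of the sum.
\item You do not spell out why the implicit-function branch is the maximizer. This follows from uniqueness, since a positive definite critical point of the concave objective is its global maximizer.
\end{enumerate}
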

\begin{proof}
Operator concavity of the source, monotonicity of fidelity, and its
joint concavity make the density objective concave. Its root term
is strictly concave. To verify strictness, at $S\succ0$ diagonalize
$Q=\sqrt S$ with eigenvalues $\lambda_j>0$. Differentiating
$Q^2=S$ gives $Q\,DQ[X]+DQ[X]Q=X$, hence
\begin{equation}\label{hr:eq:rootcurvature}
 -D^2(2\theta\Tr\sqrt S)[X,X]
 =\theta\sum_{j,k}
 \frac{|X_{jk}|^2}{\lambda_j\lambda_k(\lambda_j+\lambda_k)}>0
 \quad(X\ne0).
\end{equation}
For a segment between singular densities, restrict to the range of
their sum, where its interior is positive definite. This proves
strict concavity on the entire density set. Compactness gives a
unique maximizer.

If a maximizing density were singular, mix it with $I_{4D}/(4D)$
with weight $t$. Concavity bounds the loss in the other terms by
$O(t)$, whereas a zero eigenvalue contributes a positive root gain
of order $\sqrt t$. The maximizer is therefore positive definite.
At such a density, each $M_i$ is positive definite on $\ran A_i$,
and $\Omega_c(S)$ is positive definite on \eqref{hr:eq:livesupport}.
The compressed fidelity and the full density objective are smooth
there. The Hessian on the trace-zero density tangent is strictly
negative by \eqref{hr:eq:rootcurvature}. Apply the smooth implicit
function theorem to the density stationarity equations: a smooth
equation $G(u,S)=0$ with invertible derivative in $S$ has a unique
smooth local solution $S=S(u)$ through the given point
\cite[Appendix~B]{guilleminhaine2019}. Uniqueness identifies it
with the optimizing density; \eqref{hr:eq:transport} then gives
smooth transport. Joint continuity on the compact density domain
also proves continuity of the optimized value when reserves vanish.

Finally, conjugation by independent scalar signs on the four blocks
preserves the center and marginal. Averaging over these conjugations
removes off-diagonal blocks and cannot decrease the concave objective.
Uniqueness makes the optimizer block diagonal. Its compressed source
and the explicit transport formula have the same symmetry.
\end{proof}

The density remains a full ambient matrix when $\mathcal K$ is a
proper subspace. Compression applies to the source and transport,
not to the density domain. Write $\pi$ for compression to
$\mathcal K$ and $\iota$ for extension by zero from that space;
these maps are adjoint under the trace pairing. With $S_0=\pi S$,
the transport and density stationarity equations are
\begin{equation}\label{hr:eq:stationarity}
\begin{split}
 Z\Omega_c(S)Z&=S_0,\\
 \mathcal H+\iota(Z^{-1})+D\Omega_c(S)^*(Z)
                  +\theta S^{-1/2}&=\ell I_{4D},\qquad \Tr S=1.
\end{split}
\end{equation}
The adjoint in this formula is the adjoint of the actual source
derivative, extended to the full density space. It need not equal
the derivative itself. The equation follows by differentiating the
transport representation at its optimizer and applying the Lagrange
multiplier rule on the trace-one affine space. Thus every term is
determined by the same optimizing pair.

\begin{lemma}[Spectral and initial bounds]\label{hr:lem:budget}
The bounds in \eqref{hr:eq:initialbudget} hold. The value is
nondecreasing in each reserve and is $1$-Lipschitz in the operator
norm of the four-block center.
\end{lemma}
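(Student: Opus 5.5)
The plan is to treat the four assertions separately, since each follows from an elementary estimate on the objective in \eqref{hr:eq:potential} for a fixed density $S$, after which we pass to the maximum. Write $J(S)$ for the bracketed objective. The fidelity term and the root term are always nonnegative, and the value is a maximum over the compact set of trace-one densities.

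\emph{Lower bound.} Among the four matrices $\pm H,\pm K$, let $X$ be the one whose largest eigenvalue equals $\max\{\norm H,\norm K\}$, and let $v$ be a unit top eigenvector of $X$. Put $S=e_je_j^*\otimes vv^*$, where $j$ is the block index in which $X$ sits in $\mathcal H$. This is a trace-one positive semidefinite density, and $\Tr(\mathcal H S)=v^*Xv=\max\{\norm H,\norm K\}$. Dropping the two nonnegative terms gives $\mathcal E_\theta\ge J(S)\ge\max\{\norm H,\norm K\}$.

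\emph{Monotonicity and Lipschitz bound.} Each $\Omega_i(S)$ is positive semidefinite because $F_\beta(M_i)\succeq0$. Hence $\Omega_c(S)$ is nondecreasing in each $c_i$ in positive semidefinite order, for every fixed $S$. By Lemma~\ref{hr:lem:transport}, fidelity is increasing in its second argument, so $J(S)$ is nondecreasing in each reserve, and so is the maximum. For two centers $\mathcal H,\mathcal H'$ with the same reserves, $|J(S)-J'(S)|=|\Tr((\mathcal H-\mathcal H')S)|\le\norm{\mathcal H-\mathcal H'}\Tr S=\norm{\mathcal H-\mathcal H'}$. A maximum of $1$-Lipschitz functions is $1$-Lipschitz.

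\emph{Initial bound.} Take $H=K=0$ and $c=aR\mathbf1$. The key steps are the following.
\begin{enumerate}
\item Cauchy--Schwarz for the trace norm gives $\Fid(S,W)=\norm{S^{1/2}W^{1/2}}_*\le\norm{S^{1/2}}_{\mathrm F}\norm{W^{1/2}}_{\mathrm F}=\sqrt{\Tr S\,\Tr W}$.
\item The trace of each source copy satisfies $\Tr(A_i^{1/2}F_\beta(M_i)A_i^{1/2})\le\varepsilon\Tr F_\beta(M_i)$, using $\norm{A_i}\le\varepsilon$ and $F_\beta(M_i)\succeq0$.
\item On a carrier of dimension at most $r$, concavity of $\lambda\mapsto\lambda^\alpha$ (equivalently H\"older) gives $\Tr M^\alpha\le r^{1-\alpha}(\Tr M)^\alpha$. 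Therefore $\Tr F_\beta(M_i)\le r^\beta p_i$.
\item Summing over $i$, $\sum_ip_i=\Tr(TB)\le\norm B\Tr T=b$.
\end{enumerate}
Together with the factor $4$ from $I_4$, these give $\Tr\Omega_c(S)\le4aR\varepsilon r^\beta b$. Hence $2\Fid(S,\Omega_c(S))\le2\sqrt{4aR\varepsilon r^\beta b}=4\sqrt{aR\varepsilon r^\beta b}$.

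For the root term, apply Cauchy--Schwarz to the eigenvalues of $S$ in dimension $4D$: $\Tr\sqrt S\le\sqrt{4D\,\Tr S}=\sqrt{4D}$. The center term vanishes. Maximizing over $S$ then yields the stated bound.

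None of these steps is a real obstacle. The only point needing care is the rank step 3, which requires working on the carrier $\ran A_i$, where $M_i$ has at most $r$ nonzero eigenvalues.
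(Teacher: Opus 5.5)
Your proposal is correct and follows essentially the same argument as the paper: a rank-one density on a top eigenvector of the four-block center gives the lower bound, and $\Tr F_\beta(M_i)\le r^\beta p_i$ with $\sum_i p_i=\Tr(TB)\le b$ and $\Fid(S,W)\le\sqrt{\Tr S\,\Tr W}$ gives the initial bound. The last two assertions use PSD monotonicity of the source with fidelity monotonicity, and the uniform bound $|\Tr((\mathcal H'-\mathcal H)S)|\le\norm{\mathcal H'-\mathcal H}$; your version simply spells out the H\"older and trace steps that the paper compresses.
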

\begin{proof}
A rank-one density on a top eigenvector of $\mathcal H$ gives
$\mathcal E_\theta\ge\lambda_{\max}(\mathcal H)
=\max\{\norm H,\norm K\}$. For $M\succeq0$ of trace $p$
and rank at most $r$, scalar concavity on its eigenvalues gives
$\Tr F_\beta(M)\le r^\beta p$. At the initial point,
\[
 \Tr\Omega_{aR}(S)
 \le4aR\varepsilon r^\beta\sum_i p_i
 =4aR\varepsilon r^\beta\Tr(TB)
 \le4aR\varepsilon r^\beta b.
\]
Apply $\Fid(S,W)\le\sqrt{\Tr S\,\Tr W}$ and
$\Tr\sqrt S\le\sqrt{4D}$ to prove the upper bound.
Increasing a reserve increases the source in PSD order, so fidelity
monotonicity proves the first assertion. For the second, every density
satisfies $|\Tr((\mathcal H'-\mathcal H)S)|
\le\norm{\mathcal H'-\mathcal H}$; maximize over the same domain.
\end{proof}

\section{How source concavity controls its derivative}\label{hr:sec:sourcemetric}

The source depends on the density, so changing the density changes
both arguments of fidelity. The derivative of an individual carrier
has a scalar part, corresponding to its change of trace, and a
remaining matrix part. This section bounds the latter by the
carrier's own concavity. The scalar estimate illustrates the mechanism. The matrix estimate
then controls the complete output in the quadratic norm that arises
when the transport is eliminated.

Throughout this section, $M\succ0$ acts on one fixed carrier $K$,
$p=\Tr M$, and $F_\beta(M)=p^\beta M^\alpha$. All variations
are Hermitian. The dimension of this carrier is arbitrary.

\subsection{Removing the scalar trace direction}

For a variation $U$, put
\[
 a=\frac{\Tr U}{p},\qquad U_0=U-aM.
\]
Then $\Tr U_0=0$. Homogeneity of degree one gives
\begin{equation}\label{hr:eq:centering}
 DF_\beta(M)[M]=F_\beta(M),\qquad
 D^2F_\beta(M)[M,V]=0,
\end{equation}
and hence
\[
 DF_\beta(M)[U]-aF_\beta(M)=DF_\beta(M)[U_0],
 \qquad
 D^2F_\beta(M)[U,U]=D^2F_\beta(M)[U_0,U_0].
\]
The second identity in \eqref{hr:eq:centering} follows by
differentiating the first in direction $V$. Along the centered
direction the trace factor is constant, so these two expressions
are $p^\beta D(M^\alpha)[U_0]$ and
$p^\beta D^2(M^\alpha)[U_0,U_0]$, respectively.

\begin{lemma}[Scalar source estimate]\label{hr:lem:scalarsource}
Let $Q\succeq0$ and $g(M)=\Tr(QF_\beta(M))$. Then
\begin{equation}\label{hr:eq:scalarsource}
 |Dg(M)[U]-a g(M)|^2
 \le\frac{\alpha}{\beta}\,
       g(M)\{-D^2g(M)[U,U]\}.
\end{equation}
\end{lemma}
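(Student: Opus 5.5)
The plan is to convert \eqref{hr:eq:scalarsource} into a concavity statement for a suitable homogeneous function. That statement can then be checked from known trace concavity facts, so the integral representation of Lemma~\ref{hr:lem:carrierpower} is not needed directly.

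\emph{Reduction to one real variable.} Fix $M\succ0$ and a Hermitian $U$. For small $|s|$ put $\phi(s)=g(M+sU)$ and $\pi(s)=\Tr(M+sU)=p(1+as)$, so that $\pi'=ap$ at $s=0$. For a constant $\kappa\in[0,1)$ consider $\psi=\phi^\kappa\pi^{1-\kappa}$, and write $u=\phi'/\phi$. A direct computation with logarithmic derivatives at $s=0$, using $(\log\pi)''=-a^2$, gives
\[
 \frac{\psi''}{\psi}
 =\kappa\frac{\phi''}{\phi}-\kappa(1-\kappa)(u-a)^2 .
\]
Hence $\psi''(0)\le0$ is equivalent to $(\phi'-a\phi)^2\le\phi\,(-\phi'')/(1-\kappa)$. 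Choosing $\kappa=1-\beta/\alpha$, which lies in $[0,1)$ because $\beta\le1/2\le\alpha$, gives exactly \eqref{hr:eq:scalarsource}. The degenerate case $g(M)=0$, which occurs when $Q$ vanishes on the carrier, is trivial: then $g\equiv0$ near $M$ and both sides vanish. It therefore suffices to prove that
\[
 \Psi(M)=g(M)^{\kappa}(\Tr M)^{1-\kappa}
 =(\Tr M)^{2\beta}\bigl(\Tr(QM^\alpha)\bigr)^{1-\beta/\alpha}
\]
is concave on the positive definite cone of the carrier. The equality uses $g=p^\beta\Tr(QM^\alpha)$ and $1-\kappa\alpha=2\beta$.

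\emph{Concavity of $\Psi$.} Write $N(M)=(\Tr(QM^\alpha))^{1/\alpha}$, so that $\Psi=p^{2\beta}N^{\alpha-\beta}$ with exponents $2\beta+(\alpha-\beta)=1$. The trace $p$ is linear and positive. If $N$ is concave and nonnegative, then $\Psi$ is a weighted geometric mean of two nonnegative concave functions, and is therefore concave. This follows because $(x,y)\mapsto x^{2\beta}y^{\alpha-\beta}$ is concave and nondecreasing on the positive quadrant. For $N$ itself, write $Q=B^*B$, so that $\Tr(QM^\alpha)=\Tr(BM^\alpha B^*)$. Then $N$ is the Carlen--Lieb functional $(\Tr(B M^{p}B^*)^{q/p})^{1/q}$ with $p=q=\alpha\in(0,1]$, which is concave in $M$.

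\emph{Main obstacle and fallback.} The crux is concavity of $N$ for a general $Q\succeq0$. For $Q=I$ this is elementary concavity of the Schatten $\alpha$-quasinorm on the positive cone, but a general $Q$ needs the Carlen--Lieb theorem, which is not among the results established earlier in the paper. If a self-contained argument is wanted, I would prove $\psi''\le0$ directly from \eqref{hr:eq:powerderivative}--\eqref{hr:eq:powersecond}. First, reduce by the centering identities \eqref{hr:eq:centering} to the trace-zero direction $U_0$, where the claim reads
\[
 |\Tr(Q\,Df[U_0])|^2\le\frac{\alpha}{\beta}\Tr(QM^\alpha)\Tr\bigl(Q(-D^2f[U_0,U_0])\bigr).
\]
Next, apply Cauchy--Schwarz in the pairing $(X,Y)\mapsto\int t^\alpha\Tr(QR_tXR_tY^*)\,\dd t$. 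The term $\Tr(QR_tU_0R_t)$ would be split against $R_t^{1/2}$ weights. The difficulty is that the naive weight $\int t^\alpha\Tr(QR_t)\,\dd t$ diverges at infinity. The centering must therefore be used to subtract the component along $M$, after which the surviving weight is comparable to $a_\alpha\int t^{\alpha-1}\Tr(QMR_t)\,\dd t=\Tr(QM^\alpha)$. The $1/\beta$ in the constant should come from the ratio of the Beta-type integrals $\int t^{\alpha}$ versus $\int t^{\alpha-1}$ against $\lambda/(\lambda+t)^2$, namely $\alpha/\beta$. I expect this bookkeeping to be the delicate part, and would only pursue it if citing Carlen--Lieb were not acceptable.
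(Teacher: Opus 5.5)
Your reduction step has a sign error that makes it vacuous. Your identity $\psi''/\psi=\kappa\phi''/\phi-\kappa(1-\kappa)(u-a)^2$ is correct. For $\kappa\in[0,1)$, however, both terms are nonpositive, because $\phi$ is concave along the line ($F_\beta$ is operator concave by Lemma~\ref{hr:lem:carrierpower}, and $Q\succeq0$). So $\psi''(0)\le0$ is equivalent to $\phi''/\phi\le(1-\kappa)(u-a)^2$, which always holds. It is not equivalent to $(\phi'-a\phi)^2\le\phi(-\phi'')/(1-\kappa)$. Put differently, $\Psi=g^\kappa p^{1-\kappa}$ is concave for every $\kappa\in[0,1]$ simply as a weighted geometric mean of the concave $g$ and the linear $p$. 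Proving its concavity, via Carlen--Lieb or otherwise, therefore says nothing about \eqref{hr:eq:scalarsource}. The two-dimensional carrier of Section~\ref{hr:sec:sourcemetric} makes this concrete. There \eqref{hr:eq:scalarsource} holds with equality, while your $\psi$ has $\psi''(0)/\psi(0)=-2\kappa\alpha\beta$, which is strictly negative for $\beta<1/2$.

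The repair is to put the exponent on the other side of one. With $\kappa=1/\alpha$ one has $\kappa(1-\kappa)=-\beta/\alpha^2<0$, so $\psi''/\psi=\kappa\phi''/\phi+\kappa(\kappa-1)(u-a)^2$. Then $\psi''(0)\le0$ is exactly $(\phi'-a\phi)^2\le(\alpha/\beta)\,\phi\,(-\phi'')$. For this exponent $\psi=g^{1/\alpha}p^{-\beta/\alpha}=(\Tr(QM^\alpha))^{1/\alpha}=N$. So the lemma is precisely the second-order form of concavity of $N$, which you already set out to prove, and $\Psi$ should be dropped.

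Carlen--Lieb is then unnecessary. The function $f_Q(M)=\Tr(QM^\alpha)$ is concave, positive and homogeneous of degree $\alpha$. Hence $N=f_Q^{1/\alpha}$ is homogeneous of degree one with convex superlevel set $\{N\ge1\}=\{f_Q\ge1\}$. Averaging $x/N(x)$ and $y/N(y)$ with weights proportional to $N(x),N(y)$ lands in that set, giving $N(x+y)\ge N(x)+N(y)$ and hence concavity.

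The paper's proof is the infinitesimal form of this argument. It applies Cauchy--Schwarz for the positive semidefinite form $\mathcal G_Q=-D^2f_Q(M)$ to the pair $(M,U_0)$, uses the Euler identities $\mathcal G_Q[M,U_0]=\beta Df_Q(M)[U_0]$ and $\mathcal G_Q[M,M]=\alpha\beta f_Q(M)$, and multiplies by $p^{2\beta}$. That is also the right way to run your fallback: pair $U_0$ against $M$ in the Hessian form itself rather than against an identity-based weight, and the divergent integral never appears.
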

\begin{proof}
Set $f_Q(M)=\Tr(QM^\alpha)$. Its negative Hessian
$\mathcal G_Q=-D^2f_Q(M)$ is a positive semidefinite real
bilinear form by operator concavity of the power. Homogeneity of
degree $\alpha$ gives
\[
 \mathcal G_Q[M,U_0]=\beta Df_Q(M)[U_0],
 \qquad
 \mathcal G_Q[M,M]=\alpha\beta f_Q(M).
\]
Cauchy--Schwarz for this positive semidefinite form yields
\[
 \beta^2|Df_Q(M)[U_0]|^2
 \le\alpha\beta f_Q(M)\mathcal G_Q[U_0,U_0].
\]
Multiplication by $p^{2\beta}$, followed by
\eqref{hr:eq:centering}, proves the claim.
The same argument applies if the form has a kernel; its
Cauchy--Schwarz inequality follows by requiring its quadratic
polynomial on $V+tW$ to be nonnegative for every real $t$.
\end{proof}

This scalar inequality alone does not control the transport
response. A matrix can have zero trace and still have a large
effect on fidelity. We next retain the full output.

\subsection{The inverse-Sylvester metric}

For $W\succ0$ define the Sylvester operator on Hermitian matrices by
\[
 \mathcal J_W(Y)=WY+YW,\qquad
 \norm{R}_{\mathcal J_W^{-1}}^2
       =\ip{R}{\mathcal J_W^{-1}(R)}.
\]
This is the inverse of a linear operator, not multiplication by
$W^{-1}$. In an eigenbasis of $W$ with eigenvalues $w_a>0$,
\[
 \norm{R}_{\mathcal J_W^{-1}}^2
      =\sum_{a,b}\frac{|R_{ab}|^2}{w_a+w_b}.
\]
In particular, the operator is positive definite and
self-adjoint for the real Hilbert--Schmidt inner product.
Its quadratic variational formula is
\begin{equation}\label{hr:eq:sylvestervariational}
 \norm{R}_{\mathcal J_W^{-1}}^2
 =\sup_{Y=Y^*}
       \{2\ip{R}{Y}-\ip{Y}{\mathcal J_W(Y)}\}.
\end{equation}
When $W$ is singular and $R$ is supported on $\ran W$, all
expressions are taken on that support.

\begin{lemma}[A cross-Gram inequality]\label{hr:lem:crossgram}
Let $A,B$ be rectangular matrices with a common column space, and
put $W=AA^*$ and $R=AB^*+BA^*$. If the outputs are restricted
to the support of $W$, then
\begin{equation}\label{hr:eq:crossgram}
 \norm{R}_{\mathcal J_W^{-1}}^2\le2\Tr(BB^*).
\end{equation}
The same statement holds for Hilbert--Schmidt row operators
with a Hilbert space as their column domain.
\end{lemma}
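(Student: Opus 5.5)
The plan is to use the variational formula \eqref{hr:eq:sylvestervariational}: it suffices to show that for every Hermitian $Y$ supported on $\ran W$,
\[
 2\ip{R}{Y}-\ip{Y}{\mathcal J_W(Y)}\le2\Tr(BB^*),
\]
and then take the supremum. Since the outputs are restricted to the support of $W$, we may take $Y$ supported there. If $W$ is singular, we work on $\ran W$ throughout. The column spaces of $A$ and $B$ lie in a common space, and $\ran W=\ran A$. Any component of $B$ outside $\ran A$ contributes to $R$ only in off-support blocks, which we discard; this can only decrease $\Tr(BB^*)$ on the relevant part, so the inequality remains valid.

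The key computation is direct expansion. First,
\[
 \ip{R}{Y}=\operatorname{Re}\Tr(AB^*Y)+\operatorname{Re}\Tr(BA^*Y)=2\operatorname{Re}\Tr(B^*YA).
\]
Second, $\ip{Y}{\mathcal J_W(Y)}=2\Tr(YAA^*Y)=2\norm{A^*Y}_{\mathrm F}^2$. Writing $C=YA$, the quantity to bound becomes
\[
 4\operatorname{Re}\Tr(B^*C)-2\norm{C}_{\mathrm F}^2 .
\]
By Cauchy--Schwarz followed by AM--GM,
\[
 4\operatorname{Re}\Tr(B^*C)\le4\norm B_{\mathrm F}\norm C_{\mathrm F}\le2\norm B_{\mathrm F}^2+2\norm C_{\mathrm F}^2 .
\]
Hence the expression is at most $2\norm B_{\mathrm F}^2=2\Tr(BB^*)$, which proves \eqref{hr:eq:crossgram}.

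For Hilbert--Schmidt row operators with a Hilbert space as column domain, the same computation applies verbatim. All traces involved are traces of trace-class products of Hilbert--Schmidt operators, and $W=AA^*$ acts on the finite-dimensional row space. The only point needing care is justifying the cyclicity $\Tr(AB^*Y)=\Tr(B^*YA)$, which holds for products of two Hilbert--Schmidt operators. I expect no real obstacle. The mildest subtlety is the support convention when $W$ is singular, and this is handled by restricting $Y$ to $\ran W$ in the variational formula.
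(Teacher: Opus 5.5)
Your proof is correct and follows the paper's argument: both evaluate the variational expression as $4\operatorname{Re}\Tr(B^*YA)-2\norm{YA}_{\mathrm F}^2$ and bound it by $2\norm B_{\mathrm F}^2$, and your Cauchy--Schwarz/AM--GM step is just the paper's completed square $2\norm B_{\mathrm F}^2-2\norm{YA-B}_{\mathrm F}^2$. Your handling of the singular case (restrict $Y$ to $\ran W$, so only the compression of $B$ matters) and of the Hilbert--Schmidt case also matches the paper's.
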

\begin{proof}
For Hermitian $Y$, the expression in
\eqref{hr:eq:sylvestervariational} is
\[
 4\operatorname{Re}\Tr(B^*YA)-2\norm{YA}_{\mathrm F}^2
 =2\norm B_{\mathrm F}^2-2\norm{YA-B}_{\mathrm F}^2.
\]
Taking its supremum proves the bound. In the singular case,
compress $A$ and $B$ to the output support before applying this
calculation; compression can only decrease $\norm B_{\mathrm F}$.
The proof for Hilbert--Schmidt operators is identical.
\end{proof}

A completely positive output map will mean a map with a Kraus
representation
$\mathcal A(X)=\sum_j K_jXK_j^*$, where each $K_j$ maps the
carrier into the output space. The dimensions of the two spaces
may differ. No unital or trace-preserving assumption is made.

\begin{proposition}[The source estimate in the transport metric]
\label{hr:prop:matrixsource}
For such a map $\mathcal A$, define
\begin{equation}\label{hr:eq:matrixsourcequantities}
 \begin{split}
 W&=\mathcal A(F_\beta(M)),\\
 R&=\mathcal A(DF_\beta(M)[U]-aF_\beta(M)),\\
 C&=-\Tr\mathcal A(D^2F_\beta(M)[U,U]).
 \end{split}
\end{equation}
Then $C\ge0$, $R$ is supported on $\ran W$, and
\begin{equation}\label{hr:eq:matrixsource}
 \norm{R}_{\mathcal J_W^{-1}}^2
       \le\frac{\alpha}{2\beta}C.
\end{equation}
The constant is independent of both dimensions and all eigenvalue
ratios of $M$.
\end{proposition}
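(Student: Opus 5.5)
The plan is to realize $W$ and $R$ in the form of Lemma~\ref{hr:lem:crossgram}, with row operators built from the integral representation in Lemma~\ref{hr:lem:carrierpower}, and then to compare $\Tr(BB^*)$ with $C$. First I would reduce to the centered direction. By \eqref{hr:eq:centering}, $R=p^\beta\mathcal A(Df(M)[U_0])$ and $C=p^\beta\Tr\mathcal A(-D^2f(M)[U_0,U_0])$ with $f(M)=M^\alpha$, while $W=p^\beta\mathcal A(M^\alpha)$. The scalar factor $p^\beta$ appears once in each of $R$, $W$ and $C$, and the metric $\mathcal J_W^{-1}$ scales inversely with $W$. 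So it cancels, and I may take $p^\beta=1$. The claim $C\ge0$ follows from operator concavity of $F_\beta$ (Lemma~\ref{hr:lem:carrierpower}) and positivity of $\mathcal A$. Support of $R$ in $\ran W$ follows because every Kraus output of $R_tU_0R_t$ lies in $\ran K_j$, and $\ran W=\sum_j\ran K_j$ since $M^\alpha\succ0$ on the carrier.

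For the factorization, use $M^\alpha=a_\alpha\int_0^\infty t^{\alpha-1}MR_t\,\dd t$ on the Hilbert space $L^2(\dd t)\otimes\C^{\text{Kraus}}\otimes K$. Define the row operator $\mathsf A$ with components $\sqrt{a_\alpha}\,t^{(\alpha-1)/2}K_jM^{1/2}R_t^{1/2}$, so that $\mathsf A\mathsf A^*=W$. Take a first candidate $\mathsf B_0$ with components $\tfrac12\sqrt{a_\alpha}\,t^{(\alpha+1)/2}K_jR_tU_0R_t^{1/2}M^{-1/2}$. Because $M$ commutes with $R_t$, one has $\mathsf A\mathsf B_0^*+\mathsf B_0\mathsf A^*=a_\alpha\int t^\alpha K_j(R_tU_0R_t)K_j^*=R$ by \eqref{hr:eq:powerderivative}. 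Lemma~\ref{hr:lem:crossgram} then gives $\norm R^2_{\mathcal J_W^{-1}}\le2\Tr\mathsf B_0\mathsf B_0^*$. In an eigenbasis $m_a$ of $M$, with $Q=\sum_jK_j^*K_j$, this quantity is an explicit double sum over $|(U_0)_{ab}|^2$ weighted by integrals of $t^{\alpha+1}/(m_b(m_a+t)^2(m_b+t))$. By \eqref{hr:eq:powersecond}, $C$ is the analogous sum with weights $2a_\alpha\int t^\alpha/((m_a+t)(m_b+t)(m_c+t))$.

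The main obstacle is that $\mathsf B_0$ alone cannot work. Its weight diverges relative to $C$ as $m_b\to0$, and it yields no factor $1/\beta$. The freedom to fix is this: $\mathsf B$ is determined only up to $\mathsf B\mapsto\mathsf B+\mathsf E$ with $\mathsf A\mathsf E^*+\mathsf E\mathsf A^*=0$, and, at the level of $U$, up to the scalar trace direction $M$ that was removed. I would therefore use the variational formula \eqref{hr:eq:sylvestervariational} directly rather than a single factorization. For Hermitian $Y$, expand $2\ip RY-\ip{Y}{\mathcal J_W(Y)}$ through the Kraus maps. Bound it, termwise in $t$, by the Cauchy--Schwarz inequality in the positive form $-D^2f[\cdot,\cdot]$, as in Lemma~\ref{hr:lem:scalarsource}, now with the matrix test $\mathcal A^*(Y)$ in place of $Q$.

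The key identity is $\mathcal G[M,U_0]=\beta\,Df[U_0]$, tested against $\mathcal A^*(Y)$, together with $\mathcal G[M,M]=\alpha\beta f$. These give $2\ip RY\le \tfrac{2}{\beta}\sqrt{\mathcal G_Y[M,M]\,\mathcal G_Y[U_0,U_0]}$, where $\mathcal G_Y[M,M]=\alpha\beta\Tr(\mathcal A^*(Y)M^\alpha)$. Since this quantity is linear rather than quadratic in $Y$, I would instead use the positive-operator version of Cauchy--Schwarz. Split $\mathcal G_Y$ through the resolvent integral as $\int t^\alpha\Tr(X_t^*YX_t)$-type terms. Bound $\ip{Y}{\mathcal J_W(Y)}$ from below by the corresponding $\mathcal G_{Y^2}[M,M]/(\alpha\beta)$-type quantity, since $2\Tr(Y^2W)=\ip{Y}{\mathcal J_W(Y)}$. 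Then optimize the resulting quadratic in the scale of $Y$ to obtain $\frac{\alpha}{2\beta}C$.

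What I would try first is the pointwise-in-$t$ inequality
\[
 |\Tr(YK R_tU_0R_tK^*)|^2\le\Tr(Y^2KMR_tK^*)\cdot\Tr(KR_tU_0R_tU_0R_tK^*)\cdot(\text{weight}),
\]
integrated against $a_\alpha t^\alpha\,\dd t$. Homogeneity supplies the missing $\beta$ through $\int t^{\alpha-1}$ versus $\int t^\alpha$. Matching the constant $\alpha/(2\beta)$ exactly is where I expect most of the work.
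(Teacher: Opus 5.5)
Your set-up is sound. The reduction to $p^\beta=1$ and to the centered direction $U_0$, the arguments for $C\ge0$ and for the support of $R$, and your factorization $\mathsf A\mathsf A^*=W$, $\mathsf A\mathsf B_0^*+\mathsf B_0\mathsf A^*=R$ are all correct, and you rightly see that $\mathsf B_0$ leaves a dependence on $M^{-1}$. However, the step that actually produces $\alpha/(2\beta)$ is never carried out, and the concrete routes you give for it fail as stated.

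\emph{First route.} The inequality $2\ip RY\le\frac2\beta\sqrt{\mathcal G_Y[M,M]\,\mathcal G_Y[U_0,U_0]}$ needs the form $\Tr(\mathcal A^*(Y)G(\cdot,\cdot))$ to be positive, i.e.\ $\mathcal A^*(Y)\succeq0$. That fails for general Hermitian $Y$.

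\emph{Second route.} The pointwise inequality you would try first is false for every weight independent of the spectrum of $M$. Take the identity output map, $M=\diag(m_1,m_2)$, $U_0=\diag(1,-1)$ and $Y=\diag(0,1)$. The left side is $(m_2+t)^{-4}\to t^{-4}$ as $m_2\downarrow0$. Meanwhile $\Tr(Y^2MR_t)=m_2/(m_2+t)\to0$ and $\Tr(R_tU_0R_tU_0R_t)$ stays bounded. The reason is that $\beta Df(M)[U_0]=G(M,U_0)$ holds only after integrating in $t$. Pairing the integrand $R_tU_0R_t$ of $Df$ against the integrand $MR_t$ of $M^\alpha$ just reintroduces the $M^{-1/2}$ of $\mathsf B_0$. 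Your idea of splitting through the resolvent integral and comparing with $\Tr(Y^2W)$ points the right way, but it needs a different factorization.

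\emph{The missing idea} is to factor the negative Hessian rather than the power. With $G=-D^2f(M)$, let $\mathcal B_V$ be the row operator with columns $\sqrt{2a_\alpha t^\alpha}\,K_jR_tVR_t^{1/2}$, which is linear in $V$. Then \eqref{hr:eq:powersecond} gives $\mathcal B_U\mathcal B_V^*+\mathcal B_V\mathcal B_U^*=2\mathcal A(G(U,V))$. The identities $G(M,M)=\alpha\beta f(M)$ and $G(M,U_0)=\beta Df(M)[U_0]$ turn all three quantities into Gram expressions in the same two factors:
\[
 W=\frac{\mathcal B_M\mathcal B_M^*}{\alpha\beta},\qquad
 R=\frac{\mathcal B_M\mathcal B_{U_0}^*+\mathcal B_{U_0}\mathcal B_M^*}{2\beta},\qquad
 C=\Tr(\mathcal B_{U_0}\mathcal B_{U_0}^*).
\]
Lemma~\ref{hr:lem:crossgram} with $A=\mathcal B_M/\sqrt{\alpha\beta}$ and $B=\mathcal B_{U_0}$ gives $R=\frac{\sqrt\alpha}{2\sqrt\beta}(AB^*+BA^*)$. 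Hence $\norm{R}_{\mathcal J_W^{-1}}^2\le\frac{\alpha}{4\beta}\cdot2C$, with no eigenvalue dependence.

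In your variational language, the Cauchy--Schwarz step should be the Hilbert--Schmidt one:
\[
 \ip YR=\beta^{-1}\operatorname{Re}\Tr(\mathcal B_{U_0}^*Y\mathcal B_M)\le\sqrt{\alpha/\beta}\sqrt{\Tr(Y^2W)\,C},
\]
using $\norm{Y\mathcal B_M}_{\mathrm F}^2=\alpha\beta\Tr(Y^2W)$. Maximizing $2\ip YR-2\Tr(Y^2W)$ then gives exactly $\frac{\alpha}{2\beta}C$. Pointwise in $t$, this is Cauchy--Schwarz for the columns $YK_jR_tMR_t^{1/2}$ and $K_jR_tU_0R_t^{1/2}$. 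Their squared norms carry the integrands $R_tMR_tMR_t$ of $G(M,M)$ and $R_tU_0R_tU_0R_t$ of $G(U_0,U_0)$, both with weight $t^\alpha$. That matching weight is what makes the pointwise argument uniform.
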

\begin{proof}
Write $f(M)=M^\alpha$ and
$G(U,V)=-D^2f(M)[U,V]$. For a variation $V$, form a row operator
whose columns have the discrete index $j$ and continuous index
$t>0$:
\begin{equation}\label{hr:eq:resolventcolumns}
 \mathcal B_V(j,t)=
    \sqrt{2p^\beta a_\alpha t^\alpha}\,
       K_jR_tV R_t^{1/2},
 \qquad R_t=(M+tI)^{-1}.
\end{equation}
Concretely, its column domain is a finite direct sum of
$L^2((0,\infty);K)$ spaces. The integrability estimates in
Lemma~\ref{hr:lem:carrierpower} show that these operators are
Hilbert--Schmidt. Multiplying the columns and integrating gives
\begin{equation}\label{hr:eq:gramrepresentation}
 \mathcal B_U\mathcal B_V^*
       +\mathcal B_V\mathcal B_U^*
       =2p^\beta\mathcal A(G(U,V)).
\end{equation}
Indeed the two products have integrands
$2p^\beta a_\alpha t^\alpha
K_jR_t(UR_tV+VR_tU)R_tK_j^*$.
This is exactly \eqref{hr:eq:powersecond}.

The degree-$\alpha$ homogeneous identities for the power give
\begin{equation}\label{hr:eq:powerhomogeneity}
 G(M,U_0)=\beta Df(M)[U_0],\qquad
 G(M,M)=\alpha\beta f(M).
\end{equation}
In Lemma~\ref{hr:lem:crossgram}, take
\[
 A=\frac{\mathcal B_M}{\sqrt{\alpha\beta}},
 \qquad B=\mathcal B_{U_0}.
\]
Equations \eqref{hr:eq:centering}--\eqref{hr:eq:powerhomogeneity}
give, with the quantities in \eqref{hr:eq:matrixsourcequantities},
\[
 AA^*=W,\qquad \Tr(BB^*)=C,\qquad
 R=\frac{\sqrt\alpha}{2\sqrt\beta}(AB^*+BA^*).
\]
Applying \eqref{hr:eq:crossgram} proves
\eqref{hr:eq:matrixsource}, including its factor $1/2$.

For completeness, $F_\beta(M)$ is positive definite on the
carrier. Thus $\ran\mathcal A(F_\beta(M))$ is the span of the
ranges of the Kraus operators $K_j$. Every derivative output in
\eqref{hr:eq:matrixsourcequantities} has its range in that same
space. The support convention in the statement is therefore
legitimate. Positivity of $C$ follows either from the Gram
representation or from concavity of $F_\beta$ and positivity of
$\mathcal A$.
\end{proof}

\paragraph{A two-dimensional carrier.}
The factor $1/\beta$ in the source estimates can already be seen
in a diagonal carrier of dimension two. Fix $p>0$, let
$\alpha=1-\beta$, and consider the constant-trace path
\[
 M(t)=\frac p2\diag(1+t,1-t),\qquad |t|<1.
\]
This path transfers mass between two directions inside a single
original label. Write $f=p\,2^{-\alpha}$. Direct substitution gives
\[
 F_\beta(M(t))
   =f\diag\bigl((1+t)^\alpha,(1-t)^\alpha\bigr).
\]
Take the identity output map in
Proposition~\ref{hr:prop:matrixsource}, and evaluate at $t=0$.
The variation $U=M'(0)$ has zero trace, so the centering scalar
$a=\Tr U/p$ is zero. The quantities in
\eqref{hr:eq:matrixsourcequantities} are therefore
\[
 W=fI_2,\qquad R=f\alpha\diag(1,-1),\qquad
 C=2f\alpha\beta.
\]
Since $\mathcal J_W(Y)=2fY$, the inverse-Sylvester energy is
\begin{equation}\label{hr:eq:carrierexample}
 \norm{R}_{\mathcal J_W^{-1}}^2
   =\frac{\Tr(R^2)}{2f}
   =f\alpha^2
   =\frac{\alpha}{2\beta}\,C.
\end{equation}
Thus the matrix source estimate is attained exactly, despite
$\Tr R=0$. The scalar estimate is sharp as well: for
$Q=\diag(1,0)$ and $g(t)=\Tr(QF_\beta(M(t)))$, one has
$g(0)=f$, $g'(0)=f\alpha$, and
$-g''(0)=f\alpha\beta$, giving equality in
\eqref{hr:eq:scalarsource}.

The initial trace budget here is
$\Tr F_\beta(M(0))=p\,2^\beta$. As $\beta\downarrow0$, this
budget approaches $p$, while $\norm R$ approaches $p/2$ and
$C$ decreases to zero at rate $p\beta$. The inverse factor
$1/\beta$ compensates for this loss of source curvature.
On a one-dimensional carrier every trace-zero variation vanishes,
so this internal response has no rank-one counterpart. The example
establishes sharpness of the local source constants; the optimal
rank dependence of the final signing bound is a separate question.
\space

\subsection{Aggregation at the optimizing density}

Fix positive reserves on $I$ and the actual optimizing pair $S,Z$.
All matrices involving $Z$ below act on the source support
$\mathcal K$ from \eqref{hr:eq:livesupport}. Define
\begin{equation}\label{hr:eq:balanced}
 P=Z^{-1/2}S_0Z^{-1/2}=Z^{1/2}\Omega_c(S)Z^{1/2},\qquad
 \pi X=Z^{1/2}YZ^{1/2}.
\end{equation}
Here $X$ is an arbitrary full Hermitian density variation, and $P$
is positive definite. Put
\begin{equation}\label{hr:eq:balancedowners}
 \tau_i=\Tr(Z\Omega_i(S)),\quad q_i=\tau_i/p_i,\qquad
 V_i=Z^{1/2}B_iZ^{1/2},\quad
 W_i=c_iZ^{1/2}\Omega_i(S)Z^{1/2}.
\end{equation}
For nonzero active inputs $p_i,\tau_i,q_i$ are positive. Cyclicity
and the transport equation give
\begin{equation}\label{hr:eq:balancedidentities}
 \sum_iW_i=P,\qquad \Tr(PV_i)=p_i,\qquad \Tr W_i=c_i\tau_i.
\end{equation}
Define the balanced derivative and its scalar part by
\begin{equation}\label{hr:eq:channels}
 \Phi(Y)=Z^{1/2}D\Omega_c(S)[X]Z^{1/2},\qquad
 \Gamma(Y)=\sum_i W_i\frac{\Tr(V_iY)}{p_i}.
\end{equation}
The source depends only on $\pi X$, so these definitions do not
discard a dependence on off-support entries. Both maps fix $P$ by
homogeneity and \eqref{hr:eq:balancedidentities}. Set
\begin{equation}\label{hr:eq:sourcecost}
 \mathcal C_{\rm src}(X)=-\Tr(ZD^2\Omega_c(S)[X,X])\ge0,
 \qquad \sigma=\frac{2\beta}{1+\beta}.
\end{equation}

\begin{lemma}[The source pays for internal response]\label{hr:lem:aggregation}
For every full density variation $X$ and its balanced compression $Y$,
\begin{align}
 \norm{(\Phi-\Gamma)Y}_{\mathcal J_P^{-1}}^2
 &\le\frac{1-\beta}{2\beta}\mathcal C_{\rm src}(X),
                                      \label{hr:eq:conditionalmetric}\\
 \mathcal C_{\rm src}(X)+\norm{(I-\Phi)Y}_{\mathcal J_P^{-1}}^2
 &\ge\sigma\norm{(I-\Gamma)Y}_{\mathcal J_P^{-1}}^2.
                                      \label{hr:eq:metricbridge}
\end{align}
\end{lemma}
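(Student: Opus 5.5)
The plan is to write $(\Phi-\Gamma)Y$ as a sum over labels of outputs of completely positive maps applied to the centered carrier derivatives. Then \eqref{hr:eq:conditionalmetric} follows from Proposition~\ref{hr:prop:matrixsource} together with a subadditivity property of the inverse-Sylvester energy. The second inequality \eqref{hr:eq:metricbridge} will then be a weighted triangle inequality.

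\emph{Step 1 (decomposition).} Let $T_X=\sum_j X_{jj}$ be the marginal of $X$. Put $U_i=A_i^{1/2}T_XA_i^{1/2}$ on the carrier $\ran A_i$, and $a_i=\Tr U_i/p_i$. Since $\pi X=Z^{1/2}YZ^{1/2}$, we get
\[
\Tr(V_iY)=\Tr(B_i\,\pi X)=\Tr(A_iT_X)=\Tr U_i .
\]
Hence the coefficient $\Tr(V_iY)/p_i$ in $\Gamma$ is exactly the centering scalar $a_i$ of \S\ref{hr:sec:sourcemetric}. Define the completely positive map
\[
\mathcal A_i(Q)=c_i Z^{1/2}\bigl(I_4\otimes A_i^{1/2}QA_i^{1/2}\bigr)Z^{1/2},
\]
whose Kraus operators are $\sqrt{c_i}\,Z^{1/2}(e_j\otimes A_i^{1/2})$. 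Then $W_i=\mathcal A_i(F_\beta(M_i))$ and
\[
(\Phi-\Gamma)Y=\sum_i R_i,\qquad R_i=\mathcal A_i\bigl(DF_\beta(M_i)[U_i]-a_iF_\beta(M_i)\bigr).
\]
Similarly, $\mathcal C_{\rm src}(X)=\sum_i C_i$, where $C_i=-\Tr\mathcal A_i(D^2F_\beta(M_i)[U_i,U_i])$. Proposition~\ref{hr:prop:matrixsource} then gives, for each $i\in I$,
\[
\norm{R_i}^2_{\mathcal J_{W_i}^{-1}}\le\frac{\alpha}{2\beta}C_i ,
\]
with $R_i$ supported on $\ran W_i$.

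\emph{Step 2 (aggregation).} By \eqref{hr:eq:sylvestervariational}, the map $(R,W)\mapsto\norm R^2_{\mathcal J_W^{-1}}$ is a supremum of the functions $2\ip RY-\ip{W}{2Y^2}$. These are jointly linear in $(R,W)$, because $\ip{Y}{\mathcal J_W(Y)}=2\Tr(WY^2)$. The map is therefore jointly convex and positively homogeneous of degree one, hence subadditive. Applying subadditivity with $\sum_iW_i=P$ from \eqref{hr:eq:balancedidentities} gives
\[
\norm{\textstyle\sum_iR_i}^2_{\mathcal J_P^{-1}}\le\sum_i\norm{R_i}^2_{\mathcal J_{W_i}^{-1}}.
\]
Summing the estimates of Step~1 proves \eqref{hr:eq:conditionalmetric}. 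The singular-$W_i$ convention is harmless: to evaluate the term for $W_i$ one may restrict the test matrices $Y$ to $\ran W_i$, because $R_i$ is supported there, and this only lowers each supremum consistently. I expect this support bookkeeping, together with correctly matching $\Gamma$ to the centering scalars, to be the main point requiring care. The analytic content is entirely in Proposition~\ref{hr:prop:matrixsource}.

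\emph{Step 3 (the bridge).} Write $u=(I-\Phi)Y$ and $v=(\Phi-\Gamma)Y$, so that $(I-\Gamma)Y=u+v$. The norm $\norm\cdot_{\mathcal J_P^{-1}}$ comes from an inner product, so for every $s>0$
\[
\norm{u+v}^2\le(1+s)\norm u^2+(1+s^{-1})\norm v^2 .
\]
Choose $s=\alpha/(2\beta)$. Then $1+s=(1+\beta)/(2\beta)=\sigma^{-1}$, and
\[
(1+s^{-1})\frac{\alpha}{2\beta}=\frac{1+\beta}{\alpha}\cdot\frac{\alpha}{2\beta}=\sigma^{-1}.
\]
Combining this with \eqref{hr:eq:conditionalmetric} gives
\[
\sigma\norm{u+v}^2\le\norm u^2+\mathcal C_{\rm src}(X),
\]
which is \eqref{hr:eq:metricbridge}.
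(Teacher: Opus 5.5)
Your proposal is correct and follows the paper's proof essentially step for step. It uses the same per-label completely positive maps $\mathcal A_i$ fed into Proposition~\ref{hr:prop:matrixsource}, the same common-test-matrix subadditivity from \eqref{hr:eq:sylvestervariational}, and the same weighted Cauchy--Schwarz bridge; your $s=\alpha/(2\beta)$ is the reciprocal of the paper's $t=2\beta/(1-\beta)$.

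One wording fix in Step~2: compressing a test matrix to $\ran W_i$ does not lower the supremum. Since $R_i$ is supported there, compression leaves $\ip{R_i}{Y}$ unchanged and can only decrease $\Tr(W_iY^2)$. Hence the unrestricted and supported suprema coincide, which is the direction your subadditivity step actually needs.
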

\begin{proof}
Apply Proposition~\ref{hr:prop:matrixsource} to each carrier and the
completely positive output map
\[
 \mathcal A_i(U)=c_iZ^{1/2}
       (I_4\otimes A_i^{1/2}UA_i^{1/2})Z^{1/2}.
\]
Output duplication and congruence are completely positive. The
current output is $W_i$, its centered derivative is
\[
 R_i=c_iZ^{1/2}\left(D\Omega_i(S)[X]
           -\frac{\Tr(B_iX)}{p_i}\Omega_i(S)\right)Z^{1/2},
\]
and its curvature cost is
$-c_i\Tr(ZD^2\Omega_i(S)[X,X])$. The costs sum to
$\mathcal C_{\rm src}$ and $\sum_iR_i=(\Phi-\Gamma)Y$.
The variational formula \eqref{hr:eq:sylvestervariational} gives
\[
 \norm{\sum_i R_i}_{\mathcal J_{\sum_i W_i}^{-1}}^2
 \le\sum_i\norm{R_i}_{\mathcal J_{W_i}^{-1}}^2:
\]
a common test matrix cannot exceed the sum of the separate suprema.
This proves \eqref{hr:eq:conditionalmetric}, with each singular
$W_i$ interpreted on its support.

For vectors $A,B$ in any Hilbert space and $t>0$, completing a square
gives $\norm A^2+t\norm B^2\ge t\norm{A+B}^2/(1+t)$.
Use the inverse-Sylvester norm, $A=(I-\Phi)Y$,
$B=(\Phi-\Gamma)Y$, and $t=2\beta/(1-\beta)$.
Equation~\eqref{hr:eq:conditionalmetric} gives the remaining claim.
\end{proof}

Holding the reserves fixed while differentiating uses all of the
source curvature in this estimate. The resulting factor is
$\sigma^{-1}=O(1/\beta)$. The next section shows how to choose
coefficient forces that admit a controlled response to the remaining
scalar channel $\Gamma$.

\section{The response and a direction of descent}\label{hr:sec:curvature}

We analyze the value at fixed positive reserves before varying those
reserves along a step. This separates two effects: the response of the
optimizing density to a center force, and the payment obtained by
reducing the source. The scalar channel $\Gamma$ left by the previous
section need not be self-adjoint. Instead of inverting $I-\Gamma$
on every matrix, we construct compatible forces and corresponding
solutions of its adjoint equation.

\subsection{Linearizing the optimizing equations}

Fix the active set $I$, the density $S$, and the transport $Z$.
Let $D$ be a Hermitian center force supported on $\mathcal K$, and
let $A=Z^{1/2}DZ^{1/2}$ be its balanced form. Write
$\mathcal C_{\rm root}(X)=-D^2(2\theta\Tr\sqrt S)[X,X]$.

\begin{lemma}[The center response]\label{hr:lem:envelope}
Holding $c$ fixed, the second derivative of the optimized value is
\begin{equation}\label{hr:eq:envelope}
 D^2_{\mathcal H}\mathcal E_\theta[D,D]
 =\sup_{\substack{X=X^*\\\Tr X=0}}
 \left\{2\ip AY-\mathcal C_{\rm src}(X)
   -\mathcal C_{\rm root}(X)
   -\norm{(I-\Phi)Y}_{\mathcal J_P^{-1}}^2\right\},
 \quad \pi X=Z^{1/2}YZ^{1/2}.
\end{equation}
If a supported Hermitian matrix $U$ satisfies
$(I-\Gamma^*)U=A$, then
\begin{equation}\label{hr:eq:responsebound}
 \frac12D^2_{\mathcal H}\mathcal E_\theta[D,D]
 \le\sigma^{-1}\Tr(PU^2).
\end{equation}
\end{lemma}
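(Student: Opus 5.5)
The plan has two parts: first derive the variational (envelope) formula \eqref{hr:eq:envelope} from the saddle structure of the potential, then deduce \eqref{hr:eq:responsebound} from it using Lemma~\ref{hr:lem:aggregation} and a Cauchy--Schwarz argument in the inverse-Sylvester metric.

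\emph{Step 1: the value as a saddle point.} By Lemma~\ref{hr:lem:transport}, the value is
\[
 \mathcal E_\theta=\max_S\inf_Z G(S,Z),\qquad
 G(S,Z)=\Tr(\mathcal H S)+\Tr(S_0Z^{-1})+\Tr(\Omega_c(S)Z)+2\theta\Tr\sqrt S .
\]
At fixed $c$, Proposition~\ref{hr:prop:regularity} makes the optimizing pair $(S,Z)$ smooth in $\mathcal H$, with $S\succ0$ and $Z\succ0$ on $\mathcal K$. The envelope theorem gives $D_{\mathcal H}\mathcal E_\theta[D]=\Tr(DS)$. Differentiating once more, $D^2\mathcal E_\theta[D,D]=\Tr(DX_*)$, where $X_*$ is the trace-zero solution of the linearization of \eqref{hr:eq:stationarity}. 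The standard identity for the maximum of a concave family linear in the parameter rewrites this as
\[
 D^2\mathcal E_\theta[D,D]=\sup_{X}\bigl\{2\Tr(DX)-\mathcal Q(X)\bigr\},
\]
where $\mathcal Q$ is the negative Hessian of $S\mapsto\inf_Z G$ on the trace-zero tangent space. It is positive definite there by \eqref{hr:eq:rootcurvature}, so the supremum is attained exactly at $X_*$.

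\emph{Step 2: computing $\mathcal Q$.} Three contributions enter.
\begin{itemize}
\item The root term gives $\mathcal C_{\rm root}(X)$.
\item The source term gives $-\Tr(ZD^2\Omega_c(S)[X,X])=\mathcal C_{\rm src}(X)$.
\item The transport contributes through elimination of $Z$. Parametrize $Z\mapsto Z^{1/2}(I+E)Z^{1/2}$ and expand $G$ to second order jointly in $(X,E)$, using $\pi X=Z^{1/2}YZ^{1/2}$, \eqref{hr:eq:balanced}, and $\Phi$ from \eqref{hr:eq:channels}.
\end{itemize}
The $Z^{-1}$ term contributes $\Tr(PE^2)=\tfrac12\ip E{\mathcal J_P E}$. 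The mixed terms contribute $-\ip{(I-\Phi)Y}{E}$, up to the factor conventions to be fixed carefully. Minimizing over $E$ with \eqref{hr:eq:sylvestervariational} produces $-\norm{(I-\Phi)Y}^2_{\mathcal J_P^{-1}}$ in the concave Hessian. Since $D$ is supported on $\mathcal K$, we have $\Tr(DX)=\Tr(D\,\pi X)=\ip AY$. Together these give \eqref{hr:eq:envelope}.

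I expect this bookkeeping to be the main obstacle. One must check that the zeroth- and first-order terms cancel by \eqref{hr:eq:stationarity}. One must also track the factor $2$ in the Sylvester quadratic, and confirm that off-support density entries enter only through $\mathcal C_{\rm root}$. The block-diagonal structure from Proposition~\ref{hr:prop:regularity} should let us ignore the four-block coupling.

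\emph{Step 3: the bound.} Drop $\mathcal C_{\rm root}\ge0$ and apply \eqref{hr:eq:metricbridge}. Every term in the supremum is then at most $2\ip AY-\sigma\norm{(I-\Gamma)Y}^2_{\mathcal J_P^{-1}}$. The trace constraint only shrinks the supremum, so it can be discarded for an upper bound. Substituting $A=(I-\Gamma^*)U$ gives
\[
 \ip AY=\ip U{(I-\Gamma)Y}\le\norm U_{\mathcal J_P}\,\norm{(I-\Gamma)Y}_{\mathcal J_P^{-1}},
 \qquad \norm U_{\mathcal J_P}^2=\ip U{PU+UP}.
\]
Maximizing the scalar quadratic $2ms-\sigma s^2\le m^2/\sigma$ then yields
\[
 D^2\mathcal E_\theta[D,D]\le\frac{\ip U{PU+UP}}{\sigma}=\frac{2\Tr(PU^2)}{\sigma},
\]
which is \eqref{hr:eq:responsebound} after halving.
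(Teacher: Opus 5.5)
Your proposal is correct and follows essentially the same route as the paper. The paper also makes a joint second-order expansion of the transport objective and eliminates the transport variation through the Sylvester quadratic; your Taylor-coefficient bookkeeping $\Tr(PE^2)-\ip{(I-\Phi)Y}{E}$ is already consistent, giving $-\tfrac12\norm{(I-\Phi)Y}^2_{\mathcal J_P^{-1}}$ before doubling. It then drops $\mathcal C_{\rm root}$, applies \eqref{hr:eq:metricbridge}, uses $\ip AY=\ip U{(I-\Gamma)Y}$, and evaluates $\sup_W\{2\ip UW-\sigma\norm W^2_{\mathcal J_P^{-1}}\}=\sigma^{-1}\ip U{\mathcal J_PU}$, of which your Cauchy--Schwarz step in the $\mathcal J_P$/$\mathcal J_P^{-1}$ pairing is the dual form. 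The block-diagonal reduction you mention is not needed, since the full density tangent can be kept throughout.
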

\begin{proof}
Use the smooth local transport objective
\[
 \mathcal L(\mathcal H,S,Z)=\Tr(\mathcal H S)
       +\Tr(S_0Z^{-1})+\Tr(Z\Omega_c(S))+2\theta\Tr\sqrt S.
\]
Vary the center by $tD$, the density by $tX$, and the transport
by $tZ^{1/2}VZ^{1/2}$. The second-order transport quadratic is
\[
 \ip V{\mathcal J_P V}-2\ip V{(I-\Phi)Y}.
\]
The first term comes from differentiating $Z^{-1}$ twice; the mixed
terms come from the density and the source. Minimizing over $V$
gives $-\norm{(I-\Phi)Y}_{\mathcal J_P^{-1}}^2$.
The remaining terms are $2\Tr(DX)-\mathcal C_{\rm src}(X)
-\mathcal C_{\rm root}(X)$, and $\Tr(DX)=\ip AY$.

This quadratic elimination computes the derivative of the actual
optimizing branch. The transport Hessian is positive definite; after
its elimination, the density Hessian on the trace-zero tangent is
strictly negative by \eqref{hr:eq:rootcurvature}. Differentiating
\eqref{hr:eq:stationarity} therefore gives a uniquely solvable
linear system for the first variations. Maximizing the remaining
density quadratic is its Schur-complement expression, proving
\eqref{hr:eq:envelope}. The full density tangent is retained in this
calculation, including its off-support entries.

Apply \eqref{hr:eq:metricbridge} and drop the nonnegative root cost.
If $(I-\Gamma^*)U=A$, then
$\ip AY=\ip U{(I-\Gamma)Y}$. Enlarging the image of the tangent
to all supported Hermitian matrices gives the upper bound
\[
 \sup_W\{2\ip UW-\sigma\norm W_{\mathcal J_P^{-1}}^2\}
 =\sigma^{-1}\ip U{\mathcal J_PU}
 =2\sigma^{-1}\Tr(PU^2).
\]
This proves \eqref{hr:eq:responsebound}.
\end{proof}

\subsection{Two frames and their energy bounds}

Let $k=|I|$ and assume all active inputs are nonzero. The physical
force from a coefficient direction $h\in\R^k$ is
\begin{equation}\label{hr:eq:physicalforce}
 D_x(h)=\diag(A(h),-A(h),-2A(xh),2A(xh)).
\end{equation}
Write $G_i=D_x(e_i)=\diag(A_i,-A_i,-2x_iA_i,2x_iA_i)$.
The physical matrices $G_i$ are supported on $\mathcal K$.
Define frames, meaning linear maps from coefficient space to
Hermitian matrix space, by their columns:
\begin{equation}\label{hr:eq:twoframes}
 E_i=\sqrt{c_i}V_i,\qquad
 F_i=\frac{W_i}{\sqrt{c_i}p_i},\qquad N_i=\sqrt{c_i}Z^{1/2}G_iZ^{1/2}.
\end{equation}
Write the frame maps as $\mathcal E,\mathcal F,\mathcal N$.
Their adjoints use the real trace inner product. Put
\begin{equation}\label{hr:eq:coefficientweights}
 r_i=\sqrt{c_i}q_i,\quad h_* =\sum_i r_i^2,\quad
 \mu_i=\sqrt{c_i}p_i,\quad
 R_c=\diag(p_i/q_i),\quad
 D_c=\diag\sqrt{c_iq_i/p_i}.
\end{equation}
These coefficient matrices are positive definite. The identities
\eqref{hr:eq:balancedidentities} imply
\begin{equation}\label{hr:eq:frameidentities}
 \mathcal F\mu=P,\quad \mathcal E^*P=\mu,\quad
 \Gamma=\mathcal F\mathcal E^*,\quad \Tr F_i=r_i.
\end{equation}

\begin{lemma}[Frame estimates]\label{hr:lem:twoframes}
In the matrix norm $\norm Y_P^2=\Tr(PY^2)$,
\begin{equation}\label{hr:eq:weightedgram}
 \norm{\mathcal E R_c^{-1/2}}_{P,\mathrm F}^2\le4h_*,
 \qquad
 \norm{\mathcal N R_c^{-1/2}}_{P,\mathrm F}^2\le16h_*.
\end{equation}
The coefficient matrices
\[
 \mathsf E=R_c^{1/2}\mathcal F^*\mathcal E R_c^{-1/2},
 \qquad
 \mathsf B=R_c^{1/2}\mathcal F^*\mathcal N R_c^{-1/2}
\]
satisfy $\norm{\mathsf E}_{\mathrm F}^2\le4h_*$,
$\norm{\mathsf B}_{\mathrm F}^2\le16h_*$, and $h_*\ge1/4$.
\end{lemma}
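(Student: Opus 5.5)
The plan is to reduce each estimate to a column-by-column trace inequality and then obtain the coefficient-matrix bounds by factoring through the $P$-norm. All three parts rest on the explicit formulas \eqref{hr:eq:balanced}--\eqref{hr:eq:frameidentities} and the transport equation $Z\Omega_c(S)Z=S_0$ from \eqref{hr:eq:stationarity}. The step I expect to be the real obstacle is the first one: bounding each column's $P$-energy by $\tau_i$, which needs an operator comparison between $B_iSB_i$ and the source $\Omega_i(S)$.

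\emph{Column energies.} The $i$-th column of $\mathcal E R_c^{-1/2}$ is $\sqrt{c_iq_i/p_i}\,V_i$. Using $P=Z^{-1/2}S_0Z^{-1/2}$ and cyclicity,
\[
 \Tr(PV_i^2)=\Tr(S_0B_iZB_i)=\Tr(ZB_iSB_i),
\]
where passing from $S_0$ to $S$ is legitimate because $B_iZB_i$ is supported on $\mathcal K$. Proposition~\ref{hr:prop:regularity} makes $S$ and $Z$ block diagonal, so this equals $\sum_j\Tr(Z_{jj}A_iS_{jj}A_i)$. Each term is at most $\Tr(Z_{jj}A_iTA_i)$, since $S_{jj}\preceq T$ and $Z_{jj}\succeq0$.

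Now $A_iTA_i=A_i^{1/2}M_iA_i^{1/2}$. Every eigenvalue $m$ of $M_i$ satisfies $m\le p_i$, hence $m\le p_i^\beta m^\alpha$, which gives $M_i\preceq F_\beta(M_i)$. It follows that $\Tr(PV_i^2)\le\Tr(Z\Omega_i(S))=\tau_i$. Summing the weighted column energies then gives
\[
 \norm{\mathcal ER_c^{-1/2}}_{P,\mathrm F}^2\le\sum_ic_iq_i\tau_i/p_i=\sum_ic_iq_i^2=h_*,
\]
which is stronger than the stated $4h_*$.

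For $\mathcal N$, the blocks of $G_i$ are $A_i$ multiplied by scalars of modulus at most $2$, using $|x_i|\le1$. The same block-by-block computation bounds each block term by $4\Tr(Z_{jj}A_iTA_i)$. This gives $\Tr(Z^{1/2}G_iZ^{1/2}PZ^{1/2}G_iZ^{1/2})\le4\tau_i$, and therefore the bound $\le 16h_*$ even with a crude constant.

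\emph{The matrices $\mathsf E,\mathsf B$.} I would write $\mathsf E=(R_c^{1/2}\mathcal F^*)(\mathcal ER_c^{-1/2})$ and show that $Y\mapsto R_c^{1/2}\mathcal F^*Y$ has norm at most $1$ from $\norm\cdot_P$ to $\ell_2$. Its $i$-th entry is $\sqrt{p_i/q_i}\,\Tr(W_iY)/(\sqrt{c_i}p_i)$. Cauchy--Schwarz with $W_i\succeq0$ gives
\[
 |\Tr(W_iY)|^2\le\Tr W_i\,\Tr(W_iY^2)=c_i\tau_i\Tr(W_iY^2).
\]
Hence the squared entry is at most $\Tr(W_iY^2)$, and summing over $i$ with $\sum_iW_i=P$ yields $\Tr(PY^2)$. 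Then $\norm{\mathsf E}_{\mathrm F}\le\norm{\mathcal ER_c^{-1/2}}_{P,\mathrm F}$, and likewise for $\mathsf B$, so both Frobenius bounds follow from the first step.

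\emph{Lower bound on $h_*$.} By \eqref{hr:eq:frameidentities}, $\mathcal E^*\mathcal F\mu=\mathcal E^*P=\mu$. Therefore
\[
 \mathsf E^*(R_c^{-1/2}\mu)=R_c^{-1/2}\mathcal E^*\mathcal F\mu=R_c^{-1/2}\mu,
\]
and this vector is nonzero because $\mu_i>0$. So $1$ is an eigenvalue of $\mathsf E^*$, which forces $1\le\norm{\mathsf E}\le\norm{\mathsf E}_{\mathrm F}$. Combined with $\norm{\mathsf E}_{\mathrm F}^2\le4h_*$, this gives $h_*\ge1/4$.
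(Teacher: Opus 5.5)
Your proof is correct and follows the paper's argument. It uses the same cyclicity/transport identity $\Tr(PV_i^2)=\Tr(ZB_iSB_i)$, the comparison $M_i\preceq F_\beta(M_i)$, the Cauchy--Schwarz contraction $\norm{R_c^{1/2}\mathcal F^*Y}_2\le\norm Y_P$, and the eigenvalue-one argument for $\mathsf E$.

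The only difference is how you compare $B_iSB_i$ with $\Omega_i(S)$. You use block diagonality of the optimizing density (Proposition~\ref{hr:prop:regularity}) to get $S_{jj}\preceq T$ block by block. The paper instead uses $S\preceq4(I_4\otimes T)$, which holds for every PSD four-block matrix and so does not depend on that symmetry. Your route is legitimate here, since $S$ is the actual optimizer, and it even sharpens the constants to $h_*$ and $4h_*$, hence $h_*\ge1$.
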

\begin{proof}
For any PSD four-block matrix $S$ with marginal $T$,
\[
 S\preceq4(I_4\otimes T).
\]
Indeed, write a vector in four components. Cauchy--Schwarz for the
PSD form bounds its quadratic form by the square of the sum of the
four diagonal square roots; scalar Cauchy--Schwarz bounds this by
four times the sum of the diagonal forms. Each diagonal block of
$S$ is at most $T$. Also $F_\beta(M_i)\succeq M_i$, since every
eigenvalue of $M_i$ is at most $p_i$. These two facts give
\[
 B_iSB_i\preceq4\Omega_i(S),\qquad
 G_iSG_i\preceq16\Omega_i(S).
\]
For the second inequality, the four block multipliers in $G_i$ have
absolute value at most two. These bounds hold for arbitrary PSD
$S$ and need no commutation of its physical blocks.

Cyclicity, the transport equation, and the support of $B_i,G_i$ now
imply
\[
 \Tr(PV_i^2)=\Tr(ZB_iSB_i)\le4\tau_i,\qquad
 \Tr(PN_i^2)=c_i\Tr(ZG_iSG_i)\le16c_i\tau_i.
\]
Dividing by $(R_c)_{ii}=p_i/q_i$ and summing proves
\eqref{hr:eq:weightedgram}.
For every supported Hermitian $Y$, positive trace Cauchy--Schwarz
also gives
\[
 \norm{R_c^{1/2}\mathcal F^*Y}_2^2
 =\sum_i\frac{\mu_i}{r_i}|\Tr(F_iY)|^2
 \le\sum_i\mu_i\Tr(F_iY^2)=\Tr(PY^2).
\]
Apply this to each column of the normalized frames to get the
Frobenius estimates. Finally, \eqref{hr:eq:frameidentities} gives
$\mathcal E^*\mathcal F\mu=\mu$. Its transpose and the similar
matrix $\mathsf E$ have eigenvalue one, so
$1\le\norm{\mathsf E}_{\mathrm F}^2\le4h_*$.
\end{proof}

\subsection{Compatible forces from a projection}

\begin{lemma}[Averaged response on a prescribed hyperplane]
\label{hr:lem:projection}
Suppose $c_iq_i^2\le1/48$ for every active label. For any prescribed
$b_0\in\R^k$, there is a finite collection of directions $h_j$
with $b_0^\top h_j=0$, not all zero, such that
\begin{equation}\label{hr:eq:averagedresponse}
 \frac12\sum_j D^2_{\mathcal H}\mathcal E_\theta
                         [D_x(h_j),D_x(h_j)]
 \le\frac{30}\beta\sum_{i,j}\tau_i(h_j)_i^2.
\end{equation}
\end{lemma}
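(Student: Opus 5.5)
The plan is to apply the response bound \eqref{hr:eq:responsebound} of Lemma~\ref{hr:lem:envelope} to each direction $h_j$. For this we construct, for every coefficient direction in a suitable subspace, an explicit supported solution $U$ of the adjoint equation $(I-\Gamma^*)U=A$. The balanced physical force is
\[
A=Z^{1/2}D_x(h)Z^{1/2}=\sum_i h_i N_i/\sqrt{c_i}.
\]
By \eqref{hr:eq:frameidentities} we have $\Gamma^*=\mathcal E\mathcal F^*$. So we try the ansatz $U=A+\mathcal E w$ with $w\in\R^k$. Substituting shows that $(I-\Gamma^*)U=A$ holds as soon as
\[
(I-\mathcal F^*\mathcal E)w=\mathcal F^*A .
\]
It is convenient to pass to the normalized coordinates $h=C^{1/2}R_c^{-1/2}g$ (with $C=\diag(c_i)$) and $w=R_c^{-1/2}v$. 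In these coordinates the equation becomes $(I-\mathsf E)v=\mathsf Bg$, with the matrices $\mathsf E,\mathsf B$ of Lemma~\ref{hr:lem:twoframes}. The right side of \eqref{hr:eq:averagedresponse} becomes a diagonal quadratic form in $g$; it is $\sum_i\tau_i c_i (q_i/p_i)g_i^2$, up to the normalization of $\tau_i$ against $q_ip_i$.

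The obstacle is that $I-\mathsf E$ is singular: $\mathsf E$ has eigenvalue one, coming from $\mathcal E^*\mathcal F\mu=\mu$. So the equation cannot be inverted for every force. Here the hypothesis $c_iq_i^2\le1/48$ enters. It gives $h_*\le k/48$, hence $\norm{\mathsf E}_{\mathrm F}^2\le k/12$ and $\norm{\mathsf B}_{\mathrm F}^2\le k/3$. By a Markov count on singular values, $\mathsf E$ has at most $k/3$ singular values exceeding $1/2$. I will let $\mathcal S$ be the subspace of $g$ satisfying three conditions:
\begin{enumerate}[label=(\roman*)]
\item $\mathsf Bg$ is orthogonal to the top right-singular directions of $\mathsf E$, or more robustly, $\mathsf Bg$ lies in a subspace on which a Neumann series for $(I-\mathsf E)^{-1}$ converges with norm at most $2$, obtained by projecting away the large singular directions;
\item the corresponding $h$ satisfies $b_0^\top h=0$;
\item $g$ is orthogonal to the large singular directions of $\mathsf B$.
\end{enumerate}
A dimension count, codimension at most $k/3+k/3+1$, leaves $\dim\mathcal S\ge1$ when $k\ge 6$, say. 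The small cases $k\le5$ I would treat directly: then $h_*\le 5/48<1/4$, which contradicts $h_*\ge1/4$ from Lemma~\ref{hr:lem:twoframes}. So this case cannot occur, and in fact the same inequality forces $k\ge12$.

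We then take the $h_j$ to be the images of an orthonormal basis $g_j$ of $\mathcal S$, and sum the bound \eqref{hr:eq:responsebound} over $j$. We have $\sigma^{-1}=(1+\beta)/(2\beta)\le 3/(4\beta)$ and $\Tr(PU_j^2)\le 2\norm{A_j}_P^2+2\norm{\mathcal E R_c^{-1/2}v_j}_P^2$. The first sum is a Frobenius norm of $\mathcal N R_c^{-1/2}C^{1/2}$ restricted to $\mathcal S$. The second sum is controlled by $\norm{v_j}\le2\norm{\mathsf Bg_j}$ together with the frame bound $\norm{\mathcal ER_c^{-1/2}}_{P,\mathrm F}^2\le 4h_*$.

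The step I expect to be hardest is the final comparison. Frobenius bounds give sums over all basis vectors of the full space, whereas \eqref{hr:eq:averagedresponse} compares with $\sum_{i,j}\tau_i (h_j)_i^2$ over $\mathcal S$ only. My plan is to estimate everything column-wise. I would bound $\sum_j\norm{A_j}_P^2\le\sum_i\Tr(PN_i^2)(R_c^{-1})_{ii}\,\norm{P_{\mathcal S}e_i}^2$, which is at most $16\sum_i c_i\tau_i\norm{P_{\mathcal S}e_i}^2$, and then identify the right side of \eqref{hr:eq:averagedresponse} with the same weighted leverage sum $\sum_i c_i\tau_i\norm{P_{\mathcal S}e_i}^2/(R_c)_{ii}$. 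The $\mathcal E$ term is handled the same way, using $\norm{\mathsf B}\le$ its bounded singular values on $\mathcal S$ from condition~(iii). If this exact matching of weights fails, I would instead choose $\mathcal S$ to also exclude the $O(k/48)$ coordinates carrying the largest leverage weights. Collecting constants as $\tfrac34(2\cdot16+2\cdot4\cdot4\cdot\text{const})$ should fit under $30/\beta$.
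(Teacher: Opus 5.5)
Your ansatz $U=A+\mathcal Ew$ and the reduction to $(I-\mathsf E)v=\mathsf Bg$ are exactly the paper's trial $U_j=\mathcal NR_c^{-1/2}v_j+\mathcal ER_c^{-1/2}w_j$ in other letters. Your $g,v$ are its $v,w$, and the right side of \eqref{hr:eq:averagedresponse} is $\sum_ir_i^2g_i^2$ per direction. The gap is in the final comparison, which you correctly flag as the hard step.

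\textbf{The column-wise bound fails.} The bound $\sum_j\norm{\mathcal NR_c^{-1/2}g_j}_P^2\le\sum_i\Tr(PN_i^2)(R_c^{-1})_{ii}\norm{P_{\mathcal S}e_i}^2$ is false in general. Its left side is $\Tr(P_{\mathcal S}\mathcal Q)$, where $\mathcal Q$ is the $P$-Gram matrix of the columns of $\mathcal NR_c^{-1/2}$. Its right side keeps only the diagonal part $\sum_i\mathcal Q_{ii}(P_{\mathcal S})_{ii}$. For $\mathcal Q=\mathbf 1\mathbf 1^\top$ and $\mathcal S=\operatorname{span}\mathbf 1$ the two sides are $k$ and $1$. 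Labels with overlapping ranges can produce off-diagonal Gram mass of this kind. Condition (iii) does not exclude it, because only $\norm{\mathsf Bg}\le\norm{\mathcal NR_c^{-1/2}g}_P$ holds. The $\mathcal E$-term is not indexed by the coordinates of $g_j$ at all, since $v_j$ comes from a non-diagonal solution operator.

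\textbf{What you can prove instead is only an aggregate bound, and your subspace is too small for it.} You get $\norm{\mathcal NR_c^{-1/2}}_{P,\mathrm F}^2+\norm{\mathcal ER_c^{-1/2}}_{P,\mathrm F}^2\le20h_*$, up to operator-norm factors. So you need an absolute lower bound $\sum_ir_i^2\norm{P_{\mathcal S}e_i}^2\gtrsim h_*$. The available estimate loses at most $\max_ir_i^2\cdot\operatorname{codim}\mathcal S$, so the codimension of $\mathcal S$ must be $O(h_*)$. By converting $h_*\le k/48$ into counts in $k$, you exclude about $2k/3+1$ dimensions. This can be far larger than $48h_*$ when many labels have tiny $r_i$. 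Then nothing prevents $\mathcal S$ from missing the few labels that carry the weight.

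\textbf{Two further problems.} Even granting your matching, your bookkeeping gives $\tfrac34(32+32C)\le30$ only if $C\le1/4$. Also, the Neumann series in (i) is unjustified, because the complement of the top right-singular vectors of $\mathsf E$ is not $\mathsf E$-invariant. Injectivity of $I-\mathsf E$ on the span of the small right-singular vectors would repair that point.

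\textbf{The missing idea} is to neither invert $I-\mathsf E$ nor select a subspace of $g$'s. The paper's argument runs as follows.
\begin{itemize}
\item It projects orthogonally in $\R^k\times\R^k$ onto all legal pairs, $(I-\mathsf E)w=\mathsf Bv$ and $b_0^\top D_cv=0$, and sums over the images of all $2k$ standard basis vectors.
\item On the range $w=\mathsf Ew+\mathsf Bv$, so $\Tr\Pi_{ww}\le\norm{\mathsf E}_{\mathrm F}^2+\norm{\mathsf B}_{\mathrm F}^2\le20h_*$.
\item There are at most $k+1$ constraints, so $\Tr(I-\Pi_{vv})\le1+20h_*\le24h_*$. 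The eigenvalue one of $\mathsf E$ therefore costs only this trace.
\item The hypothesis $c_iq_i^2\le1/48$ is used as a per-coordinate cap, not to bound $h_*$ by $k$: $\sum_{i,j}\tau_i(h_j)_i^2=\sum_ir_i^2(\Pi_{vv})_{ii}\ge h_*-\tfrac1{48}\Tr(I-\Pi_{vv})\ge h_*/2$.
\item The squared $P$-norm of the joint trial is a PSD quadratic form in $(w,v)$. So $\Pi\preceq I$ bounds its sum by its full trace, $20h_*$, with correlations included and no triangle-inequality factor.
\end{itemize}
This gives the ratio $40/\sigma\le30/\beta$, exactly the stated constant. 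A repair of your route with $h_*$-based singular-value counts would still lose the factors from $\norm{a+b}^2\le2\norm a^2+2\norm b^2$ and $\norm v\le2\norm{\mathsf Bg}$. It lands well above $30/\beta$.
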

\begin{proof}
Project orthogonally in $\R^k\times\R^k$ onto the legal pairs
\begin{equation}\label{hr:eq:legal}
 (I-\mathsf E)w=\mathsf Bv,\qquad b_0^\top D_cv=0.
\end{equation}
Call this projection $\Pi$ and denote its diagonal blocks by
$\Pi_{ww},\Pi_{vv}$. On its range,
$w=\mathsf Ew+\mathsf Bv$, so
\[
 \Tr\Pi_{ww}=\norm{[I\ 0]\Pi}_{\mathrm F}^2
 =\norm{[\mathsf E\ \mathsf B]\Pi}_{\mathrm F}^2
 \le20h_*.
\]
There are at most $k+1$ independent constraints in $2k$ variables.
Thus $\Tr\Pi\ge k-1$ and
\begin{equation}\label{hr:eq:projectionbounds}
 \Tr(I-\Pi_{vv})\le1+20h_*\le24h_*.
\end{equation}

For each of the $2k$ standard basis vectors, take $(w_j,v_j)$ to
be its image under $\Pi$ and set
\[
 h_j=D_cv_j,\qquad
 U_j=\mathcal N R_c^{-1/2}v_j+
                         \mathcal E R_c^{-1/2}w_j.
\]
The legal equation gives $\mathcal F^*U_j=R_c^{-1/2}w_j$, hence
\[
 (I-\Gamma^*)U_j=\mathcal N R_c^{-1/2}v_j
                 =Z^{1/2}D_x(h_j)Z^{1/2}.
\]
These are the compatible forces required by
Lemma~\ref{hr:lem:envelope}.

The squared $P$-norm of the displayed transport trial is a positive
semidefinite quadratic form in $(w,v)$. Since $\Pi\preceq I$,
its sum over the projected basis is bounded by its full trace:
\begin{equation}\label{hr:eq:averagedenergy}
 \sum_j\Tr(PU_j^2)
 \le\norm{\mathcal N R_c^{-1/2}}_{P,\mathrm F}^2
       +\norm{\mathcal E R_c^{-1/2}}_{P,\mathrm F}^2
 \le20h_*.
\end{equation}
This comparison includes the correlations between $w_j$ and $v_j$.
Meanwhile the reserve expenditure satisfies
\begin{align*}
 \sum_{i,j}\tau_i(h_j)_i^2
 &=\Tr(\diag(r_i^2)\Pi_{vv})\\
 &\ge h_*-\frac1{48}\Tr(I-\Pi_{vv})
 \ge\frac12h_*>0.
\end{align*}
Combining this with \eqref{hr:eq:responsebound} gives a response to
expenditure ratio at most
$40/\sigma=20(1+\beta)/\beta\le30/\beta$.
This proves the lemma.
\end{proof}

\begin{corollary}[Curvature of the reserve-preserving path]
\label{hr:cor:negative}
At a capped state as in Lemma~\ref{hr:lem:projection}, let
$L=\nabla^2\varphi(0)$ for \eqref{hr:eq:compositevalue}. Then
\begin{equation}\label{hr:eq:exacthessian}
 h^\top Lh=D^2_{\mathcal H}\mathcal E_\theta[D_x(h),D_x(h)]
                           -2a\sum_i\tau_i h_i^2.
\end{equation}
If $a>30/\beta$, some nonzero direction in any prescribed coefficient
hyperplane has $h^\top Lh<0$. If moreover $\tau_i\ge\tau_0$ and
$a\ge60/\beta$, its minimum unit Rayleigh quotient is at most
$-a\tau_0$.
\end{corollary}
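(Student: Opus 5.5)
The plan is to compute $\varphi$ to second order at $u=0$ by the chain rule through the three arguments, and then feed the result into Lemma~\ref{hr:lem:projection}.

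\emph{Exact Hessian.} Along $u=sh$, the centers move affinely: $\mathcal H(s)=\mathcal H+sD_x(h)$, which is the content of \eqref{hr:eq:center-cancellation}. The reserves move quadratically: $c(s)=c-as^2h^{\odot2}$. Since $h$ is supported on $I$ and the reserves there are positive, the set $I$ stays fixed for small $s$. Proposition~\ref{hr:prop:regularity} then makes $\varphi$ smooth near $0$.

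Differentiating twice at $s=0$ produces three kinds of terms. The pure center term is $D^2_{\mathcal H}\mathcal E_\theta[D_x(h),D_x(h)]$. There is no mixed center--reserve term, because the reserve path has zero velocity at $s=0$. The reserve contribution is $\partial_c\mathcal E_\theta\cdot c''(0)$ with $c''(0)=-2ah^{\odot2}$.

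By the envelope theorem applied to the transport objective $\mathcal L$ in the proof of Lemma~\ref{hr:lem:envelope}, together with the optimizer condition in \eqref{hr:eq:stationarity}, the reserve derivative is $\partial_{c_i}\mathcal E_\theta=\Tr(Z\Omega_i(S))=\tau_i$. Indeed, $c_i$ enters $\mathcal L$ only through $\Tr(Z\Omega_c(S))$, and $S$ and $Z$ are stationary. This gives \eqref{hr:eq:exacthessian}.

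\emph{Negative direction.} Apply Lemma~\ref{hr:lem:projection} with $b_0$ chosen as the normal of the prescribed hyperplane. It supplies directions $h_j$ in the hyperplane, not all zero, satisfying \eqref{hr:eq:averagedresponse}. Summing \eqref{hr:eq:exacthessian} over $j$ gives
\[
 \sum_j h_j^\top Lh_j\le\Bigl(\frac{60}{\beta}-2a\Bigr)\sum_{i,j}\tau_i(h_j)_i^2 .
\]
In the proof of the lemma, the expenditure $\sum_{i,j}\tau_i(h_j)_i^2$ was shown to be at least $h_*/2>0$. So when $a>30/\beta$ the sum is strictly negative, and some individual $h_j$ has $h_j^\top Lh_j<0$.

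\emph{Quantitative bound.} Now assume $a\ge60/\beta$. Then $60/\beta-2a\le -a$, so $\sum_j h_j^\top Lh_j\le -a\sum_{i,j}\tau_i(h_j)_i^2\le -a\tau_0\sum_j\norm{h_j}_2^2$. Taking the $h_j$ with the smallest Rayleigh quotient yields $h_j^\top Lh_j\le -a\tau_0\norm{h_j}_2^2$ for some nonzero $h_j$. This is the claimed bound on the minimum unit Rayleigh quotient of $L$ restricted to the hyperplane. It uses the ratio inequality $\min_j x_j/y_j\le\sum_j x_j/\sum_j y_j$, which holds when all $y_j\ge0$; any $h_j=0$ can simply be dropped.

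\emph{Main obstacle.} The delicate step is justifying $\partial_{c_i}\mathcal E_\theta=\tau_i$. The source depends on $S$, so the envelope argument has to be run on the joint $(S,Z)$ formulation from Lemma~\ref{hr:lem:transport}, using smoothness of the optimizing pair from Proposition~\ref{hr:prop:regularity}. A second point needs care: the second derivative with respect to $c$ and the mixed center--reserve derivative both drop out only because $c'(0)=0$.
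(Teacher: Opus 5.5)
Your proposal is correct and follows the paper's proof: the affine center, the zero first derivative of the reserve path, and the envelope identity $\partial_{c_i}\mathcal E_\theta=\tau_i$ give \eqref{hr:eq:exacthessian}. Summing over the directions of Lemma~\ref{hr:lem:projection} then gives the bound $-2(a-30/\beta)$ times the positive expenditure, which is at most $-a\tau_0\sum_j\norm{h_j}_2^2$ when $a\ge60/\beta$ and $\tau_i\ge\tau_0$.
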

\begin{proof}
The center in $\varphi$ is affine, the reserve has zero first
derivative at zero, and $\partial_{c_i}\mathcal E_\theta=\tau_i$
by the envelope formula. These facts give
\eqref{hr:eq:exacthessian}. Summing it over the directions in
Lemma~\ref{hr:lem:projection} gives
\[
 \sum_j h_j^\top Lh_j
 \le-2(a-30/\beta)\sum_{i,j}\tau_i(h_j)_i^2.
\]
The expenditure is positive. Under the quantitative hypotheses, the
right side is at most $-a\tau_0\sum_j\norm{h_j}_2^2$, so one
nonzero direction has the claimed Rayleigh quotient.
\end{proof}

\section{Existence by completing epochs}\label{hr:sec:existence}

This section proves the signing theorem without computational
assumptions. Compact minimization selects a state within an epoch;
the local response estimate forces that state to have exhausted its
reserves. The budget matrix then allows the epoch to be repeated on
a smaller unfinished family. The finite algorithm will replace this
compact minimization by explicit preparation and curvature steps.

\begin{lemma}[A completed epoch]\label{hr:lem:epoch-existence}
Let $A_i$ satisfy \eqref{hr:eq:input} and $x_i^0\in(-1,1)$.
Set $R=4$, $a=128/\beta$, $b=\norm{\sum_i A_i}$, and
\[
 \delta_0=4\sqrt{aR\varepsilon r^\beta}.
\]
There is a feasible state with all $c_i=0$ such that
\begin{equation}\label{hr:eq:epoch-guarantee}
 \norm H\le\delta_0\sqrt b,\qquad
 \norm{B_{\rm next}}\le\frac{b+\delta_0\sqrt b}{4}.
\end{equation}
\end{lemma}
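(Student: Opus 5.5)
We select the epoch state by compact minimization of the potential. Fix $\theta>0$, which we send to zero at the end. Consider the compact feasible set \eqref{hr:eq:state} with the given $x^0$. On it, minimize the continuous function $\Psi(x,s,c)=\mathcal E_\theta(H,K,c)$, continuity being Proposition~\ref{hr:prop:regularity}. To prefer exhausted reserves, minimize lexicographically: first $\Psi$, then $\sum_i c_i$ among minimizers, or equivalently add a tiny multiple of $\sum_ic_i$. Let $(x,s,c)$ be a minimizer.

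The main step is to show that every reserve of the minimizer vanishes. We argue by contradiction and let $I=\{i:c_i>0\}$ be nonempty.
\begin{itemize}[leftmargin=*]
\item \emph{Face coordinates.} If $|x_i|=1$, deleting $c_i$ keeps feasibility and does not increase $\Psi$, by monotonicity in Lemma~\ref{hr:lem:budget}. This contradicts the tie-break, so every $i\in I$ is interior.
\item \emph{Preparation.} Preparation $(c_i,s_i)\mapsto(c_i-t,s_i+t/a)$ is feasible for small $t>0$. Its one-sided derivative is $-\tau_i+\tfrac1a\partial_{K}\mathcal E[\mathrm{diag}(0,0,A_i,-A_i)]$, and minimality forces this to be $\ge0$. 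The $K$-derivative is at most $\Tr(Z^{1/2}\cdots)$, bounded by $4p_i$ via $S\preceq 4(I_4\otimes T)$, which gives $\tau_i\le 4p_i/a$. Then $c_iq_i^2=c_i\tau_i^2/p_i^2\le aR\cdot16/a^2=64R/a=256\beta/128\cdot\ldots$. I expect to tune this with $a=128/\beta$ so that $c_iq_i^2\le 1/48$. If the constant is off, I would add a factor from $q_i\le\tau_i/p_i$ and the bound $\tau_i\le\norm{Z\Omega_i}$; the choice $a=128/\beta$ is presumably made exactly so that this works.
\item \emph{Curvature step.} Lemma~\ref{hr:lem:projection} and Corollary~\ref{hr:cor:negative} apply, since $a=128/\beta>60/\beta$. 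Take the prescribed hyperplane $b_0=(x_i)_{i\in I}$. This gives $h\perp x$ on $I$, supported in $I$, with $h^\top\nabla^2\varphi(0)h<0$. The walk \eqref{hr:eq:movement} with small $t$ is feasible because the coordinates are interior and $c_i>0$. The exact identities \eqref{hr:eq:center-cancellation} make $\varphi$ exactly the potential along it. It remains to handle the first-order term. Minimality requires $\nabla\varphi(0)\cdot h=0$, since both signs $\pm th$ are feasible. Then $\varphi(\pm th)<\varphi(0)$ for one sign, contradicting minimality.
\item \emph{Smoothness.} Smoothness of $\varphi$ near $0$ holds because $I$ is fixed for small steps, by Proposition~\ref{hr:prop:regularity}.
\end{itemize}
Hence every $c_i=0$ for every $\theta>0$.

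Next come the bounds. Since the initial state is feasible, $\Psi(\text{min})\le \mathcal E_\theta(0,0,aR\mathbf1)\le \delta_0\sqrt b+2\theta\sqrt{4D}$ by \eqref{hr:eq:initialbudget}, using $b\le$ its definition. Lemma~\ref{hr:lem:budget} then gives $\max\{\norm H,\norm K\}\le\delta_0\sqrt b+4\theta\sqrt D$. With all reserves zero, \eqref{hr:eq:budgetcompletion} gives $4B_{\rm next}\preceq B+K$, so $\norm{B_{\rm next}}\le(b+\norm K)/4$.

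Finally, let $\theta\downarrow0$ along a sequence. The minimizers lie in a compact set and have $c=0$. Pass to a convergent subsequence. The limit is feasible with $c=0$, because the constraint $(1-x_i^2)(c_i+as_i-aR)=0$ is closed. $H$ and $K$ are continuous, and $B_{\rm next}$ is upper semicontinuous in the limit: coordinates at faces in the limit only shrink the unfinished set, and we may also round limit coordinates with $s_i=R$. The bounds \eqref{hr:eq:epoch-guarantee} pass to the limit. I expect the hardest part to be the preparation step, that is, verifying quantitatively that minimality against preparation yields the cap $c_iq_i^2\le1/48$ with the stated $a$.
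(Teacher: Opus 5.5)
Your architecture is the paper's. You minimize lexicographically on the compact state set, delete reserves at faces, and use a one-sided preparation test to obtain the cap. You then invoke Lemma~\ref{hr:lem:projection} and Corollary~\ref{hr:cor:negative} for a negative-curvature walk; your vanishing-first-derivative argument is equivalent to the paper's averaging over the two signs. Finally you use the initial budget bound and let $\theta\downarrow0$.

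\textbf{The gap is the step you flagged, and as written it fails.} Bounding the $K$-derivative through $S\preceq4(I_4\otimes T)$ gives only $\tau_i\le4p_i/a$, hence
\[
 c_iq_i^2\le aR\cdot\frac{16}{a^2}=\frac{16R}{a}=\frac{\beta}{2}.
\]
Your $64R/a$ is an arithmetic slip. The value $\beta/2$ exceeds $1/48$ whenever $\beta>1/24$, and the hypotheses allow $\beta=1/2$. Since $a=128/\beta$ is fixed by the statement, nothing can be retuned, and the hypothesis of Lemma~\ref{hr:lem:projection} is not verified.

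The missing observation is elementary. By the envelope formula, the $K$-derivative in direction $\diag(0,0,A_i,-A_i)$ is exactly $\Tr(A_i(S_{33}-S_{44}))$, and no transport enters. Since $S_{33},S_{44}\succeq0$ and $S_{33}+S_{44}\preceq\sum_jS_{jj}=T$,
\[
 |\Tr(A_i(S_{33}-S_{44}))|\le\Tr(A_i(S_{33}+S_{44}))\le\Tr(A_iT)=p_i .
\]
Nonnegativity of the one-sided preparation derivative then gives $\tau_i\le p_i/a$, i.e.\ $q_i\le1/a$. Hence $c_iq_i^2\le aR/a^2=R/a=\beta/32\le1/64<1/48$, which is how the paper closes the step. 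With this correction, the rest of your argument goes through.

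Two minor points. First, adding a small multiple of $\sum_ic_i$ is not literally equivalent to the lexicographic choice. It also works: it only strengthens the derivative tests and perturbs the budget bound by a vanishing amount. Second, at the end you need neither semicontinuity of $B_{\rm next}$ nor rounding of coordinates with $s_i=R$; rounding would in fact change $H$. The limit state is feasible with $c=0$ and $\norm K\le\delta_0\sqrt b$, so applying \eqref{hr:eq:budgetcompletion} directly at the limit gives the second bound, as in the paper.
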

\begin{proof}
Fix $\theta>0$. Minimize $\mathcal E_\theta(x,s,c)$ on the
compact set \eqref{hr:eq:state}; among its minimizers choose one
minimizing $\sum_i c_i$. Continuity guarantees both choices.
If a face coordinate had $c_i>0$, delete its reserve at fixed
$x,s$. The state remains feasible, both centers are unchanged, and
the source decreases in PSD order. The potential cannot increase,
contradicting the minimum value or the secondary choice. Thus every
positive-reserve coordinate is interior.

Suppose some reserve is positive. For such a label the feasible
one-sided preparation $(c_i,s_i)\mapsto(c_i-t,s_i+t/a)$ has
nonnegative value derivative at zero. The envelope formula gives
\begin{equation}\label{hr:eq:exactprepcap}
 0\le \frac1a\Tr(A_i(S_{33}-S_{44}))-\tau_i
 \le \frac{p_i}{a}-\tau_i.
\end{equation}
Hence $q_i\le1/a$ and
\[
 c_iq_i^2\le R/a=\beta/32\le1/64<1/48.
\]
The projection lemma applies to the positive-reserve labels. Its
directions are zero on every other coordinate. For sufficiently
small $t$, both updates \eqref{hr:eq:movement} are feasible:
the active coordinates are interior, their reserves are positive,
and $c_i+a s_i=aR$ is preserved. The centers obey
\eqref{hr:eq:center-cancellation}. Smoothness on the fixed support
and Taylor expansion give
\[
 \frac{\mathcal E_\theta(x^+,s^+,c^+)
           +\mathcal E_\theta(x^-,s^-,c^-)}2
 =\mathcal E_\theta(x,s,c)
 +t^2\left\{\frac12D^2_{\mathcal H}\mathcal E_\theta
                    [D_x(h),D_x(h)]-a\sum_i\tau_i h_i^2\right\}
 +o(t^2).
\]
Sum over the finite collection in Lemma~\ref{hr:lem:projection}.
Since $a=128/\beta>30/\beta$ and the total expenditure is positive,
one pair has negative quadratic coefficient. For small $t$, one of
its two values is below the minimum, a contradiction. All reserves
therefore vanish at the chosen minimizer.

The initial state is feasible. Lemma~\ref{hr:lem:budget} consequently
gives $\norm H,\norm K\le\delta_0\sqrt b+2\theta\sqrt{4D}$
at the terminal state. Let $\theta$ decrease to zero and take a
convergent subsequence of these terminal triples in the same compact
domain. Feasibility and $c=0$ persist in the limit, and continuity
gives $\norm H,\norm K\le\delta_0\sqrt b$. Applying
\eqref{hr:eq:budgetcompletion} at that limit proves
\eqref{hr:eq:epoch-guarantee}. The limiting support may change;
the differentiations were all taken before this final limit.
\end{proof}

\begin{theorem}[Pure existence]\label{hr:thm:existence}
For real symmetric or complex Hermitian inputs satisfying
\eqref{hr:eq:input}, and $0<\beta\le1/2$ with $r^\beta\le2$,
there is one sign per original input such that
\begin{equation}\label{hr:eq:existence-bound}
 \left\|\sum_i\chi_i A_i\right\|
 \le\min\{1,700\sqrt{\varepsilon/\beta}\}.
\end{equation}
In particular, a suitable choice of $\beta$ gives
$O(\sqrt{\varepsilon\log(2r)})$.
\end{theorem}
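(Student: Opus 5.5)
The plan is to iterate Lemma~\ref{hr:lem:epoch-existence} on the unfinished labels and sum the geometric series of discrepancy increments. We reduce to a subisotropic family as in \eqref{hr:eq:input}; zero matrices are removed. If $700\sqrt{\varepsilon/\beta}\ge1$, the bound $\norm{\sum_i\chi_iA_i}\le\norm{\sum_iA_i}\le1$ holds for any signs, so we assume otherwise. Then $\delta_0^2=16aR\varepsilon r^\beta\le16\cdot512\cdot2\,\varepsilon/\beta$, so $\delta_0\le128\sqrt{\varepsilon/\beta}$ is small.

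Start with $x^{(0)}=0$ and $b_0=\norm{\sum_iA_i}\le1$. At epoch $m$, apply Lemma~\ref{hr:lem:epoch-existence} to the labels with $|x_i|<1$, reference point $x^{(m)}$, and size $b_m$. This gives a new coefficient vector $x^{(m+1)}$ with increment $\norm{H_m}\le\delta_0\sqrt{b_m}$. It also gives $b_{m+1}\le(b_m+\delta_0\sqrt{b_m})/4$. Labels that reached a face keep their signs forever, since later epochs only act on unfinished labels, whose $x^0$ lies in $(-1,1)$.

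While $b_m\ge\delta_0^2$ we have $\delta_0\sqrt{b_m}\le b_m$, hence $b_{m+1}\le b_m/2$. Once $b_m<\delta_0^2$, we stop. The decay shows that $\sqrt{b_m}$ decreases geometrically with ratio at most $1/\sqrt2$ up to the stopping time, so
\[
 \sum_m\norm{H_m}\le\delta_0\sum_m\sqrt{b_m}\le\frac{\delta_0}{1-2^{-1/2}}\le3.5\,\delta_0 .
\]
The process terminates after finitely many epochs because $b_0\le1$ and $b_m$ halves each time.

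At termination, the total $\sum_i x_iA_i=\sum_mH_m$ is controlled as above. We then round each remaining fractional coordinate to an arbitrary sign. This changes the sum by at most $\norm{\sum_{|x_i|<1}(\chi_i-x_i)A_i}\le2b_{\rm final}<2\delta_0^2\le2\delta_0$, using that the coefficients $\chi_i-x_i$ lie in $[-2,2]$ and the matrices are PSD. The total is at most $(3.5+2)\delta_0\le5.5\cdot128\sqrt{\varepsilon/\beta}\approx704\sqrt{\varepsilon/\beta}$.

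This is marginally above $700$. Two fixes are available. First, use the sharper bound $\delta_0^2<\delta_0\cdot(\text{small})$ in the nontrivial regime. Second, use a tighter geometric sum: once $\delta_0\sqrt{b_m}\le b_m/2$, the decay ratio is $3/8$ rather than $1/2$. Constant bookkeeping is the only delicate point, and these fixes handle it. Finally, choosing $\beta=1/\max\{2,\log_2(2r)\}$ satisfies $r^\beta\le2$ and $\beta\le1/2$, and gives the $O(\sqrt{\varepsilon\log(2r)})$ form. Complex inputs need nothing extra, because the lemma is stated for Hermitian matrices.
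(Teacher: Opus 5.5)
Your proposal is correct and follows the paper's proof essentially verbatim. You iterate Lemma~\ref{hr:lem:epoch-existence} on the unfinished labels, use $b_{\rm next}\le b/2$ to sum the increments geometrically, and round the final mass of norm at most $\delta_0^2$ at cost $2\delta_0^2\le2\delta_0$. The apparent overshoot to $704$ comes only from rounding $1/(1-2^{-1/2})=2+\sqrt2\approx3.414$ up to $3.5$; with the exact value the total is $(4+\sqrt2)\delta_0\le(4+\sqrt2)\cdot128\sqrt{\varepsilon/\beta}\approx693\sqrt{\varepsilon/\beta}<700\sqrt{\varepsilon/\beta}$. This is how the paper stays below $700$, and no further fix is needed.
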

\begin{proof}
Zero inputs are harmless; if $\varepsilon=0$, all inputs vanish.
For positive $\varepsilon$, take the parameters of
Lemma~\ref{hr:lem:epoch-existence}. They satisfy
$\delta_0\le128\sqrt{\varepsilon/\beta}$. Any signing of a
subisotropic PSD family has norm at most one. Thus if $\delta_0\ge1$
the assertion follows by taking all signs positive.

Suppose $\delta_0<1$. Start with $x=0$ and run a completed epoch
whenever the unfinished mass has norm $b>\delta_0^2$. Then
\[
 b_{\rm next}\le(b+\delta_0\sqrt b)/4<b/2.
\]
Freeze the face coordinates and repeat on the unfinished original
subfamily, keeping their current fractional coordinates. Every such
epoch fixes at least one nonzero input: otherwise the unfinished
sum would not change. There are therefore at most $N$ epochs.

The discrepancy matrices in \eqref{hr:eq:epoch-guarantee} are
increments. They telescope, and their norms have total at most
\[
 \delta_0\sum_{j\ge0}2^{-j/2}=(2+\sqrt2)\delta_0.
\]
When the remaining mass has norm at most $\delta_0^2$, round all its
coefficients to signs. Each coefficient changes by at most two;
positivity bounds the norm of the final increment by
$2b\le2\delta_0^2\le2\delta_0$. Thus the total is at most
$(4+\sqrt2)\delta_0<700\sqrt{\varepsilon/\beta}$.
If this estimate exceeds one, use the all-positive signing instead.
Choosing $\beta=1/\max\{2,\log_2(2r)\}$ gives the stated rank
dependence.
\end{proof}

The potential need not be continuous across an epoch restart. Its
initial value is paid afresh using the smaller unfinished mass, and
the resulting costs form the geometric series above. Reserve
exhaustion consequently need not coincide with coloring a coordinate;
the matrix inequality \eqref{hr:eq:budgetcompletion} accounts for
the difference.

\section{Evaluating the potential by semidefinite programming}
\label{hr:sec:sdp}

The nonlinear source does not require a nonlinear optimization
primitive. Its power is chosen so that a short sequence of block
positive-semidefinite constraints represents it exactly. The reserves are maintained by arithmetic, so all scalar source
coefficients in the program are already available exactly.

\subsection{The carrier power as a sequence of square roots}

Write $q=2^\ell$, so $\beta=2^{-\ell}$. At fixed centers and reserves,
the density $S$ is a variable, and so are
$T=\sum_{j=1}^4S_{jj}$, $M_i=A_i^{1/2}TA_i^{1/2}$, and
$p_i=\Tr M_i$. All three depend linearly on $S$.
For each retained input matrix introduce Hermitian PSD variables
$Y_{i,1},\ldots,Y_{i,\ell}$, put $Y_{i,0}=p_iI_D$, and impose
\begin{equation}\label{hr:eq:powerlmi}
 \begin{pmatrix}M_i&Y_{i,j}\\Y_{i,j}&Y_{i,j-1}\end{pmatrix}\succeq0
 \qquad (1\le j\le\ell).
\end{equation}
These variables may live on the entire physical space. No numerical
test for a small nonzero carrier eigenvalue is required.

\begin{lemma}[Exact carrier representation]\label{hr:lem:powersdp}
Every feasible sequence in \eqref{hr:eq:powerlmi} satisfies
\[
 Y_{i,\ell}\preceq p_i^\beta M_i^{1-\beta}.
\]
Equality is feasible simultaneously for all labels and every density,
including singular densities.
\end{lemma}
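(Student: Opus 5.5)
The plan is to prove the upper bound by induction on $j$, showing $Y_{i,j}\preceq p_i^{2^{-j}}M_i^{1-2^{-j}}$ for $0\le j\le\ell$, and then to verify that the candidate sequence with these equalities is feasible. The base case $j=0$ is $Y_{i,0}=p_iI_D=p_i\,M_i^0$. Here $M_i^0$ means the identity, not a carrier projection; we may compare on the whole space because $Y_{i,0}=p_iI_D$. The key step uses the block constraint \eqref{hr:eq:powerlmi} with the induction hypothesis $Y_{i,j-1}\preceq G:=p_i^{2^{-(j-1)}}M_i^{1-2^{-(j-1)}}$ (with $G=p_iI$ when $j=1$).

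Monotonicity of the block matrix in its lower-right entry then gives $\begin{pmatrix}M_i&Y\\Y&G\end{pmatrix}\succeq0$ with $Y=Y_{i,j}$. The standard fact that $\begin{pmatrix}A&X\\X&B\end{pmatrix}\succeq0$ with $X$ Hermitian implies $X\preceq A\# B$ (the matrix geometric mean) then yields
\[
Y_{i,j}\preceq M_i\# G .
\]
Since $M_i$ and $G$ commute, the geometric mean is $(M_iG)^{1/2}=p_i^{2^{-j}}M_i^{1-2^{-j}}$. This uses $\tfrac12(1+1-2^{-(j-1)})=1-2^{-j}$ and $\tfrac12 2^{-(j-1)}=2^{-j}$. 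The identification holds on the carrier, with both sides vanishing off $\ran M_i$ for $j\ge1$.

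The main obstacle is proving the fact $X\preceq A\# B$ in the singular case, since $M_i$ is singular off its carrier and possibly on it when $S$ is singular. I would first treat $A\succ0$: the Schur complement gives $XA^{-1}X\preceq B$. Conjugating by $A^{-1/2}$ gives $(A^{-1/2}XA^{-1/2})^2\preceq A^{-1/2}BA^{-1/2}$, and operator monotonicity of the square root (Löwner--Heinz, or Lemma~\ref{hr:lem:carrierpower} with exponent $1/2$) gives $A^{-1/2}XA^{-1/2}\preceq(A^{-1/2}BA^{-1/2})^{1/2}$. This uses $Z\preceq|Z|=(Z^2)^{1/2}$ for Hermitian $Z$. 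Conjugating back gives $X\preceq A\#B$. For singular $A$, I would replace $M_i$ by $M_i+\epsilon I$, which preserves feasibility, apply the result, and let $\epsilon\downarrow0$ using continuity of the commuting geometric mean $((M_i+\epsilon)G)^{1/2}$. Here $G$ is a function of $M_i$, or of $p_i$ times the identity at the first step. One subtle point is that at step $j\ge2$, $G$ is a nonnegative power of $M_i$ extended by zero, so commutation holds. The case $p_i=0$ forces $M_i=0$, hence all $Y_{i,j}\preceq0$ and so $Y_{i,j}=0$, consistent with $F_\beta(0)=0$.

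For feasibility of equality, set $Y_{i,j}=p_i^{2^{-j}}M_i^{1-2^{-j}}$, extended by zero for $j\ge1$. All these matrices commute with $M_i$, so in a common eigenbasis each block constraint reduces to scalar $2\times2$ conditions $\begin{pmatrix}m&y\\y&g\end{pmatrix}\succeq0$ with $m,g\ge0$ and $y^2=mg$. These hold with equality of the determinant. Eigenvectors in $\ker M_i$ have $m=y=0$ and $g\ge0$, or $g=p_i$ when $j=1$, so they are also fine. The variables for different labels are independent, so equality holds for all labels and for every density, singular ones included, and $Y_{i,\ell}=p_i^\beta M_i^{1-\beta}=F_\beta(M_i)$.
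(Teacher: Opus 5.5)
Your proof is correct and follows the paper's argument. Both use induction via the Schur complement and the operator-monotone square root to get $Y_{i,j}\preceq M_i\#Y_{i,j-1}$, commutation with $M_i$ to evaluate the mean as the explicit dyadic power, and the scalar $2\times2$ check for equality. The one difference is the singular case: you regularize $M_i\to M_i+\epsilon I$ and pass to the limit in the commuting mean, whereas the paper shows that each $Y_{i,j}$ annihilates $\ker M_i$ and compresses to $\ran M_i$; both routes work.
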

\begin{proof}
For positive definite $M$, the block inequality
$\left(\begin{smallmatrix}M&Y\\Y&V\end{smallmatrix}\right)\succeq0$
with $Y\succeq0$ gives $YM^{-1}Y\preceq V$. Conjugating by
$M^{-1/2}$ and applying the order-preserving square root gives
\[
 Y\preceq M^{1/2}(M^{-1/2}VM^{-1/2})^{1/2}M^{1/2}
       =M\#V.
\]
This upper bound is attainable and increases with $V$ in PSD order.
Beginning with $V=pI$, induction therefore gives
$Y_j\preceq p^{2^{-j}}M^{1-2^{-j}}$. The upper bounds commute
with $M$, and inserting them makes every block constraint feasible.

If $M$ is singular, positivity of the block forces $Y$ to annihilate
$\ker M$: apply its quadratic form to $(tv,w)$ with $Mv=0$ and vary
the real and imaginary parts of $t$. Compress to $\ran M$ and use the
positive definite argument there. All subsequent $Y_j$ have this
support. The displayed powers, extended by zero, still attain the
bounds. If $p=0$, then $M=0$ and all variables $Y_j$ vanish.
\end{proof}

This construction is a dyadic case of the semidefinite representations
of matrix geometric means \cite{fawziSaunderson2020}. The explicit
argument above supplies the form and singular-support treatment used
here.

\subsection{The complete value program}

Define the affine output
\[
 R(Y)=I_4\otimes\sum_i c_iA_i^{1/2}Y_{i,\ell}A_i^{1/2}.
\]
In addition to \eqref{hr:eq:powerlmi}, use a density $S$, an arbitrary
complex matrix $X$, and a Hermitian PSD matrix $U$, with
\begin{equation}\label{hr:eq:fidelitylmi}
 \begin{pmatrix}S&X\\X^*&R(Y)\end{pmatrix}\succeq0,\qquad
 \begin{pmatrix}S&U\\U&I_{4D}\end{pmatrix}\succeq0,\qquad
 \Tr S=1.
\end{equation}
The objective is
\begin{equation}\label{hr:eq:sdpobjective}
 \Tr(\mathcal HS)+2\operatorname{Re}\Tr X+2\theta\Tr U.
\end{equation}
Every constraint and the objective are affine in the displayed
variables.

\begin{proposition}[The value program]\label{hr:prop:valuesdp}
The maximum of \eqref{hr:eq:sdpobjective}, subject to
\eqref{hr:eq:powerlmi}--\eqref{hr:eq:fidelitylmi}, equals
$\mathcal E_\theta(H,K,c)$. The program has polynomial scalar data size
in $N,D$, and $\log q$.
\end{proposition}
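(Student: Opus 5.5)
The plan is to prove two inequalities between the SDP value and $\mathcal E_\theta$, and then count the data size.

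\emph{Upper bound: every feasible point of the program is dominated by the variational objective at its density $S$.}
Fix a feasible point.

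First, the second block in \eqref{hr:eq:fidelitylmi} gives $U^2\preceq S$. Operator monotonicity of the square root then gives $U\preceq\sqrt S$, so $\Tr U\le\Tr\sqrt S$.

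Second, the first block gives $X=S^{1/2}CR(Y)^{1/2}$ for some contraction $C$. Hence
\[
 \operatorname{Re}\Tr X\le\norm{S^{1/2}R(Y)^{1/2}}_*=\Fid(S,R(Y)),
\]
which is the polar/nuclear-norm formula already used in Lemma~\ref{hr:lem:transport}.

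Third, Lemma~\ref{hr:lem:powersdp} gives $Y_{i,\ell}\preceq F_\beta(M_i)$ for every $i$. Since $c_i\ge0$ and congruence by $A_i^{1/2}$ and tensoring with $I_4$ preserve PSD order, this yields $R(Y)\preceq\Omega_c(S)$.

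Monotonicity of fidelity (Lemma~\ref{hr:lem:transport}) then bounds the objective \eqref{hr:eq:sdpobjective} by the bracket in \eqref{hr:eq:potential} evaluated at $S$. That bracket is at most $\mathcal E_\theta$.

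\emph{Lower bound: the maximizing density can be lifted to a feasible point attaining the value.}
Take the maximizing density $S$; Proposition~\ref{hr:prop:regularity} shows it exists, and compactness alone would suffice. Build the lift as follows.

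1. Set $Y_{i,j}=p_i^{2^{-j}}M_i^{1-2^{-j}}$, extended by zero. Lemma~\ref{hr:lem:powersdp} says this is feasible and gives $R(Y)=\Omega_c(S)$ exactly, including singular carriers.
2. Set $U=\sqrt S$. The block $\bigl(\begin{smallmatrix}S&\sqrt S\\\sqrt S&I\end{smallmatrix}\bigr)$ is PSD because its Schur complement $S-\sqrt S\,\sqrt S$ vanishes.
3. Write the polar decomposition $S^{1/2}\Omega^{1/2}=V|S^{1/2}\Omega^{1/2}|$ and put $X=S^{1/2}V'\Omega^{1/2}$, with $V'$ a partial isometry chosen so that $\Tr X=\norm{S^{1/2}\Omega^{1/2}}_*$.

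For step 3 the first block is PSD because it factors as
\[
 \begin{pmatrix}S^{1/2}\\ \Omega^{1/2}V'^*\end{pmatrix}
 \begin{pmatrix}S^{1/2}& V'\Omega^{1/2}\end{pmatrix},
\]
with the lower-right entry bounded by $\Omega$ because $V'$ is a contraction. Concretely, take $V'$ to be the adjoint of the polar unitary of $\Omega^{1/2}S^{1/2}$. Then $\Tr(S^{1/2}V'\Omega^{1/2})=\Tr(V'\Omega^{1/2}S^{1/2})$ equals the trace norm. The objective equals $\mathcal E_\theta$, so the two values coincide and the maximum is attained.

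\emph{Size count.}
The variables are:
\begin{itemize}
\item $S,X,U$, each of size $O(D)$ blockwise;
\item $\ell=\log_2 q$ matrices of size $D\times D$ per label, i.e.\ $N\ell$ matrices in all;
\item the $2\times2$ block LMIs, of dimension $O(D)$.
\end{itemize}
The data are the entries of $A_i^{1/2}$, the $c_i$, $\theta$, and $\mathcal H$. So the scalar data size is polynomial in $N$, $D$ and $\log q$. For complex data we realify in the standard way, which doubles the dimensions.

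\emph{Main obstacle.}
The subtle step is the upper bound for singular $S$ or singular $R(Y)$. There the factorization $X=S^{1/2}CR^{1/2}$ needs the standard range lemma for PSD block matrices ($\ran X\subseteq\ran S$, and the Douglas factorization). The lift in step 3 likewise needs a partial isometry rather than an invertible transport. I would handle both directly with the nuclear-norm duality $\sup\{\operatorname{Re}\Tr X\}$, which avoids inverting $S$ or $\Omega_c(S)$, rather than through the transport formula.
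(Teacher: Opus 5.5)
Your proposal is correct and follows essentially the paper's own route. The contraction factorization $X=S^{1/2}CR^{1/2}$ together with the polar decomposition identifies the first block's maximum with $\Fid(S,R(Y))$, the second block reduces to $U^2\preceq S$ with $U=\sqrt S$ optimal, and Lemma~\ref{hr:lem:powersdp} plus monotonicity of fidelity replaces $R(Y)$ by $\Omega_c(S)$. The paper presents this as an inner maximization at each fixed density rather than as separate upper and lower bounds, and it states the size count explicitly as $O((N+1)(D+1)^4(\ell+1))$ scalars, but these differences are only presentational.
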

\begin{proof}
For fixed PSD matrices $S,R$,
\[
 \max\left\{\operatorname{Re}\Tr X:
    \begin{pmatrix}S&X\\X^*&R\end{pmatrix}\succeq0\right\}
 =\Fid(S,R).
\]
Indeed the block inequality is equivalent, on the respective supports,
to $X=S^{1/2}KR^{1/2}$ for a contraction $K$. The polar decomposition
of $R^{1/2}S^{1/2}$ shows that the maximum trace pairing with such
a contraction is its trace norm, which is the fidelity. Compression
handles singular matrices. Similarly the second block in
\eqref{hr:eq:fidelitylmi} is equivalent to $U^2\preceq S$;
for $U\succeq0$, this implies $U\preceq\sqrt S$, and equality is
attained by $U=\sqrt S$.

Lemma~\ref{hr:lem:powersdp} gives $R(Y)\preceq\Omega_c(S)$,
and equality is feasible. Monotonicity of fidelity therefore identifies
the maximum for each fixed density with the defining density objective.
Maximizing over densities proves the result.

There are $N\ell$ carrier blocks of order $2D$ and two fidelity/root
blocks of order $8D$, in addition to the PSD constraints on the
individual $Y_{i,j}$, $U$, and $S$. Storing the affine coefficients densely within each block,
without storing zeros for variables absent from that block, uses
$O((N+1)(D+1)^4(\ell+1))$ scalars. Standard realification represents
a complex Hermitian inequality $W\succeq0$ by
\[
 \begin{pmatrix}\operatorname{Re}W&-\operatorname{Im}W\\
                 \operatorname{Im}W&\operatorname{Re}W\end{pmatrix}
 \succeq0,
\]
increasing dimensions by a constant factor. The real and imaginary
parts of all variables remain subject to their original linear
structure constraints.
\end{proof}

The block inequalities are the usual Schur-complement descriptions
used in semidefinite optimization \cite[Appendix~A.5.5]{boydvandenberghe2004}.
They also prove feasibility and finiteness directly: the preceding
construction is feasible for every density, while the trace bounds in
Section~\ref{hr:sec:foundations} bound the objective above.

For real symmetric data, complex conjugation preserves every constraint
and the objective. Averaging a feasible point with its conjugate gives
real variables with the same objective. Thus the real SDP has the
same value. Model~\ref{hr:model:computation} supplies its
additive-$\nu$ value in polynomial work once $\nu^{-1}$ is
polynomially bounded. This is the routine
$\operatorname{VALUE}$ in Algorithm~\ref{hr:alg:walk}.

\section{Correctness and polynomial work}\label{hr:sec:runtime}

We now replace the compact minimizer by the operations in
Algorithm~\ref{hr:alg:walk}. The analytic inputs are the capped
response estimate and uniform derivative bounds. The finite proof
has three parts: failed preparation tests imply the cap; the cap
supplies a curvature direction with a quantitative margin; and the
budget matrix pays for cleanup and epoch restarts.

\subsection{Uniform quantitative bounds}

The following bounds are proved in Appendix~\ref{hr:sec:regularity}.
They are stated here to specify exactly what the value tests use.

\begin{lemma}[Floors and derivatives]\label{hr:lem:queryregularity}
At every query of Algorithm~\ref{hr:alg:walk}, the optimizing density
and supported transport satisfy
\[
 S\succeq s_0I_{4D},\qquad Z\succeq\overline B^{-1}I.
\]
For every retained label with positive reserve,
$p_i\ge p_0$ and $\tau_i\ge\tau_0$.
At a cleaned state, all positive reserves are at least $2\zeta$.
The second derivative of each preparation path is bounded in absolute
value by $M$ for preparation lengths in $[0,\zeta]$. The composite
function \eqref{hr:eq:compositevalue} has third derivative norm at
most $M$ on
\[
 \norm u_2\le\min\{\rho/2,\sqrt{\zeta/(2a)}\}.
\]
All source supports remain fixed in these neighborhoods.
\end{lemma}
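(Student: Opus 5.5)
The plan is to prove the floors first, in the order density, transport, probes, and then derive the derivative bounds from them by a quantitative implicit function argument. Throughout I would use two uniform bounds valid at every query. First, $\norm H\le2$ and $\norm K\le R+1$, from the feasibility bounds after \eqref{hr:eq:centers} with $b\le1$. Second, the objective in \eqref{hr:eq:potential} is bounded at every density: by the proof of Lemma~\ref{hr:lem:budget} its absolute value is at most $R+1+4\sqrt{aR}+2\theta\sqrt{4D}$, which is well below $\overline B=64aT_*$. The point of the constants in \eqref{hr:eq:algorithmparameters} is that every quantity appearing below is a fixed polynomial in $a,T_*$, dominated by $\overline B$.

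\emph{Density floor.} Split the objective as $g(S)+2\theta\Tr\sqrt S$, where $g$ is concave (Proposition~\ref{hr:prop:regularity}). Let $\lambda$ be the smallest eigenvalue of the optimizer $S$, and compare $S$ with $S_t=(1-t)S+tI_{4D}/(4D)$. Concavity gives $g(S_t)-g(S)\ge t(g(I/4D)-g(S))\ge-t\overline B/2$. The root term gains at least about $\theta t/\sqrt{\lambda+t}$ along the bottom eigenvector; the other eigenvalues lose only $O(t)$ in total, since $\sqrt{\,\cdot\,}$ is $\tfrac12$-Lipschitz above the smallest eigenvalue after a rescaling. Optimality of $S$ then forces $\theta/\sqrt\lambda\lesssim\overline B$, that is, $\lambda\ge(\theta/\overline B)^2=s_0$. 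I expect the constant bookkeeping here to be routine.

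\emph{Transport floor.} Use the stationarity equation \eqref{hr:eq:stationarity}. The adjoint $D\Omega_c(S)^*$ is positive, because $D\Omega_c(S)$ is positive by Lemma~\ref{hr:lem:carrierpower} together with the congruence structure, and $Z\succ0$; the root term $\theta S^{-1/2}$ is also positive. Hence $\iota(Z^{-1})\preceq(\ell+\norm{\mathcal H})I$. Pairing \eqref{hr:eq:stationarity} with $S$ expresses $\ell$ as
\[
 \Tr(\mathcal HS)+\Tr(S_0Z^{-1})+\Tr(ZD\Omega_c(S)[S])+\theta\Tr\sqrt S.
\]
By homogeneity, \eqref{hr:eq:centering}, this equals the objective plus the nonnegative amount $\theta\Tr\sqrt S$ minus $\theta\Tr\sqrt S$ counted twice in total; in any case $\ell$ is bounded by the objective bound. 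This gives $Z\succeq\overline B^{-1}I$ on $\mathcal K$.

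\emph{Probe floors.} For a retained label we have $\norm{A_i}>\eta$, so
\[
 p_i=\Tr(A_iT)\ge s_0\Tr A_i\ge s_0\eta=p_0 .
\]
For the source probe, $F_\beta(M_i)\succeq M_i$ as in Lemma~\ref{hr:lem:twoframes}, so
\[
 \tau_i\ge\overline B^{-1}\Tr\Omega_i(S)\ge4\overline B^{-1}\Tr(A_iM_i)\ge4\overline B^{-1}s_0\Tr(A_i^2)\ge4s_0\eta^2/\overline B=\tau_0 .
\]
The statement about cleaned states is immediate from the cleanup step of Algorithm~\ref{hr:alg:walk}, which zeroes every reserve below $2\zeta$.

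\emph{Fixed supports.} I would check that the neighborhoods keep the active set unchanged. For $\norm u_2\le\sqrt{\zeta/(2a)}$ we get $c_i-au_i^2\ge2\zeta-\zeta/2>0$, and for $\norm u_2\le\rho/2$ the coordinates remain interior, because the rounding step ensures $1-|x_i|>\rho$. For a preparation of length at most $\zeta$ the reserve stays at least $\zeta$. So $\mathcal K$ is fixed and Proposition~\ref{hr:prop:regularity} gives smoothness. The floors hold throughout these neighborhoods, since the arguments above used only the feasibility bounds, which persist.

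\emph{Derivative bounds: the main obstacle.} The remaining step is the explicit bound $M$ on the second derivative of each preparation path and on the third derivative of $\varphi$. I would differentiate the coupled system \eqref{hr:eq:stationarity} for $(S,Z)$ up to third order. The linearized operator is invertible with an explicit bound: after eliminating $Z$ (as in Lemma~\ref{hr:lem:envelope}), the density Hessian is at most $-\mathcal C_{\rm root}$. Since every eigenvalue of $S$ is at most $1$, \eqref{hr:eq:rootcurvature} gives $\mathcal C_{\rm root}(X)\ge\frac\theta2\norm X_{\mathrm F}^2$, so the inverse has norm at most $2/\theta$. The transport block is $\mathcal J_P\succeq$ a polynomial in $s_0,\overline B^{-1}$.

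All higher derivatives of the data maps then need polynomial bounds in $s_0^{-1},\overline B,p_0^{-1}$. These maps are $S\mapsto\Omega_c(S)$, where the resolvent integrals \eqref{hr:eq:powerderivative}--\eqref{hr:eq:powersecond} give bounds in terms of $p_i^{-1}$ and the smallest carrier eigenvalue (at least $s_0$ times the smallest nonzero eigenvalue of $A_i$), together with $Z^{-1}$ and $S^{-1/2}$. Faà di Bruno then bounds the third derivatives of $S(u),Z(u)$. By the envelope formula, $\varphi'$ involves only first derivatives of the branch, so $\varphi'''$ needs only second-order response.

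The delicate point is that the carrier eigenvalue floor depends on the smallest nonzero eigenvalue of $A_i$, which is not obviously polynomial. I would try to avoid it by applying the integral formulas in the compressed form $A_i^{1/2}(\cdot)A_i^{1/2}$, so that only $T\succeq s_0$ and $p_i\ge p_0$ enter. Failing that, I would check that the recipe \eqref{hr:eq:derivative-recipe} absorbs this dependence.
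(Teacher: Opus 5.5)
Your floors are sound. The transport and probe floors are the paper's argument. Your mixing argument for $S\succeq s_0I$ can be made to work, although the stationarity equation you already use gives it at once: every term other than $\theta S^{-1/2}$ is positive semidefinite, so $\theta S^{-1/2}\preceq(\ell+\norm{\mathcal H})I\preceq\overline BI$.

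\textbf{The gap is in the derivative bounds, which are the substance of the lemma.}

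Your plan rests on the claim that the transport block obeys $\mathcal J_P\succeq$ a polynomial in $s_0,\overline B^{-1}$, and this is false. The best lower bound for $\mathcal J_P$ is $2\lambda_{\min}(P)$. Since $P^2$ is similar to $\Omega_c(S)S_0$ on $\mathcal K$ and $S_0\preceq I$, one has $\lambda_{\min}(P)^2\le\lambda_{\min}$ of $\Omega_c(S)$ on $\mathcal K$. Moreover, $F_\beta(M_i)\preceq p_iI$ and $p_i\le1$ give $\Omega_c(S)\preceq aR\,(I_4\otimes\sum_{i\in I}A_i)$. So $\lambda_{\min}(P)^2$ is at most $aR$ times the smallest positive eigenvalue of $\sum_{i\in I}A_i$. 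The hypotheses do not bound this below, since only $\norm{A_i}>\eta$ is known, while $s_0$ and $\overline B$ are unaffected.

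Your carrier bounds through the smallest carrier eigenvalue have the same defect, as you note. Your fallback cannot work: $M$ in \eqref{hr:eq:derivative-recipe} is built only from $a,\zeta,s_0,\theta,m$, and the lemma must hold for arbitrary retained inputs. So any argument that inverts $\mathcal J_P$ in an unweighted norm, or differentiates the power with absolute resolvent bounds, cannot give the stated $M$.

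The missing idea is to use relative, two-sided PSD-order bounds, which survive the congruences involved.

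\emph{Carrier.} If $-bM\preceq U\preceq bM$, then $\pm D^jF_\beta(M)[U,\ldots,U]\preceq(j+1)j!\,b^jF_\beta(M)$. This is proved inside the resolvent integral by writing $R_t^{1/2}UR_t^{1/2}=B_t^{1/2}CB_t^{1/2}$, with $B_t=R_t^{1/2}MR_t^{1/2}\preceq I$ and $\norm C\le b$. It applies with $U=A_i^{1/2}(\sum_jX_{jj})A_i^{1/2}$ and $b=\norm X/s_0$, because the same congruence by $A_i^{1/2}$ acts on $T\succeq4s_0I$ and on its variation. This is what your ``compressed form'' has to become.

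\emph{Fidelity.} Differentiate the transport $R(t)$, with $R(0)=I$, between $Z^{-1/2}S_0(t)Z^{-1/2}$ and $Z^{1/2}\Omega(t)Z^{1/2}$, where $\Omega(t)$ is the source along the line. Normalize by $P^{-1/2}$ and use $\norm{P^{-1/2}\mathcal J_P^{-1}(X)P^{1/2}}_{\mathrm F}\le\norm{P^{-1/2}XP^{-1/2}}_{\mathrm F}$. The resulting bounds involve only $\Tr P\le C_0$, $\sqrt m$ and the relative scale, never an eigenvalue of $P$.

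\emph{Implicit step.} With the joint bounds $B_2,B_3$ on the reduced density objective $f$, the implicit step uses only $\mu=\theta/2$, which you have. Differentiated stationarity, $f''[V_h,(0,Y)]=0$, cancels the $S''$ terms and gives $D^3\varphi=f'''[V_h,V_h,V_h]$ with $\norm{V_h}\le A_1\norm h$. Your Fa\`a di Bruno route through second-order response is legitimate in principle. However, it would not produce the specific $M=B_3A_1^3$ from which the algorithm's step sizes are computed.
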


The floors use only the cutoff on the norm of a whole input matrix.
The derivative bound does not involve its smallest positive eigenvalue.
This distinction is useful for low-rank or nearly singular inputs:
their full ranges still receive one common sign.

\subsection{What a failed preparation test proves}

For active label $i$, let $P_i(t)$ be the value after reducing $c_i$
by $t$ and increasing $s_i$ by $t/a$. At zero,
\[
 P_i'(0)=\frac1a\Tr(A_i(S_{33}-S_{44}))-\tau_i.
\]
Taylor's theorem and the two value errors imply
\begin{equation}\label{hr:eq:preperror}
 \left|\frac{v_i-v_0}{\lambda}-P_i'(0)\right|
 \le\frac{M\lambda}{2}+\frac{2\nu}{\lambda}
 \le\frac{p_0}{16a}+\frac{p_0}{32a}.
\end{equation}
If the test is rejected, its reported difference is greater than
$-\lambda p_0/(4a)$, and therefore
\[
 P_i'(0)>-\frac{p_0}{4a}-\frac{p_0}{16a}-\frac{p_0}{32a}
 >-\frac{p_0}{2a}.
\]
Using $|\Tr(A_i(S_{33}-S_{44}))|\le p_i$ and $p_i\ge p_0$
gives
\begin{equation}\label{hr:eq:failedcaps}
 q_i\le\frac{3}{2a},\qquad
 c_iq_i^2\le\frac{9R}{4a}=\frac9a<\frac1{48}.
\end{equation}
Every rejection used in this conclusion occurs at the same state.
This explains the restart after an accepted preparation.

Conversely, acceptance guarantees that the true potential decreases
by at least
$\lambda p_0/(4a)-2\nu\ge7\lambda p_0/(32a)$.
It removes exactly $\lambda$ reserve. Since reserves never increase
during an epoch, there are at most $NaR/\lambda$ acceptances.

\subsection{Curvature, value error, and deterministic progress}

When all tests fail, apply Corollary~\ref{hr:cor:negative} with
the constraint $h\perp x$. As $a=1024q$ and $\beta=1/q$,
the true Hessian $L=\nabla^2\varphi(0)$ has a unit Rayleigh
quotient at most $-a\tau_0=-4\gamma$ on $x^\perp$.

A diagonal central difference is an average of second derivatives
along a segment; a mixed difference is an average on a square.
The bound on the third derivative therefore bounds each truncation
error by $2Mt$. The value errors contribute at most $4\nu/t^2$
per entry. For a matrix of order at most $N$ this yields
\begin{equation}\label{hr:eq:hessianerror}
 \norm{\widetilde L-L}
 \le N(2Mt+4\nu/t^2)
 \le\frac{3\gamma}{64}<\frac\gamma8.
\end{equation}
The stencils stay in the neighborhood of
Lemma~\ref{hr:lem:queryregularity} by the choices of $r_*$ and $t$.

The minimum eigenvalue of $P_x\widetilde L P_x$ is negative.
If its unit eigenvector is $g$, applying $I-P_x$ to the eigenvector
equation shows $g=P_xg$. A Rayleigh comparison using
\eqref{hr:eq:hessianerror} twice gives
\[
 g^\top Lg\le-4\gamma+2\norm{\widetilde L-L}<-3\gamma.
\]
This uses no separation between adjacent eigenvalues.
Taylor expansion in the two signs gives
\[
 \frac{\varphi(hg)+\varphi(-hg)}2
 \le\varphi(0)-\frac32\gamma h^2+\frac{Mh^3}{6}.
\]
The smaller reported value is at most $2\nu$ above the smaller
true value. Since $Mh\le\gamma/16$ and
$2\nu\le\gamma h^2/32$, the selected step decreases the true
potential by at least $\gamma h^2/2$.

Every step remains in the cube and keeps positive reserves at least
$\zeta$ until cleanup. It also satisfies
\eqref{hr:eq:radialprogress}. Preparation and reserve cleanup leave
$x$ unchanged; outward rounding increases $\norm x_2^2$.
The entire epoch vector, including coordinates already deferred or
fixed, remains in the cube. Thus there are at most $N/h^2$
curvature steps in an epoch.

\subsection{Cleanup and contraction between epochs}

Rounding a coordinate at distance at most $\rho$ from a face changes
$H$ by norm at most $\rho\varepsilon$ and $K$ by at most
$2\rho\varepsilon$. Deleting its source cannot increase the
potential, so the total increase is at most $2\rho\varepsilon$.
Exhausting an interior reserve below $2\zeta$ changes only $K$,
by $(c_i/a)A_i$, at cost at most $2\zeta\varepsilon/a$.
Each label undergoes each kind of cleanup at most once in an epoch.
All these operations preserve \eqref{hr:eq:state}.

For an epoch with initial mass $b$, the initial potential and all
cleanup costs give the terminal bound
\begin{equation}\label{hr:eq:terminalpotential}
 \mathcal E_\theta^{\rm end}
 \le4\sqrt{\alpha_0b}+4\theta\sqrt D
                   +2N\varepsilon(\rho+\zeta/a)
 \le4\sqrt{\alpha_0b}+8T_*^{-9}.
\end{equation}
An epoch is run only if $b>\delta^2=25\alpha_0$. Because
$\varepsilon\ge1/N$,
\[
 \sqrt{\alpha_0b}>5\alpha_0\ge10aR/N>8T_*^{-9}.
\]
Therefore both terminal centers have norm at most $\delta\sqrt b$.
All terminal reserves vanish, and \eqref{hr:eq:budgetcompletion}
gives
\begin{equation}\label{hr:eq:algorithmcontraction}
 b_{\rm next}\le(b+\delta\sqrt b)/4<b/2.
\end{equation}
Each contracting epoch fixes at least one nonzero label, so there are
at most $N$ epochs. The current fractional coordinates are retained
at restart. The estimate \eqref{hr:eq:terminalpotential} pays for
the newly initialized reserves at the smaller unfinished mass.

Summing the discrepancy increments as in the existence proof, then
rounding the last mass and adding the set-aside matrices, gives
\begin{equation}\label{hr:eq:algorithm-bound}
 \left\|\sum_i\chi_iA_i\right\|
 \le(4+\sqrt2)\delta+N\eta
 <3000\sqrt{\varepsilon q}
 <6000\sqrt{\varepsilon\log(2r)}.
\end{equation}
The early return when $\delta\ge1$ satisfies the same bound.
Subisotropy also bounds the norm of every returned signing by one.

\subsection{Counting the work}

An epoch has at most $NaR/\lambda$ successful preparations,
$N/h^2$ curvature steps, and $2N$ cleanup events. A preparation
scan takes at most $N+1$ value calls, and a Hessian stencil takes
$O(N^2)$ calls and one EVD of order at most $N$. All remaining
state updates and matrix arithmetic have polynomial cost.

The dyadic parameter obeys $q=O(\log(2D))$. The recipe in
\eqref{hr:eq:algorithmparameters}, \eqref{hr:eq:stepsizes}, and
\eqref{hr:eq:derivative-recipe} has fixed exponents, independent
of $q$. Since $1/N\le\varepsilon\le1$, its upper bounds and all
required inverse accuracies are bounded by a fixed polynomial in
$N,D$. The value SDP has data size
$O((N+1)(D+1)^4(1+\log q))$. Model~\ref{hr:model:computation}
therefore bounds each call by a fixed polynomial, and at most $N$
epochs give a fixed polynomial for the entire execution. Here the
polynomial is fixed once the permitted solver and its constants are
fixed; it does not depend on individual matrix entries.

For complex inputs, replace each original matrix by
\[
 \mathscr R(A)=\begin{pmatrix}\operatorname{Re}A&-\operatorname{Im}A\\
                      \operatorname{Im}A&\operatorname{Re}A\end{pmatrix}.
\]
This preserves PSD order, operator norms, sums, and each signing's
discrepancy, while doubling dimension and rank. Apply the real
algorithm to these whole matrices. Since $\log(4r)\le2\log(2r)$,
the bound is at most $9000\sqrt{\varepsilon\log(2r)}$, within the
constant of Theorem~\ref{hr:thm:main}. Realification still assigns
one sign to each original input.

\begin{corollary}[Subisotropic and rescaled inputs]
\label{hr:cor:normalized-subisotropic}
For real PSD inputs with $\sum_i A_i\preceq I_D$ and
$\norm{\sum_i A_i}=1$, the algorithm returns signs with bound
$\min\{1,6000\sqrt{\varepsilon\log(2r)}\}$ in polynomial work.
More generally, if $b=\norm{\sum_i A_i}>0$, scaling the inputs
by $b^{-1}$ gives bound
\[
 \min\{b,6000\sqrt{\varepsilon b\log(2r)}\}.
\]
The corresponding existence assertions require no computational
primitives.
\end{corollary}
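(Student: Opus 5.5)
The first assertion is exactly what the preceding analysis of Algorithm~\ref{hr:alg:walk} establishes. Under the normalization $\norm{\sum_iA_i}=1$, if $\delta\ge1$ the all-positive signing has norm at most $1\le\delta$, and subisotropy bounds any signing by one. Otherwise the algorithm runs its epochs, and \eqref{hr:eq:algorithm-bound} gives $3000\sqrt{\varepsilon q}<6000\sqrt{\varepsilon\log(2r)}$, so the bound is $\min\{1,6000\sqrt{\varepsilon\log(2r)}\}$. Polynomial work is the count from the previous subsection. I will only restate this and check that the parameters are computed from the actual $\varepsilon$ and $r$. Since this $\varepsilon$ is the true maximum norm, it lies below any supplied upper bound, so the bound holds for every admissible $\varepsilon$.

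For the rescaled statement, put $A_i'=A_i/b$. The family $(A_i')$ satisfies $\norm{\sum_iA_i'}=1$, has the same ranks, and has $\norm{A_i'}\le\varepsilon/b=:\varepsilon'$. To apply the first part we also need $\sum_iA_i'\preceq I_D$. This follows from $\sum_iA_i\preceq bI_D$, which holds for any PSD sum of norm $b$, so the assumption $\sum_iA_i\preceq I$ is not needed at all. The first part then gives signs with $\norm{\sum_i\chi_iA_i'}\le\min\{1,6000\sqrt{(\varepsilon/b)\log(2r)}\}$. Multiplying by $b$ yields $\min\{b,6000\sqrt{\varepsilon b\log(2r)}\}$. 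Rescaling costs a norm computation (one spectral primitive call) and $O(ND^2)$ divisions, so the work stays polynomial.

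For the existence assertions, Theorem~\ref{hr:thm:existence} applies to the rescaled family with $\beta=1/q$. It gives the bound $700\sqrt{\varepsilon' q}$, which is at most $6000\sqrt{\varepsilon'\log(2r)}$ because $q\le2\log_2(2r)$. That theorem uses no computational model, and undoing the scaling as above finishes the argument. The one point needing care, and the closest thing to an obstacle, is that the normalization $\norm{\sum A_i}=1$ was used inside the algorithm's parameter analysis through $\varepsilon\ge1/N$ and $b\le1$. The rescaled family satisfies exactly those hypotheses, so the bound \eqref{hr:eq:algorithm-bound} is invoked only in its normalized setting.
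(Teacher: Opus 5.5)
Your proposal is correct and follows the paper's argument. The normalized bound is read off from the algorithm analysis, which uses only subisotropy and $\norm{\sum_iA_i}=1$ (the latter supplying $\varepsilon\ge1/N$); the general case follows by scaling by $b^{-1}$, which keeps ranks and gives norms at most $\varepsilon/b$, and then multiplying by $b$. Your explicit appeal to Theorem~\ref{hr:thm:existence} with $\beta=1/q$ for the existence assertions, and your observation that $\sum_iA_i\preceq bI_D$ makes subisotropy automatic after scaling, spell out what the paper's short proof leaves implicit.
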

\begin{proof}
The proof above only uses subisotropy and the normalization
$\norm{\sum_i A_i}=1$, the latter supplying $\varepsilon\ge1/N$.
After scaling, the norm of each input is at most $\varepsilon/b$
and its rank is unchanged. Multiplying the output bound by $b$
gives the result. If the total matrix is zero, every PSD input is
zero and any signing suffices.
\end{proof}

\section{Simultaneously spectrally thin spanning trees}
\label{hr:sec:thin-trees}

The higher-rank theorem applies to several weightings of the same graph
by assigning one block to each weighting. The common sign on an
original matrix is essential: it ensures that all blocks retain the
same edges. We first give the halving argument, including the lower
bound that preserves connectivity.

For an integer $r\ge1$, let $q(r)$ be the least power of two at least
$\max\{2,\log_2(2r)\}$, and put
\[
 K_r=3000\sqrt{q(r)}.
\]
The signing theorem supplies signs of discrepancy at most
$K_r\sqrt{\eta}$ for isotropic positive semidefinite inputs of rank
at most $r$ and norm at most $\eta$. The next lemma applies to any
such signing guarantee with $K_r\ge1$.

\begin{lemma}[Repeated halving]\label{hr:lem:thin-halving}
Suppose $A_1,\ldots,A_m\succeq0$ satisfy
\[
 \sum_{e=1}^m A_e=I_d,\qquad
 \norm{A_e}\le\varepsilon,\qquad \rank A_e\le r,
 \qquad d\ge1.
\]
There is a subset $H\subseteq[m]$ such that
\begin{equation}\label{hr:eq:thin-subcollection}
 0\prec\sum_{e\in H}A_e
 \preceq\min\{1,1024K_r^2\varepsilon\}I_d.
\end{equation}
A signing oracle with coefficient $K_r$ finds such a subset using
$O(\log m)$ calls and polynomial additional matrix arithmetic.
\end{lemma}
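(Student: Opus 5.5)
The plan is to halve the edge set repeatedly with the signing oracle, keeping the half whose sum is at most the identity, and to stop while the current subfamily still has a sum bounded below. If $1024K_r^2\varepsilon\ge1$, take $H=[m]$; the sum is $I_d$ and the claim is immediate. Otherwise, put $\varepsilon_0=\varepsilon$ and $H_0=[m]$, and maintain a subset $H_j$ and a scalar $\kappa_j>0$ with
\[
 (1-\delta_j)\kappa_j I_d\preceq\sum_{e\in H_j}A_e\preceq(1+\delta_j)\kappa_j I_d,
 \qquad \kappa_j=2^{-j}.
\]
To advance one round, normalize by $\Sigma_j=\sum_{e\in H_j}A_e$. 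Set $\widetilde A_e=\Sigma_j^{-1/2}A_e\Sigma_j^{-1/2}$, which is isotropic with rank at most $r$ and norm at most $\varepsilon/((1-\delta_j)\kappa_j)$. The oracle gives signs with $\norm{\sum\chi_e\widetilde A_e}\le K_r\sqrt{\varepsilon/((1-\delta_j)\kappa_j)}=:\gamma_j$. Then $\widetilde H=\{\chi_e=+1\}$ satisfies $\frac{1-\gamma_j}{2}I\preceq\sum_{\widetilde H}\widetilde A_e\preceq\frac{1+\gamma_j}{2}I$. Conjugating back multiplies the sandwich, so $1\pm\delta_{j+1}$ is bounded by $(1\pm\delta_j)(1\pm\gamma_j)$.

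For the error accounting, $\gamma_j$ grows like $2^{j/2}$, so $\sum_j\gamma_j$ is a geometric series dominated by its last term. I will stop at the first $J$ with $\kappa_J\le 1024K_r^2\varepsilon/2$, roughly. At that point $\gamma_{J-1}\lesssim K_r\sqrt{\varepsilon/\kappa_{J-1}}\le 1/16$ or so. The whole series $\sum_{j<J}\gamma_j\le(1+\sqrt2+\dots)\gamma_{J-1}$ stays below about $1/4$. Hence $\prod(1-\gamma_j)\ge1/2$ and $\prod(1+\gamma_j)\le 2$, which gives $0\prec\frac12\kappa_J I\preceq\sum_{H_J}A_e\preceq2\kappa_JI\preceq 1024K_r^2\varepsilon I$. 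The constant $1024$ is there to absorb these factors; I will choose the stopping threshold, for instance $\kappa_J\in(256K_r^2\varepsilon,512K_r^2\varepsilon]$, and check the numbers.

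The main obstacle is keeping the hypotheses valid inductively. Invertibility of $\Sigma_j$ (the lower bound $0\prec$) is what lets us renormalize. I also need $\gamma_j<1$ at every round, which follows from the choice of $J$ and $\varepsilon\le\kappa_j/(256K_r^2)$ before stopping. The number of rounds is $J=O(\log(1/\varepsilon))$. Since $\sum A_e=I_d$ with $d\ge1$ forces $m\varepsilon\ge1$, we get $J=O(\log m)$. Each round is one oracle call plus one inverse square root and congruence, which is polynomial matrix arithmetic given the spectral primitive. The rank of $\widetilde A_e$ equals that of $A_e$, and the oracle's norm hypothesis $\eta$ is the displayed bound.
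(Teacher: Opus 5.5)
Your proposal is correct and follows essentially the same route as the paper. Both arguments renormalize the current subfamily by $\Sigma_j^{-1/2}$, keep one sign class, stop at the dyadic scale $2^{-J}\in(256K_r^2\varepsilon,512K_r^2\varepsilon]$, bound the accumulated multiplicative error by $(2+\sqrt2)/16<1/4$ via induction, and obtain $J=O(\log m)$ from $m\varepsilon\ge1$. Two small fixes: your series factor should read $1+2^{-1/2}+2^{-1}+\cdots=2+\sqrt2$, and the step $J=O(\log(1/\varepsilon))$ uses $K_r\ge1$, which the paper assumes explicitly.
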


\begin{proof}
Write $\tau=K_r^2\varepsilon$. If $\tau\ge1/256$, take
$H=[m]$. Otherwise set
\[
 T=\left\lfloor\log_2\frac1{256\tau}\right\rfloor,
 \qquad 256\tau\le2^{-T}<512\tau.
\]
We construct nested sets $H_t$, beginning with $H_0=[m]$, and
write $B_t=\sum_{e\in H_t}A_e$. Suppose $B_t\succeq a_tI_d$.
The matrices
\[
 C_e=B_t^{-1/2}A_eB_t^{-1/2},\qquad e\in H_t,
\]
sum to the identity, have rank at most $r$, and have norm at most
$\varepsilon/a_t$. Apply the signing oracle to these matrices.
For either sign class, its sum in the original normalization lies
between
\begin{equation}\label{hr:eq:halving-step}
 \frac{1-e_t}{2}B_t
 \quad\hbox{and}\quad
 \frac{1+e_t}{2}B_t,
 \qquad e_t=K_r\sqrt{\varepsilon/a_t}.
\end{equation}
Let $H_{t+1}$ be either class; one may select the smaller class.

We verify inductively that $B_t\succeq2^{-t}I_d/2$ for
$0\le t\le T$. If this holds at all preceding steps, then
$e_j\le\sqrt{2^{j+1}\tau}$ and
\begin{equation}\label{hr:eq:halving-error}
 \sum_{j=0}^{t-1}e_j
 \le(2+\sqrt2)\sqrt{2^t\tau}
 \le\frac{2+\sqrt2}{16}<\frac14.
\end{equation}
In particular every factor $1-e_j$ is positive. Iterating
\eqref{hr:eq:halving-step}, and using
$\prod_j(1-e_j)\ge1-\sum_j e_j$ and
$\prod_j(1+e_j)\le\exp(\sum_j e_j)$, yields
\begin{equation}\label{hr:eq:halving-invariant}
 \frac34\,2^{-t}I_d\preceq B_t
 \preceq e^{1/4}2^{-t}I_d\preceq2\,2^{-t}I_d.
\end{equation}
This closes the induction. The upper bound at $t=T$ is less than
$1024\tau I_d$, while $B_T\preceq I_d$ because it is a sum of a
subset of the original positive semidefinite matrices. This also
handles $T=0$.

Finally, $1=\norm{\sum_eA_e}\le m\varepsilon$ and $K_r\ge1$,
so $T=O(\log m)$. Each oracle call uses at most $m$ labels in the
original dimension. Congruence preserves their ranks. The
normalizations are used only to obtain signs; the retained original
matrices are never reweighted.
\end{proof}

Let $G=(V,E)$ be a connected undirected loopless multigraph, with
$n\ge2$ vertices and $m$ edges. Orient its edges arbitrarily and let
$b_e\in\mathbb R^V$ be the signed incidence vector of edge $e$.
Consider $s$ positive weightings of the same edge set:
$w_e^{(j)}>0$ for every $e\in E$ and $1\le j\le s$. Define
\begin{equation}\label{hr:eq:graph-leverages}
 L_j=\sum_{e\in E}w_e^{(j)}b_eb_e^\top,\qquad
 \rho_e^{(j)}=w_e^{(j)}b_e^\top L_j^\dagger b_e.
\end{equation}
Here $L_j^\dagger$ is the Moore--Penrose pseudoinverse.
The number $\rho_e^{(j)}$ is the edge leverage score. For an edge
set $F$, write
$L_F^{(j)}=\sum_{e\in F}w_e^{(j)}b_eb_e^\top$; in particular,
the weights of a retained edge remain unchanged.

\begin{theorem}[One tree for several weightings]
\label{hr:thm:simultaneous-trees}
If $\rho_e^{(j)}\le\varepsilon$ for every edge $e$ and weighting
$j$, there is a single spanning tree $T\subseteq E$ such that
\begin{equation}\label{hr:eq:simultaneous-thin}
 L_T^{(j)}\preceq
 \min\{1,1024K_s^2\varepsilon\}L_j
 \qquad (1\le j\le s).
\end{equation}
Thus $T$ is $O(\varepsilon\log(2s))$-spectrally thin for all
$s$ weightings, with an absolute implied constant independent of
$n$ and $m$.

The reduction uses $O(\log m)$ signing calls in dimension
$s(n-1)$ and polynomial additional matrix arithmetic and graph
search. In particular it is deterministic polynomial time in the
same arithmetic and oracle model as the signing theorem.
\end{theorem}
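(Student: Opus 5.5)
The plan is to reduce to Lemma~\ref{hr:lem:thin-halving} through a direct-sum normalization, and then extract a tree from the resulting connected subgraph.

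\textbf{Normalized edge matrices.} Let $\Pi$ be the projection onto $\mathbf 1^\perp\subset\R^V$. Since $G$ is connected, each $L_j$ has kernel $\operatorname{span}\mathbf 1$. Fix an orthonormal basis $Q\in\R^{V\times(n-1)}$ of $\mathbf 1^\perp$, and put $\widehat L_j=Q^\top L_jQ\succ0$. For each edge define the block matrix
\[
 A_e=\diag\bigl(w_e^{(1)}\widehat L_1^{-1/2}Q^\top b_eb_e^\top Q\widehat L_1^{-1/2},\ \ldots,\ w_e^{(s)}\widehat L_s^{-1/2}Q^\top b_eb_e^\top Q\widehat L_s^{-1/2}\bigr),
\]
acting on dimension $d=s(n-1)\ge1$. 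Each block is rank one with norm $\rho_e^{(j)}\le\varepsilon$, so $\rank A_e\le s$ and $\norm{A_e}\le\varepsilon$. Summing over edges gives $\sum_eA_e=I_d$.

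\textbf{Halving.} Lemma~\ref{hr:lem:thin-halving} with $r=s$ now gives a subset $H\subseteq E$ with
\[
 0\prec\sum_{e\in H}A_e\preceq\min\{1,1024K_s^2\varepsilon\}I_d.
\]
Undoing the congruence blockwise yields, for every $j$, the two bounds $0\prec Q^\top L_H^{(j)}Q$ and $L^{(j)}_H\preceq\min\{1,1024K_s^2\varepsilon\}L_j$. The second holds on all of $\R^V$ because both sides annihilate $\mathbf 1$. Since $Q^\top L_H^{(1)}Q\succ0$ means $L_H^{(1)}$ has kernel exactly $\operatorname{span}\mathbf 1$, the subgraph $(V,H)$ is connected. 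Note that the lower bound in the halving lemma is exactly what preserves connectivity.

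\textbf{Tree extraction.} Let $T\subseteq H$ be any spanning tree of $(V,H)$, found by graph search. Then $L_T^{(j)}\preceq L_H^{(j)}$ because the dropped terms are PSD, which proves \eqref{hr:eq:simultaneous-thin}. The rate $O(\varepsilon\log(2s))$ follows from $K_s^2=9\cdot10^6q(s)$ and $q(s)\le2\log_2(2s)$.

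\textbf{Complexity.} Computing $Q$, the $\widehat L_j^{-1/2}$, and the $A_e$ uses the spectral primitive and arithmetic, and the lemma supplies $O(\log m)$ oracle calls in dimension $s(n-1)$. The constant $3000$ in $K_r$ is covered by Corollary~\ref{hr:cor:normalized-subisotropic}, whose hypotheses hold here since the inputs are real with $\sum_eA_e=I_d$.

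\textbf{Main obstacle.} The delicate point is the kernel bookkeeping: we must restrict to $\mathbf 1^\perp$ so that the family is genuinely isotropic, and then transfer positive definiteness back to connectivity. I would handle both through the identity $Q^\top L^{(j)}_FQ=\widehat L_j^{1/2}\bigl(\sum_{e\in F}A_e^{(j)}\bigr)\widehat L_j^{1/2}$, where $A_e^{(j)}$ is the $j$-th block of $A_e$, together with the standard fact that a graph Laplacian has kernel $\operatorname{span}\mathbf 1$ exactly when the graph is connected.
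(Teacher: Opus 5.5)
Your proposal is correct and follows the paper's proof essentially step for step: the direct-sum normalization on $\mathbf 1^\perp$, Lemma~\ref{hr:lem:thin-halving} with $r=s$, connectivity from the strict lower bound, and a spanning tree of $H$ via positivity of the dropped edge terms. One small correction to your complexity remark: the oracle coefficient $K_r=3000\sqrt{q(r)}$ comes from the intermediate estimate $<3000\sqrt{\varepsilon q}$ in \eqref{hr:eq:algorithm-bound}, and for existence from Theorem~\ref{hr:thm:existence} with $\beta=1/q(r)$. It does not come from the displayed bound $6000\sqrt{\varepsilon\log(2r)}$ of Corollary~\ref{hr:cor:normalized-subisotropic}, which is always weaker because $q(r)<4\log(2r)$.
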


\begin{proof}
Restrict every Laplacian to $\mathbf1^\perp$, where it is positive
definite. On this space put
\[
 v_e^{(j)}=\sqrt{w_e^{(j)}}L_j^{-1/2}b_e,\qquad
 A_e=\bigoplus_{j=1}^s v_e^{(j)}v_e^{(j)\top}.
\]
The direct-sum atoms satisfy
\[
 \sum_{e\in E}A_e=I_{s(n-1)},\qquad
 \rank A_e\le s,\qquad
 \norm{A_e}=\max_{1\le j\le s}\rho_e^{(j)}
 \le\varepsilon.
\]
Apply Lemma~\ref{hr:lem:thin-halving}, keeping the original edge
as one label throughout. It gives an edge set $H$ for which every
normalized block is positive definite and bounded above by
$\min\{1,1024K_s^2\varepsilon\}I_{n-1}$. Undoing the
normalizations gives
\[
 L_H^{(j)}\preceq
 \min\{1,1024K_s^2\varepsilon\}L_j.
\]
The strict lower bound implies that $H$ is connected: a weighted
graph Laplacian is positive definite on $\mathbf1^\perp$ exactly
when its positive-weight support is connected. Choose any spanning
tree $T$ of $H$. Positivity of the weights yields
$L_T^{(j)}\preceq L_H^{(j)}$ for every $j$, proving
\eqref{hr:eq:simultaneous-thin}. The oracle and arithmetic bounds
follow from the lemma, and a spanning tree of $H$ is found by an
ordinary graph search.
\end{proof}

The common positive support is part of the theorem. If different
weightings have different zero-weight supports, connectivity of
each retained support need not provide a tree contained in all of
them. The proof therefore does not assert that variant.

\begin{corollary}[One weighting]\label{hr:cor:ordinary-tree}
A connected weighted graph with maximum edge leverage at most
$\varepsilon$ has an $O(\varepsilon)$-spectrally thin spanning
tree. In particular, an unweighted graph whose edge effective
resistances are at most $1/k$ has an $O(1/k)$-spectrally thin
spanning tree.
\end{corollary}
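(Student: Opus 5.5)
This corollary is the case $s=1$ of Theorem~\ref{hr:thm:simultaneous-trees}, so the plan is to specialize rather than reprove anything. Take a single weighting $w_e=w_e^{(1)}>0$ on the connected multigraph $G$. The hypothesis $\rho_e^{(1)}\le\varepsilon$ is exactly the maximum edge leverage bound. The theorem then yields a spanning tree $T$ with
\[
 L_T\preceq\min\{1,1024K_1^2\varepsilon\}L .
\]
Here $K_1=3000\sqrt{q(1)}$ and $q(1)=2$, because $\max\{2,\log_2 2\}=2$ is already a power of two. Hence $1024K_1^2$ is an absolute constant, and $T$ is $O(\varepsilon)$-spectrally thin. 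The only routine checks are that the theorem's hypotheses need nothing beyond connectivity and positive weights, and that at rank $r=1$ the direct-sum atoms $A_e=v_ev_e^\top$ are the usual rank-one normalized edge vectors. The rank-one case is also where $F_\beta(M)=M$ and the source reduces to the Kadison--Singer source, so nothing new is needed.

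For the unweighted statement, give every edge weight $w_e=1$. Then the leverage $\rho_e=b_e^\top L^\dagger b_e$ equals the effective resistance of $e$, so the hypothesis that all resistances are at most $1/k$ means $\varepsilon=1/k$. The weighted statement then gives an $O(1/k)$-spectrally thin tree.

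The main care point, and the only possible obstacle, is connectivity of $G$ and the convention for thinness. Spectral thinness is measured against $L$ on $\mathbf 1^\perp$, and that matches the theorem's conclusion directly. If the graph is disconnected, no spanning tree exists, which is why the corollary assumes connectivity.
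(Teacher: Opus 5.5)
Your proposal is correct and is essentially the paper's argument. The corollary is the case $s=1$ of Theorem~\ref{hr:thm:simultaneous-trees}: there $q(1)=2$ makes $1024K_1^2$ an absolute constant, and unit weights identify leverages with effective resistances, giving $\varepsilon=1/k$.
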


The one-weighting existence statement is known: Harvey and Olver
derive it by recursively applying rank-one Kadison--Singer
\cite{harveyOlver2014}. The higher-rank application here provides the
simultaneous guarantee in Theorem~\ref{hr:thm:simultaneous-trees};
the constructive statement uses the signing oracle developed in
this paper.

\subsection{The role of the resistance hypothesis}

Edge connectivity alone does not give the spectral conclusion above.
Anari and Oveis Gharan exhibit highly edge-connected graphs in which
every spanning tree contains an edge of effective resistance close
to one \cite{anariOveisGharan2015flows}. For completeness, the
following version makes the quantitative obstruction explicit.

\begin{proposition}\label{hr:prop:connectivity-obstruction}
For every pair of integers $k,L\ge1$, there is a
$k$-edge-connected multigraph such that every spanning tree has
spectral thinness at least $L/(L+k)$.
\end{proposition}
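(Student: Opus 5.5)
The plan is to lower-bound the thinness of a tree $T$ by a single test vector,
\[
 \gamma(T)=\sup_{y\perp\mathbf1}\frac{y^\top L_Ty}{y^\top L_Gy}
 \ \ge\ \frac{y^\top L_Ty}{y^\top L_Gy},
\]
choosing $y$ adapted to $T$. The two standard choices are an electrical potential $y=L_G^\dagger(\mathbf 1_u-\mathbf 1_v)$ and a cut indicator. The electrical choice is the one that produces the target number. If an edge $e=uv$ lies in $T$ and we take the potential $y=L_G^\dagger b_e$, then $y^\top L_Gy=R_{\rm eff}(u,v)$ and $y^\top L_Ty\ge (b_e^\top y)^2=R_{\rm eff}(u,v)^2$. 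With unit weights this gives $\gamma(T)\ge\rho_e$, the leverage of $e$. So it suffices to build a $k$-edge-connected multigraph in which every spanning tree contains an edge of leverage at least $L/(L+k)$.

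The quantity $L/(L+k)$ is exactly the leverage of a unit edge $e=uv$ placed in parallel with a network of effective resistance $L/k$. The basic gadget I would try is the bundle chain $u=w_0,w_1,\ldots,w_L=v$, with consecutive vertices joined by $k$ parallel edges, together with the chord $e$. Its edge connectivity is at least $k$, because every cut separating the chain crosses a whole bundle, and the chord has $\rho_e=(L/k)/(1+L/k)=L/(L+k)$. The order of steps is:
\begin{enumerate}[label=(\roman*)]
\item compute the effective resistances in the gadget by series--parallel reduction;
\item verify $k$-edge-connectivity by listing the cuts;
\item argue that every spanning tree meets a high-leverage edge, or otherwise exhibit a tree-dependent test vector with ratio at least $L/(L+k)$.
\end{enumerate}

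I expect the main obstacle to be step (iii). In the bare gadget, a tree may avoid $e$ by taking one edge from each bundle. Those edges have leverage only about $1/k$, and with the potential test vector the ratio then drops to $1/(L+k)$, so this gadget alone does not suffice. A counting argument also shows that leverages cannot all be large: they sum to $n-1$, while $k$-edge-connectivity forces $m\ge kn/2$. The construction must therefore be nonuniform and must force trees through the rare high-leverage edges. The next step would be to follow the Anari--Oveis Gharan idea \cite{anariOveisGharan2015flows} and iterate the gadget. I would replace each bundle edge by a scaled copy of the whole gadget, or glue many chords so that the low-leverage edges form cuts whose removal disconnects unless some chord is used. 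I would then verify, by the series--parallel computation at each level, that whichever edges the tree selects, some tree edge has leverage at least $L/(L+k)$ in the final multigraph.

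If forcing a single high-leverage edge proves too rigid, the fallback is a combined test vector. I would sum the potentials of all chords that the tree avoids and choose signs so that their tree energies add while their graph energies stay controlled. The remaining cut-counting estimate in (ii) is routine.
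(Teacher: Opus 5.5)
\textbf{The gap is step (iii).} Your reduction is correct and is the one the paper uses. If a unit edge $e$ lies in $T$ and $L_T\preceq\gamma L_G$, then $b_eb_e^\top\preceq L_T$ gives $\gamma\ge b_e^\top L_G^\dagger b_e$. You also correctly read $L/(L+k)$ as the leverage of a unit edge in parallel with resistance $L/k$.

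Step (iii) is the whole content of the proposition, and you never exhibit a graph in which every spanning tree is forced through a high-leverage edge. Your gadget fails, as you note. The proposed iteration fails for the same reason: if bundle edges are replaced by copies of the chain-plus-chord gadget, the chain edges at all levels still form a connected spanning subgraph, so a tree can avoid every chord.

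A spanning tree must use an edge of every cut. So what you need is that the high-leverage edges \emph{contain an edge cut} of $G$. $k$-edge-connectivity then forces at least $k$ of them across that cut, and this is the real difficulty your plan does not address. Unit edges joining two well-connected pieces act in parallel with one another, so each has leverage only about $1/k$. Their endpoints must be separated by resistance.

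\textbf{The paper's ladder does exactly this.} Take rails $u_0,\ldots,u_{kL}$ and $v_0,\ldots,v_{kL}$, replace each rail edge by $k$ parallel unit edges, and add unit rungs $u_{jL}v_{jL}$ for $0\le j\le k$.
\begin{enumerate}[label=(\alph*)]
\item A cut crossing a rail segment has at least $k$ edges. Any other nontrivial cut separates the two rails and contains all $k+1$ rungs. Hence $G$ is $k$-edge-connected, and every spanning tree contains a rung.
\item For a rung $e$, delete it. Contract into one vertex the endpoints of the other rungs together with all rail vertices beyond the nearest other rung on each side. Contraction only lowers effective resistance.
\item In the contracted network, each endpoint of $e$ reaches the contracted vertex through two rail routes of resistance $L/k$ in parallel (one route at an end rung). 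So the resistance between the endpoints of $e$ in $G-e$ is at least $L/(2k)+L/(2k)=L/k$.
\item Restoring $e$ in parallel gives $b_e^\top L_G^\dagger b_e\ge(L/k)/(1+L/k)=L/(L+k)$, and your test-vector step concludes.
\end{enumerate}

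With this cut structure, your combined-potential fallback is unnecessary; as stated, it is too unspecified to evaluate.
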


\begin{proof}
Take two paths $u_0,\ldots,u_{kL}$ and
$v_0,\ldots,v_{kL}$. Replace every path edge by $k$ parallel
unit-conductance edges, and add a unit-conductance rung
$u_{jL}v_{jL}$ for $0\le j\le k$. Any cut crossing a path segment
has at least $k$ edges. A nontrivial cut crossing no path segment
separates the two entire paths and contains all $k+1$ rungs.
The graph is therefore $k$-edge-connected.

Fix a rung $e$ and remove it. Contract the other rung endpoints
and all rail vertices beyond the nearest other rung on each
available side into one vertex. At an interior rung, the resulting
network has two parallel rail paths of resistance $L/k$ from
each endpoint of $e$ to the contracted vertex. Its resistance
between the endpoints of $e$ is therefore $L/k$. At an end rung
it is $2L/k$. Contraction can only decrease effective resistance,
so the network with $e$ removed has resistance at least $L/k$.
Restoring $e$ as a unit resistor in parallel shows
\[
 b_e^\top L_G^\dagger b_e\ge
 \frac{L/k}{1+L/k}=\frac{L}{L+k}.
\]
Every spanning tree contains a rung. For any rung $e$ in a tree
with $L_T\preceq\gamma L_G$, the inequality
$b_eb_e^\top\preceq L_T\preceq\gamma L_G$ implies
$b_e^\top L_G^\dagger b_e\le\gamma$, by congruence on
$\mathbf1^\perp$. Hence $\gamma\ge L/(L+k)$.
\end{proof}

There is also a distinct prescribed-support problem. Anari and
Oveis Gharan prove an $O(\varepsilon+1/k)$ thin-basis theorem for
subisotropic vectors containing $k$ disjoint bases; its graph
corollary allows the eligible edges to form a $k$-edge-connected
subgraph while their resistances are measured in a larger reference
graph \cite[Theorem~1.4 and Corollary~1.9]
{anariOveisGharan2014stronglyRayleigh}. Our isotropic halving
argument does not automatically supply that statement. Completing
a subisotropic family with auxiliary matrices may leave those
auxiliary matrices responsible for the lower spectral bound, so
the selected eligible edges need not be connected. The theorem
proved here uses the Laplacian of each complete weighting as its
reference form.

\section{Row-constrained Beck--Fiala discrepancy}
\label{hr:sec:beck-fiala}

Let $T_1,\ldots,T_m$ be subsets of a ground set $[N]$, and let
$B\in\{0,1\}^{m\times N}$ be their incidence matrix:
$B_{ji}=1$ exactly when $i\in T_j$. A signing colors the elements,
and its discrepancy is
\[
 \|B\sigma\|_\infty
 =\max_{1\le j\le m}\left|\sum_{i\in T_j}\sigma_i\right|.
\]
The Beck--Fiala degree is the maximum number of sets containing an
element, or equivalently the maximum column sum of $B$. Here we also
bound the size of each set, which is a row constraint. These two
parameters play different roles in the diagonal matrix reduction:
the row bound supplies subisotropy, while the column bound supplies rank.

\begin{corollary}[Row and column constraints]
\label{hr:cor:beck-fiala}
Let $B\in[0,1]^{m\times N}$ have row sums at most $L>0$ and at most
$k\ge1$ nonzero entries in each column. There are signs
$\sigma\in\{-1,1\}^N$ satisfying
\begin{equation}\label{hr:eq:beck-fiala}
 \|B\sigma\|_\infty
 \le\min\{L,6000\sqrt{L\log(2k)}\}.
\end{equation}
In Model~\ref{hr:model:computation}, signs satisfying the same bound can be found deterministically with
polynomial work in $m,N$. Both assertions hold after restriction
to any subset of columns.
\end{corollary}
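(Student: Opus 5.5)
The plan is to reduce to Corollary~\ref{hr:cor:normalized-subisotropic} through diagonal matrices. For each column $i$ let $A_i=\diag(B_{1i},\ldots,B_{mi})/L\in\R^{m\times m}$. These matrices are diagonal and nonnegative, hence PSD, and $\rank A_i$ is the number of nonzero entries of column $i$, at most $k$. The $j$th diagonal entry of $\sum_iA_i$ is the $j$th row sum divided by $L$, so $\sum_iA_i\preceq I_m$. Each entry of $B$ lies in $[0,1]$, so $\norm{A_i}\le1/L$. For any signing, $\sum_i\sigma_iA_i=\diag(B\sigma)/L$, and therefore
\[
 \norm{B\sigma}_\infty=L\left\|\sum_i\sigma_iA_i\right\|.
\]
Restricting to a subset of columns just drops some $A_i$. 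Subisotropy, the norm bound and the rank bound all survive this, so the final assertion is automatic.

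Next, normalize and apply the corollary. Put $b=\norm{\sum_iA_i}=\max_j(\text{row sum}_j)/L\le1$. If $b=0$, every $A_i$ vanishes and any signing works. Otherwise the rescaled form of Corollary~\ref{hr:cor:normalized-subisotropic} with $\varepsilon=1/L$ gives signs with
\[
 \left\|\sum_i\sigma_iA_i\right\|\le\min\{b,6000\sqrt{b\log(2k)/L}\}\le\min\{1,6000\sqrt{\log(2k)/L}\}.
\]
Multiplying by $L$ gives \eqref{hr:eq:beck-fiala}. The existence part is the corresponding existence assertion and needs no computational primitives. The trivial bound $L$ can also be seen directly, since each row sum of $|B\sigma|$ is at most the row sum of $B$.

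For the algorithmic claim, the corollary as stated requires real inputs with $\norm{\sum A_i}=1$ after scaling by $b^{-1}$. The scaled matrices are still diagonal with rank at most $k$ and norm at most $1/(Lb)$. The dimension is $D=m$ and the number of matrices is $N$, so the work is polynomial in $m,N$. The only arithmetic needed is forming the row sums, dividing by $Lb$, and reading $\diag(B\sigma)$.

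The main point to check is that the corollary's rank parameter is the actual maximum rank, here at most $k$. The bound uses $\log(2r)\le\log(2k)$, which is monotone, so a smaller actual rank only helps. I expect no real obstacle beyond this bookkeeping: the scaling $b$ must cancel correctly, since $\varepsilon b$ becomes $(1/(Lb))\cdot b$. Indeed $\sqrt{\varepsilon'b}$ with $\varepsilon'=1/(Lb)$ equals $\sqrt{1/L}$, which is why the reduction does not need $b=1$.
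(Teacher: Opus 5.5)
Your proposal is correct and follows the paper's proof. Both use the same diagonal atoms $A_i=L^{-1}\diag(B_{1i},\ldots,B_{mi})$, the same checks of subisotropy, norm and rank, and a reduction to Corollary~\ref{hr:cor:normalized-subisotropic}. The remaining differences are bookkeeping: the paper cites Theorem~\ref{hr:thm:existence} directly for existence, and for the algorithm it normalizes by the actual largest row sum $\ell$, with $\varepsilon$ equal to the largest entry divided by $\ell$, which is equivalent to your rescaling by $b=\ell/L$.
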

\begin{proof}
The cases $m=0$ or $N=0$ are immediate. Otherwise assign one matrix
to each column:
\begin{equation}\label{hr:eq:incidence-atoms}
 A_i=L^{-1}\diag(B_{1i},\ldots,B_{mi}).
\end{equation}
Then
\[
 \sum_i A_i=L^{-1}\diag\left(\sum_iB_{1i},\ldots,\sum_iB_{mi}\right)
 \preceq I_m,\qquad
 \|A_i\|\le L^{-1},\qquad \rank A_i\le k.
\]
Theorem~\ref{hr:thm:existence}, with $\varepsilon=L^{-1}$ and $r=k$,
gives \eqref{hr:eq:beck-fiala}, since
\[
 \left\|\sum_i\sigma_iA_i\right\|=L^{-1}\|B\sigma\|_\infty.
\]
Each sign multiplies the entire column, as required.

For the algorithmic assertion, the zero matrix is immediate. Otherwise
let $\ell=\max_j\sum_iB_{ji}>0$ and $b=\max_{j,i}B_{ji}>0$ be the
actual largest row sum and entry. Normalize by $\ell$ in
\eqref{hr:eq:incidence-atoms} and take $\varepsilon=b/\ell$.
The resulting family has sum at most identity and
$\|\sum_iA_i\|=1$. Moreover $b\le\ell\le Nb$, so
$N^{-1}\le\varepsilon\le1$. Applying
Corollary~\ref{hr:cor:normalized-subisotropic} and scaling back gives
\[
 \|B\sigma\|_\infty
 \le\min\{\ell,6000\sqrt{b\ell\log(2k)}\}
 \le\min\{L,6000\sqrt{L\log(2k)}\}.
\]
The maxima and the diagonal matrices require polynomial arithmetic.
Finally, deleting columns preserves all the stated row and column
bounds, proving the restriction assertion.
\end{proof}

For incidence matrices with at most $t$ ones in every row and column,
this yields $O(\sqrt{t\log(2t)})$ discrepancy, independently of $m,N$.
More generally, if the entries are bounded by $b_*>0$ and the row
sums by $L$, applying the corollary to $B/b_*$ gives the existence
bound $\min\{L,6000\sqrt{b_*L\log(2k)}\}$, for both existence and the algorithmic guarantee.

The row restriction is essential to this reduction. For incidence
matrices, normalization by the degree $k$ would give
$\sum_iA_i=\diag(|T_1|/k,\ldots,|T_m|/k)$, which need not be at
most identity. Thus the displayed guarantee depends on the maximum
set size as well as the element degree.

\subsection{The scalar potential and its walk}

The matrix algorithm specializes to a walk of fractional colorings
$x\in[-1,1]^N$. Put $a_{ji}=B_{ji}/L$. In one epoch its two
centers have diagonal entries
\[
 h_j=\sum_i a_{ji}(x_i-x_i^0),\qquad
 k_j=\sum_i a_{ji}((x_i^0)^2-x_i^2+s_i).
\]
A coordinate still represents one original column, and its reserve
$c_i$ is shared across every row containing that column.

For a diagonal four-block density, use entries $z_{j,\ell}\ge0$
with $\sum_{j,\ell}z_{j,\ell}=1$, where $1\le\ell\le4$.
Define
\begin{equation}\label{hr:eq:diagonal-source}
 t_j=\sum_{\ell=1}^4 z_{j,\ell},\qquad
 p_i=\sum_j a_{ji}t_j,\qquad
 w_{ij}=a_{ji}^{2-\beta}p_i^\beta t_j^{1-\beta},\qquad
 w_j=\sum_i c_iw_{ij}.
\end{equation}
The optimized value is exactly
\begin{equation}\label{hr:eq:diagonal-potential}
\begin{split}
 \max_{\substack{z\ge0\\\sum z=1}}
 \biggl\{&\sum_j\bigl[h_j(z_{j,1}-z_{j,2})
                         +k_j(z_{j,3}-z_{j,4})\bigr]\\
         &+2\sum_{j,\ell}\sqrt{z_{j,\ell}w_j}
          +2\theta\sum_{j,\ell}\sqrt{z_{j,\ell}}\biggr\}.
\end{split}
\end{equation}
To see that diagonal densities suffice, conjugate simultaneously by
diagonal phases in the physical coordinates and by scalar phases in
the four blocks. The center and source construction are invariant.
Averaging and concavity remove all off-diagonal entries without
decreasing the objective. Substitution into
\eqref{hr:eq:carrier}--\eqref{hr:eq:potential} gives the formula.

For incidence matrices, write $P_i=\sum_{j:i\in T_j}t_j$. Then
\[
 w_j=\frac{t_j^{1-\beta}}{L^2}\sum_{i\in T_j}c_iP_i^\beta.
\]
The density mass assigned to the rows containing a column enters
through $P_i$. Even for diagonal inputs, this source responds to the
optimizing density; a fixed row variance would remove that coupling.

The source estimate has a short scalar form. At fixed reserves, vary
a faithful diagonal density along an affine line and put
$b_i=\dot p_i/p_i$, $v_j=\dot t_j/t_j$. For nonzero entries,
\[
 \dot w_{ij}=w_{ij}(\beta b_i+(1-\beta)v_j),\qquad
 -\ddot w_{ij}=\beta(1-\beta)w_{ij}(v_j-b_i)^2.
\]
On a positive-source row the transport is
$Z_{j,\ell}=\sqrt{z_{j,\ell}/w_j}$. Define
\[
 P_{j,\ell}=Z_{j,\ell}w_j,\qquad
 R_{j,\ell}=(1-\beta)\sum_i c_iZ_{j,\ell}w_{ij}(v_j-b_i).
\]
Weighted Cauchy--Schwarz gives
\[
 \sum_{j,\ell}\frac{R_{j,\ell}^2}{2P_{j,\ell}}
 \le\frac{1-\beta}{2\beta}\,
 \beta(1-\beta)\sum_{i,j,\ell}
       c_iZ_{j,\ell}w_{ij}(v_j-b_i)^2.
\]
This is \eqref{hr:eq:conditionalmetric}: on a diagonal entry,
the inverse Sylvester map divides by $2P_{j,\ell}$. Preparation,
orthogonal curvature steps, and the budget contraction then operate
on the original fractional coloring, exactly as in the matrix proof.

\subsection{Relation to local-lemma and discrepancy walks}

For maximum set size $L$ and element degree $k$, the standard local
lemma gives $O(\sqrt{L\log(2Lk)})$ discrepancy
\cite[Theorem~1]{harvey2014rowcolumn}. In the balanced regime
$L,k\le t$, both that bound and the present one have scale
$O(\sqrt{t\log(2t)})$. Harvey also proves a weighted extension
under row and column sum bounds, which differs from our column
support hypothesis. The local-lemma comparison should be understood
with these hypotheses kept distinct.

Pesenti and Vladu identify the connection between local-lemma
colorings and iterative regularization as a question in this regime
\cite[Section~6]{pesentivladu2026}. The present proof obtains the
bound through a potential-guided walk, both for existence and for the
deterministic construction in Model~\ref{hr:model:computation}.
To our knowledge, this gives a different route to the
dimension-independent square-root scale than the local-lemma methods.

The formulation suggests studying combinations with the discrepancy
walks of Bansal, Dadush, and Garg \cite{bansalDadushGarg2019}, or the
affine spectral-independence approach of Bansal and Jiang
\cite{bansalJiang2025decoupling}. For example, one could try to use
the present potential for rows with small remaining support while
another movement rule handles larger rows. Such a combination would
need compatible directions and an accounting of rows changing regimes.
The signing corollary alone does not supply those estimates.

\section{AI Usage}
The author conceived the approach of using operator-valued
$K$-transforms, or variants based on regularizers such as Tsallis--$1/2$
rather than log-determinants, as discrepancy potentials in 2022--2023.
The semidefinite representation of the Tsallis--$1/2$ regularized
potential was subsequently conceived by an internal version of Google's
Gemini, used with a specific harness. Extensive calculations with
ChatGPT 5.5--5.6 Sol Ultra helped develop and test proof strategies.
GPT 6 Astra Ultra was used for calculations, independent proof reviews,
and preparation of this manuscript, and to develop Lean formalizations
for the companion Matrix Spencer and rank-one Kadison--Singer work,
as well as separate existence and algorithmic formalization projects
for the higher-rank argument. Both higher-rank certificates compile;
the existence theorem has only the matrix assumptions, and the
algorithmic theorem uses Model~\ref{hr:model:computation}. Their public
theorem audits list only Lean's standard logical axioms.

\section{Acknowledgements}
The idea of using free probability to approach Weaver's discrepancy
problem, potentially algorithmically, was encouraged by inspiring
conversations with Adam W. Marcus in 2016--2017. The author is
profoundly grateful for his encouragement. Numerous conversations and
continued encouragement from Nikhil Srivastava have been invaluable
over the years, as were early discussions with Nick Ryder and Jonathan
Leake.

The author thanks Daniel A. Spielman for hosting him at Yale University
in the summer of 2019, and for extensive discussions of scalar-valued
free probability and its possible applications to matrix discrepancy.
The approach began to crystallize in the summer of 2023, during a
visit hosted by Ramon van Handel at Princeton University. Ramon's
persistent questions about concrete special cases, bottlenecks, and
obstacles pushed the author to examine the approach carefully and
eliminate many of those obstacles, strengthening his conviction that
it could succeed. The author warmly thanks Ramon for his hospitality,
many discussions of operator-valued free probability, and insistence
on an exceptionally high but necessary standard of clarity and rigor.

Finally, the author thanks Tibo from OpenAI for providing numerous
Codex resets to paid Codex subscribers, which were invaluable in
helping this project go \emph{Ultra Fast}.

\appendix
\section{Uniform derivatives for the finite walk}\label{hr:sec:regularity}

The finite differences require a uniform third-derivative bound for
the optimized value. A direct inverse bound on an input matrix would
depend on its smallest positive eigenvalue. Instead, density
variations give relative perturbations on each carrier, and the
transport equation is differentiated in balanced coordinates.

Throughout this appendix, the retained family is subisotropic,
$\norm{A_i}>\eta$, and $\varepsilon\le1$. Write $m=4D$.
Allow center norm at most eight and positive reserves in
$[\zeta,2aR]$. This contains every preparation segment and every
curvature query of the algorithm. At a cleaned state define
$\varphi$ by \eqref{hr:eq:compositevalue}; its relevant neighborhood is
\[
 \norm u_2\le\min\{\rho/2,\sqrt{\zeta/(2a)}\}.
\]
Zero reserves and their coordinates stay fixed during these local
queries. In particular the source support is constant.

The algorithm uses the following explicit derivative bound:
\begin{equation}\label{hr:eq:derivative-recipe}
\begin{gathered}
 C_0=8\sqrt a,\qquad L_0=4a/\zeta+2/s_0+1,\qquad
 H_0=100\sqrt m\,L_0,\\
 A_0=1+20\sqrt m+2C_0+\theta\sqrt m,\qquad
 B_j=j^jA_0H_0^j\quad(j=2,3),\\
 \mu=\theta/2,\qquad A_1=1+B_2/\mu,\qquad M=B_3A_1^3.
\end{gathered}
\end{equation}
All powers here have fixed exponents.

\subsection{Density, transport, and probe floors}

The trace bound for the source in this enlarged query region is
\[
 \Tr\Omega_c(S)\le4(2aR)\varepsilon r^\beta
                              \Tr(T\sum_i A_i)\le64a.
\]
Thus $\Fid(S,\Omega_c(S))\le C_0$. Pairing density stationarity
\eqref{hr:eq:stationarity} with $S$ and using homogeneity gives
\[
 \ell=\Tr(\mathcal H S)+2\Fid(S,\Omega_c(S))
                             +\theta\Tr\sqrt S.
\]
Both source-gradient terms in that equation are PSD. For the
derivative term, this follows from the positivity-preserving
resolvent derivative of $M^\alpha$ and the nonnegative derivative
of its trace prefactor. Consequently
\[
 \ell+\norm{\mathcal H}
 \le16+2C_0+\theta\sqrt m\le\overline B.
\]
Comparing each positive summand in \eqref{hr:eq:stationarity} with
$\overline B I$ proves
\begin{equation}\label{hr:eq:densityfloor}
 S\succeq s_0 I_m,\qquad
 Z\succeq\overline B^{-1}I_{\mathcal K}.
\end{equation}
The transport statement is restricted to its source support, while
the density statement is on the full ambient space.

Since $F_\beta(M_i)\succeq M_i$ and the marginal is at least
$s_0I_D$, the four output copies give
\[
 p_i\ge s_0\Tr A_i>s_0\eta=p_0,\qquad
 \tau_i\ge\frac{4s_0}{\overline B}\Tr(A_i^2)
                  >\frac{4s_0\eta^2}{\overline B}=\tau_0.
\]
Also, each density eigenvalue is at most one. The root curvature
formula \eqref{hr:eq:rootcurvature}, with the concavity of the
other density terms, gives
\begin{equation}\label{hr:eq:strongconcavity}
 -f_{SS}[X,X]\ge\mu\norm X_{\mathrm F}^2,\qquad\mu=\theta/2,
\end{equation}
on the trace-zero tangent. Here $f$ is the unoptimized density
objective at the current parameters.

\subsection{Relative carrier and source derivatives}

\begin{lemma}[Relative carrier derivatives]\label{hr:lem:relativepower}
If $M\succ0$ on its carrier and $-bM\preceq U\preceq bM$, then
for $1\le j\le3$,
\begin{equation}\label{hr:eq:relativepower}
 -(j+1)j!b^jF_\beta(M)
 \preceq D^jF_\beta(M)[U,\ldots,U]
 \preceq(j+1)j!b^jF_\beta(M).
\end{equation}
\end{lemma}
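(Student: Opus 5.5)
The plan is to split $F_\beta(M)=p^\beta M^\alpha$ into its scalar trace factor and its matrix power. I would bound the $k$-th derivative of each factor separately, relative to the factor itself, and then combine them with the Leibniz rule along the line $s\mapsto M+sU$. The target constant $(j+1)j!$ suggests exactly this: it equals $\sum_{k=0}^j\binom jk(j-k)!\,k!$.

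For the scalar factor, let $\pi(s)=(p+s\Tr U)^\beta$. Then
\[
\frac{d^k\pi}{ds^k}(0)=\beta(\beta-1)\cdots(\beta-k+1)\,(\Tr U)^k\,p^{\beta-k}.
\]
Taking traces in $-bM\preceq U\preceq bM$ gives $|\Tr U|\le bp$. Since $0<\beta\le1/2$, the falling-factorial coefficient has absolute value at most $k!$. Hence $|\pi^{(k)}(0)|\le k!\,b^k p^\beta$, and this also holds for $k=0$.

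For the matrix power, I would use the resolvent representation from the proof of Lemma~\ref{hr:lem:carrierpower}:
\[
M^\alpha=a_\alpha\int_0^\infty t^{\alpha-1}\bigl(I-tR_t\bigr)\,\dd t .
\]
Differentiating $k\ge1$ times in direction $U$ gives
\[
D^k(M^\alpha)[U,\ldots,U]=(-1)^{k+1}k!\,a_\alpha\int_0^\infty t^{\alpha}R_t(UR_t)^k\,\dd t.
\]
Next put $N_t=(MR_t)^{1/2}$, which commutes with $M$ and satisfies $0\preceq N_t\preceq I$. Set $X_t=R_t^{1/2}UR_t^{1/2}$, so that $R_t(UR_t)^k=R_t^{1/2}X_t^kR_t^{1/2}$. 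Writing $U=M^{1/2}VM^{1/2}$ with $-bI\preceq V\preceq bI$ gives $X_t=N_tVN_t$.

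The key claim is $\pm X_t^k\preceq b^kN_t^2$ for every $k\ge1$.
\begin{itemize}
\item For $k=1$ this is congruence of $\pm V\preceq bI$.
\item For $k=2$, $X_t^2=N_tVN_t^2VN_t\preceq N_tV^2N_t\preceq b^2N_t^2$.
\item For $k=3$, write $X_t^3=X_tX_tX_t$. Using $\pm X_t\preceq bI$, which follows from $\|X_t\|\le b$, together with congruence by $X_t$, gives $\pm X_t^3\preceq bX_t^2\preceq b^3N_t^2$.
\end{itemize}
Conjugating by $R_t^{1/2}$ turns this into
\[
\pm t^\alpha R_t(UR_t)^k\preceq b^k t^{\alpha-1}\,(tR_t)\,MR_t\preceq b^kt^{\alpha-1}MR_t,
\]
using $tR_t\preceq I$ and commutation. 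Integrating and using $a_\alpha\int t^{\alpha-1}MR_t\,\dd t=M^\alpha$ yields $\pm D^k(M^\alpha)[U^k]\preceq k!\,b^kM^\alpha$. This also holds trivially for $k=0$.

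Finally, Leibniz gives
\[
D^jF_\beta(M)[U^j]=\sum_{k=0}^j\binom jk\,\pi^{(j-k)}(0)\,D^k(M^\alpha)[U^k].
\]
Each term is a real scalar times a matrix lying in $[-C M^\alpha,\,CM^\alpha]$, with $C=(j-k)!\,k!\,b^j\,p^{-\beta}\cdot p^{\beta}$ after accounting for the scalar factor. So each term lies between $\mp\binom jk(j-k)!\,k!\,b^jp^\beta M^\alpha$. Summing over $k$ gives exactly $(j+1)j!\,b^jF_\beta(M)$.

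The main obstacle is the operator inequality $\pm X_t^k\preceq b^kN_t^2$, which is needed to make the bound relative to $M$ rather than dependent on its smallest eigenvalue. The congruence factorization $X_t=N_tVN_t$ with $N_t\preceq I$ is what I expect to carry it. Differentiation under the integral is justified by the local integrability already noted in Lemma~\ref{hr:lem:carrierpower}.
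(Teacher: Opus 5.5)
Your proposal is correct and follows essentially the paper's proof: the same resolvent representation, the same factorization $R_t^{1/2}UR_t^{1/2}=B_t^{1/2}CB_t^{1/2}$ with $B_t=MR_t\preceq I$ and $\norm C\le b$ (your $N_tVN_t$), the same bound $\pm(R_t^{1/2}UR_t^{1/2})^k\preceq b^kB_t$, the estimate $|\Tr U|\le b\Tr M$ for the trace factor, and the product rule summing to $(j+1)j!$. The only difference is minor: the paper compares with the integrand of $D(M^\alpha)[M]$ and gets the slightly sharper $j!\,b^j\alpha M^\alpha$, whereas you drop the factor $\alpha$ via $tR_t\preceq I$, which is harmless.
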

\begin{proof}
Let $R_t=(M+tI)^{-1}$, $B_t=R_t^{1/2}MR_t^{1/2}\preceq I$,
and $A_t=R_t^{1/2}UR_t^{1/2}$. Then
$-bB_t\preceq A_t\preceq bB_t$. Writing
$A_t=B_t^{1/2}CB_t^{1/2}$ with $\norm C\le b$ gives
$A_t^2\preceq b^2B_t$. Therefore, for $j\ge2$,
$|A_t|^j\preceq b^{j-2}A_t^2\preceq b^jB_t$.
Differentiate the resolvent integral for $M^\alpha$. The $j$th
derivative has integrand $j!t^\alpha R_t^{1/2}A_t^jR_t^{1/2}$
up to sign. Its two-sided order is bounded by $j!b^j$ times the
first-derivative integrand in direction $M$. Integration and
$D(M^\alpha)[M]=\alpha M^\alpha$ give the corresponding relative
bound for the power. Also $|\Tr U|\le b\Tr M$. Each derivative
of $(\Tr M)^\beta$ is bounded by its factorial times $b^j$
times the original trace factor. The product rule gives
\eqref{hr:eq:relativepower}.
\end{proof}

At a density $S\succeq s_0I_m$, its marginal is at least $4s_0I_D$.
For a Hermitian variation $X$, the marginal variation has norm at
most $4\norm X$. Consequently
\[
 -\frac{\norm X}{s_0}M_i
 \preceq A_i^{1/2}\left(\sum_jX_{jj}\right)A_i^{1/2}
 \preceq\frac{\norm X}{s_0}M_i.
\]
The same congruence by $A_i^{1/2}$ appears in the base matrix and
its variation; this is why a lower eigenvalue bound on $A_i$ is
unnecessary. Lemma~\ref{hr:lem:relativepower} bounds the normalized
Taylor coefficient of order $j$ by $(2\norm X/s_0)^j$ times the
base source in two-sided PSD order.

At a base point $u^0$ of a curvature query, the scalar source
coefficient along $u^0+th$ is
\[
 c_i(u^0+th)=c_i(u^0)-2au_i^0h_it-ah_i^2t^2.
\]
Its relative Taylor coefficients are bounded by
$(L_c\norm h)^j$, where $L_c=4a/\zeta$ and $j=1,2$;
the higher coefficients vanish. The same estimate holds for a
preparation line, whose coefficient is affine. Convolution with the
carrier coefficients gives normalized source coefficients $\Omega_j$
satisfying
\begin{equation}\label{hr:eq:relativesourceTaylor}
 -\ell_0^j\Omega_0\preceq\Omega_j\preceq\ell_0^j\Omega_0,
 \qquad \ell_0=L_c\norm h+2\norm X/s_0,\qquad 1\le j\le3.
\end{equation}

\subsection{Balanced fidelity derivatives}

\begin{lemma}[A relative transport estimate]\label{hr:lem:balancedregularity}
Let $A(t),B(t)\succ0$ be $C^3$ curves on a $d$-dimensional space, with
$A(0)=B(0)=P\succ0$. Suppose their normalized Taylor coefficients
satisfy $\norm{P^{-1/2}A_jP^{-1/2}},
\norm{P^{-1/2}B_jP^{-1/2}}\le L^j$ for $1\le j\le3$.
Then
\begin{equation}\label{hr:eq:balancedfidelityderivative}
 \left|\frac{\dd^j}{\dd t^j}2\Fid(A(t),B(t))\bigg|_{t=0}\right|
 \le2j!\Tr(P)(100\sqrt d\,L)^j,\qquad1\le j\le3.
\end{equation}
\end{lemma}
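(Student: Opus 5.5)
The plan is to reduce to a normalized situation in which both curves start at the identity, and then bound the Taylor coefficients of a trace-norm functional by composing analytic expansions. Since $A(0)=B(0)=P$, conjugate by $P^{-1/2}$ and write
\[
 \widetilde A(t)=P^{-1/2}A(t)P^{-1/2},\qquad \widetilde B(t)=P^{-1/2}B(t)P^{-1/2}.
\]
Both curves start at $I_d$, and by hypothesis their Taylor coefficients have operator norm at most $L^j$. We cannot pull the conjugation out of the fidelity directly. Instead we use the formula $\Fid(A,B)=\Tr\bigl(A\#B\bigr)^{\,\cdot}$, or more concretely
\[
 \Fid(A,B)=\Tr\sqrt{A^{1/2}BA^{1/2}}
 =\Tr\bigl(P^{1/2}\,G(t)\,P^{1/2}\bigr),
\]
where $G(t)$ is the matrix geometric mean of $\widetilde A(t)$ and $\widetilde B(t)$. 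The geometric mean is congruence-covariant, so $A\#B=P^{1/2}(\widetilde A\#\widetilde B)P^{1/2}$, and for positive definite $A,B$ the trace of $A\#B$ relates to fidelity through the transport formula \eqref{hr:eq:transport}. We will check this identification using the transport of Lemma~\ref{hr:lem:transport}. If the identification with $\Tr(A\#B)$ fails, we fall back on the equivalent expression $\Tr\sqrt{A^{1/2}BA^{1/2}}$, conjugated by $P^{\pm1/2}$ inside the root, which again puts everything in terms of normalized curves. In either case
\[
 \bigl|\tfrac{\dd^j}{\dd t^j}\Tr(P^{1/2}GP^{1/2})\bigr|
 \le\Tr(P)\,\norm{G^{(j)}(0)},
\]
so it suffices to bound $\norm{G^{(j)}(0)}\le j!(100\sqrt d\,L)^j$ for the normalized mean.

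For the normalized problem, expand $G(t)=\widetilde A^{1/2}\bigl(\widetilde A^{-1/2}\widetilde B\widetilde A^{-1/2}\bigr)^{1/2}\widetilde A^{1/2}$ around $t=0$. Each of the maps $X\mapsto X^{\pm1/2}$ is analytic near $I$. Its Taylor coefficients, written in terms of a perturbation $E=X-I$, are bounded in operator norm by binomial series coefficients: $\norm{E}^k$ times $|\binom{\pm1/2}{k}|\le1$, using the resolvent or power-series representation. Write $\widetilde A=I+E_A(t)$ with $\norm{E_A}\le\sum_{j\ge1}(Lt)^j$ in the majorant sense, and similarly for $\widetilde B$.

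The key tool is the method of majorants. Each curve is majorized by $\frac{Lt}{1-Lt}$. Composition and multiplication of operator-norm majorant series are majorized by the corresponding scalar compositions, so $G(t)-I$ is majorized by an explicit scalar function analytic on $|t|<c/L$ for an absolute $c$, say $c=1/10$. Cauchy estimates on that disk then give $|g_j|\le C\,(L/c)^j$ for the scalar majorant. The constant $100$ in the statement is generous enough to absorb the radius and the prefactor.

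The main obstacle is the factor $\sqrt d$ and making sure that operator norms, rather than dimension-dependent norms, propagate. The hypotheses are only in operator norm and only up to order $3$, while the curves are merely $C^3$. We therefore work with the truncated cubic polynomial curves, since derivatives of order at most $3$ at $0$ depend only on the $3$-jets. These polynomials are analytic, so the majorant argument applies. The $\sqrt d$ should arise only if one passes through Frobenius norms, for instance when differentiating the square root via the Sylvester inverse $\mathcal J$. Our plan avoids this by using power series in $E$. If commutation issues force a Sylvester-type derivative of the square root instead, we bound $\norm{\mathcal J_{I}^{-1}}=1/2$ on the normalized base point, which remains dimension-free. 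The $\sqrt d$ is then spare slack, which we may spend converting Frobenius-norm perturbation bounds if needed.
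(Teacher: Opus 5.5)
\textbf{The reduction rests on a false identity.} Fidelity is not the trace of the matrix geometric mean. The transport formula \eqref{hr:eq:transport} gives $\Fid(A,B)=\Tr(BZ)$ with $ZBZ=A$, i.e.\ $Z=B^{-1}\#A$. In general only $\Tr(A\#B)\le\Fid(A,B)$ holds.

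At the coincident base point the two functionals share the first Taylor coefficient $\tfrac12\Tr(A_1+B_1)$, but not the second. Write $X=A_1-B_1$ in an eigenbasis of $P$ with eigenvalues $p_a$. The second coefficient of $\Fid$ is $\tfrac12\Tr(A_2+B_2)-\tfrac14\sum_{a,b}|X_{ab}|^2/(p_a+p_b)$. That of $\Tr(A\#B)$ is $\tfrac12\Tr(A_2+B_2)-\tfrac18\sum_{a,b}|X_{ab}|^2/p_a$. These agree only if $p_a=p_b$ wherever $X_{ab}\neq0$. So your majorant argument bounds the wrong quantity. It would be valid for $\Tr(A\#B)$, since everything there expands at the identity.

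The fallback does not rescue it. Fidelity is invariant under \emph{opposite} congruences $A\mapsto YAY^*$, $B\mapsto Y^{-*}BY^{-1}$, not under conjugating both arguments by $P^{-1/2}$. In normalized variables $\widetilde X=P^{-1/2}XP^{-1/2}$, the fidelity correction is $-\tfrac14\sum_{a,b}\frac{p_ap_b}{p_a+p_b}|\widetilde X_{ab}|^2$. For fixed normalized curves this is not linear in $P$. Hence no formula $\Fid(A,B)=\Tr(P^{1/2}G(\widetilde A,\widetilde B)P^{1/2})$ with $G$ depending only on the normalized curves can exist.

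\textbf{The missing ingredient is control of the Sylvester inverse at the non-scalar base point $P$.} Every derivative of the fidelity at $(P,P)$ passes through $\mathcal J_P^{-1}$. Write $\Fid(A,B)=\Tr(BR)$ with $RBR=A$ and $R(0)=I$. Then the Taylor coefficients of $R$ solve $\mathcal J_P(R_j)=A_j-B_j-(\text{lower-order products})$. The map $\widetilde Y\mapsto P^{-1/2}\mathcal J_P^{-1}(P^{1/2}\widetilde YP^{1/2})P^{1/2}$ is the Schur multiplier $[p_b/(p_a+p_b)]_{a,b}$. It is a contraction in Frobenius norm, but it is not uniformly bounded in operator norm: as the spectrum of $P$ spreads out, it approaches triangular truncation, whose norm grows like $\log d$. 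So operator-norm majorants do not propagate through the transport, and $\norm{\mathcal J_I^{-1}}=1/2$ refers to the wrong base point.

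The paper works exactly in this frame:
\begin{itemize}
\item It proves $\norm{P^{-1/2}\mathcal J_P^{-1}(Y)P^{1/2}}_{\mathrm F}\le\norm{P^{-1/2}YP^{-1/2}}_{\mathrm F}$.
\item It converts the operator-norm hypotheses to Frobenius norms at cost $\sqrt d$; this is where the $\sqrt d$ in the statement comes from.
\item It bounds $t_j=\norm{P^{-1/2}R_jP^{1/2}}_{\mathrm F}$ recursively from $RBR=A$.
\item It finishes with $|\Tr(B_aR_b)|=|\Tr(P\,(P^{-1/2}R_bP^{1/2})(P^{-1/2}B_aP^{-1/2}))|\le\Tr(P)\,t_b\,\norm{P^{-1/2}B_aP^{-1/2}}_{\mathrm F}$.
\end{itemize}
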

\begin{proof}
The zero-dimensional case is immediate; assume $d\ge1$.
Let $R(t)B(t)R(t)=A(t)$ be the positive transport, with $R(0)=I$.
Define $\mathscr L(U)=P^{-1/2}UP^{1/2}$. The Sylvester inverse
obeys the weighted estimate
\begin{equation}\label{hr:eq:weightedSylvesterregularity}
 \norm{\mathscr L(\mathcal J_P^{-1}X)}_{\mathrm F}
 \le\norm{P^{-1/2}XP^{-1/2}}_{\mathrm F}.
\end{equation}
Indeed, if $U=\mathcal J_P^{-1}X$ and $T=\mathscr L(U)$, then
$P^{-1/2}XP^{-1/2}=T+T^*$ and
\[
 \norm{T+T^*}_{\mathrm F}^2
 =2\norm T_{\mathrm F}^2+2\Tr(U^2)
 \ge2\norm T_{\mathrm F}^2.
\]
The equality uses invariance of $\Tr(T^2)$ under similarity and the
fact that $U$ is Hermitian. This proves the needed estimate directly.

Let $R_j=R^{(j)}(0)/j!$, $t_j=\norm{\mathscr L(R_j)}_{\mathrm F}$,
and $d_0=\sqrt d$. The relative coefficients of $A,B$ have Frobenius
norm at most $d_0L^j$. Comparing the first three coefficients in
$RBR=A$, using \eqref{hr:eq:weightedSylvesterregularity}, gives
\begin{align*}
 t_1&\le2d_0L,\\
 t_2&\le2d_0L^2+2d_0Lt_1+t_1^2,\\
 t_3&\le2d_0L^3+2d_0L^2t_1+2d_0Lt_2+d_0Lt_1^2+2t_1t_2.
\end{align*}
For example, a product of coefficient matrices is normalized by
\[
 P^{-1/2}R_aB_bR_cP^{-1/2}
 =\mathscr L(R_a)(P^{-1/2}B_bP^{-1/2})\mathscr L(R_c)^*.
\]
When $b=0$ the middle factor is identity. These bounds give
$t_2\le10d_0^2L^2$ and $t_3\le70d_0^3L^3$.

Finally, $2\Fid(A(t),B(t))=2\Tr(B(t)R(t))$. Its first three
normalized coefficients are twice the traces of
\[
 B_1+PR_1,\qquad
 B_2+B_1R_1+PR_2,\qquad
 B_3+B_2R_1+B_1R_2+PR_3.
\]
Writing $H_a=P^{-1/2}B_aP^{-1/2}$ gives
$\Tr(B_aR_b)=\Tr(P\mathscr L(R_b)H_a)$, whose absolute value is
at most $\Tr(P)\norm{\mathscr L(R_b)}_{\mathrm F}\norm{H_a}_{\mathrm F}$.
The terms with a zero index are bounded directly by
$\Tr(P)\norm{H_a}_{\mathrm F}$ or $\Tr(P)t_b$. The displayed
coefficient bounds now give \eqref{hr:eq:balancedfidelityderivative}.
\end{proof}

At the actual base transport $Z$, take
$A(t)=Z^{-1/2}S_0(t)Z^{-1/2}$ and
$B(t)=Z^{1/2}\Omega(t)Z^{1/2}$. Opposite congruences preserve
fidelity, as follows by substitution in its transport infimum.
Their common base is $P$ and $\Tr P\le C_0$. Density variations
and \eqref{hr:eq:relativesourceTaylor} supply the hypotheses of
Lemma~\ref{hr:lem:balancedregularity}. No eigenvalue of $P$ enters
its bound.

\subsection{Differentiating the optimizing density}

Equip the local parameter space with its Euclidean norm, the
density tangent with the operator norm, and their product with the
maximum norm. For the joint objective $f(u,S)$ defining a curvature
query, or $f(t,S)$ defining a preparation line, the previous estimates give
\begin{equation}\label{hr:eq:jointregularitybound}
 \norm{D^j f}\le B_j\quad(j=2,3)
\end{equation}
at the queried optimizers, with \eqref{hr:eq:derivative-recipe}.
Indeed, the repeated-direction fidelity derivatives are bounded by
$2j!C_0H_0^j$. For the root term, use
$\Tr\sqrt S=\Fid(S,I)$ and apply the same lemma with relative
density scale $1/s_0$. Its bound is
$2\theta j!\sqrt m(100\sqrt m/s_0)^j$, which is at most
$\theta j!\sqrt m H_0^j$ because $L_0\ge2/s_0$.
The center is bilinear
in the local parameter and the density, with force norm at most two.
Its mixed second derivative is bounded by $4m$, which is covered
by $B_2$ since $H_0\ge100\sqrt m$; higher derivatives vanish.
Real polarization costs at most $j^j/j!$ in passing from repeated
directions to the multilinear operator norm.

Differentiating density stationarity in a direction $h$ gives
$f_{SS}[S'[h],Y]+f_{Su}[Y,h]=0$ for every trace-zero $Y$.
With $Y=S'[h]$, strong concavity and
$\norm{S'[h]}\le\norm{S'[h]}_{\mathrm F}$ imply
\[
 \mu\norm{S'[h]}^2
 \le |f_{Su}[S'[h],h]|
 \le B_2\norm{S'[h]}\norm h.
\]
Thus $\norm{S'[h]}\le(B_2/\mu)\norm h$.
Put $V_h=(h,S'[h])$, so $\norm{V_h}\le A_1\norm h$.
Differentiated stationarity also says
$f''[V_h,(0,Y)]=0$ for every density tangent $Y$.
This identity cancels the second density-response terms in the third
derivative of the optimized value:
\[
 D^3\varphi[h_1,h_2,h_3]
       =f'''[V_{h_1},V_{h_2},V_{h_3}].
\]
Hence $\norm{D^3\varphi}\le B_3A_1^3=M$.
The second derivative on a preparation line is at most
$B_2A_1^2\le M$. These bounds hold throughout the prescribed
neighborhood, because its optimizing densities have the uniform
floor \eqref{hr:eq:densityfloor}. This proves
Lemma~\ref{hr:lem:queryregularity} and the derivative assertions
used by the finite walk.

\end{document}